\PassOptionsToPackage{dvipsnames}{xcolor}
\documentclass[journal]{vgtc}                
\ifpdf
  \pdfoutput=1\relax                   
  \pdfcompresslevel=9                  
  \pdfoptionpdfminorversion=7          
  \ExecuteOptions{pdftex}
  \usepackage{graphicx}                
  \DeclareGraphicsExtensions{.png,.pdf,.jpg,.jpeg} 
\else
  \ExecuteOptions{dvips}
  \usepackage{graphicx}                
  \DeclareGraphicsExtensions{.eps}     
\fi%

\graphicspath{{figures/}{pictures/}{images/}{./}} 

\usepackage{microtype}                 
\PassOptionsToPackage{warn}{textcomp}  
\usepackage{textcomp}                  
\usepackage{mathptmx}                  
\usepackage{times}                     
\usepackage{cite}                      
\usepackage{tabu}                      
\usepackage{booktabs}                  

\usepackage{hyperref}
\usepackage{amsfonts}
\usepackage{subfig}
\usepackage[normalem]{ulem}
\usepackage{enumitem}
\usepackage{setspace}
\usepackage{graphicx}
\usepackage{amsthm}
\usepackage{multicol}
\usepackage{multirow}
\usepackage[table]{xcolor}
\usepackage{stfloats}

\newtheorem{theorem}{Theorem}

\newcommand {\mm}[1] {\ifmmode{#1}\else{\mbox{\(#1\)}}\fi}

\newcommand{\Rspace}        {{\mathbb R}}
\newcommand{\Xspace}        {{\mathbb X}}

\newcommand{\Hgroup}        {{\mathsf H}}

\newcommand{\edgeComplex}{{\mathsf E \mathsf d \mathsf g\mathsf e}}

\newcommand{\para}[1]{\textbf{#1.}}

\makeatletter
\let\orgdescriptionlabel\descriptionlabel
\renewcommand*{\descriptionlabel}[1]{%
  \let\orglabel\label
  \let\label\@gobble
  \phantomsection
  \edef\@currentlabel{#1\unskip}%
  \let\label\orglabel
  \orgdescriptionlabel{#1}%
}
\makeatother

\usepackage{tikz}
\usepackage{onimage}
\usepackage{transparent}
\usepackage{tcolorbox}

\tikzset{
    image label/.style={
        every node/.style={
            fill=white,
            opacity=0.7,
            text=black,
            font=\tiny,
            anchor=south east,
            xshift=-0.0cm,
            yshift=-0.15cm,
            at={(1,0)}
        }
    }
}

\newcommand{\demoURL}{\textit{{\footnotesize\url{https://usfdatavisualization.github.io/UntangleFDL/}}}}

\newcommand{\sourceURL}{\textit{{\footnotesize\url{https://github.com/USFDataVisualization/UntangleFDL/}}}}

\definecolor{ourMethodColor}{HTML}{F2F2F2}
\definecolor{hlineColor}{HTML}{808080}
\definecolor{LCMCColorLight}{HTML}{F8F9FD}
\definecolor{LCMCColorDark}{HTML}{EEF2FA}

\newcolumntype{L}[1]{>{\raggedright\let\newline\\\arraybackslash\hspace{0pt}}m{#1}}
\newcolumntype{C}[1]{>{\centering\let\newline\\\arraybackslash\hspace{0pt}}m{#1}}
\newcolumntype{R}[1]{>{\raggedleft\let\newline\\\arraybackslash\hspace{0pt}}m{#1}}




\onlineid{1303}

\vgtccategory{Research}
\vgtcpapertype{algorithm/technique}

\title{Untangling Force-Directed Layouts Using Persistent Homology}


\author{Bhavana Doppalapudi, Bei Wang, and Paul Rosen}
\authorfooter{
\item
 B. Doppalapudi was with the University of South Florida, Tampa, FL, 33610. E-mail: bdoppalapudi@usf.edu
\item
 B. Wang was with the University of Utah, Salt Lake City, UT, 84112.
E-mail: beiwang@sci.utah.edu
\item
P. Rosen was with the University of Utah, Salt Lake City, UT, 84112.
E-mail: prosen@sci.utah.edu
}

\shortauthortitle{Doppalapudi \MakeLowercase{\textit{et al.}}: Untangling Force-Directed Layouts}

\abstract{Force-directed layouts belong to a popular class of methods used to position nodes in a node-link diagram. However, they typically lack direct consideration of global structures, which can result in visual clutter and the overlap of unrelated structures. In this paper, we use the principles of persistent homology to untangle force-directed layouts thus mitigating these issues. First, we devise a new method to use 0-dimensional persistent homology to efficiently generate an initial graph layout. The approach results in faster convergence and better quality graph layouts. Second, we provide a new definition and an efficient algorithm for 1-dimensional persistent homology features (i.e., tunnels/cycles) on graphs. We provide users the ability to interact with the 1-dimensional features by highlighting them and adding cycle-emphasizing forces to the layout. Finally, we evaluate our approach with 32 synthetic and real-world graphs by computing various metrics, e.g., co-ranking, edge crossing, etc., to demonstrate the efficacy of our proposed method.%
}

\keywords{Force-directed layout, persistent homology, graph clustering, graph cycles}

\setlength{\fboxsep}{0pt}
\setlength{\fboxrule}{1pt}

\teaser{
  \centering
  
  {\begin{minipage}[t]{0.04\linewidth}
  \hspace{3pt}
  \rotatebox{90}{\footnotesize \hspace{7pt} Random Layout}
  \hspace{2pt}
  
  \vspace{10pt}
  \rotatebox{90}{\footnotesize \hspace{11pt} Radial Layout}
  \rotatebox{90}{\footnotesize \hspace{10pt} (our approach)}
  
  \vspace{10pt}
  \rotatebox{90}{\footnotesize \hspace{9pt} Layered Layout}
  \rotatebox{90}{\footnotesize \hspace{10pt} (our approach)}
  \end{minipage}}
  \hspace{-6pt}
  \fcolorbox{red}{white}{%
  \begin{minipage}[t]{0.157\linewidth}
      \begin{tikzonimage}[trim=15pt 0 15pt 0, clip, width=\linewidth]{}[image label]
        \node{$Q_{LCMC}$: 0.006};
    \end{tikzonimage}
    
    \vspace{1pt}
    \begin{tikzonimage}[trim=15pt 0 15pt 0, clip, width=\linewidth]{}[image label]
        \node{$Q_{LCMC}$: 0.108};
    \end{tikzonimage}
    \begin{tikzonimage}[trim=15pt 0 15pt 0, clip, width=\linewidth]{}[image label]
        \node{$Q_{LCMC}$: 0.152};
    \end{tikzonimage}
  \end{minipage}}
  \hspace{-3pt}
  \begin{minipage}[t]{0.157\linewidth}
    \hspace{-5pt}
    \begin{tikzonimage}[trim=15pt 0 15pt 0, clip, width=\linewidth]{}[image label]
        \node{$Q_{LCMC}$: 0.005};
    \end{tikzonimage}
    \fcolorbox{blue}{white}{ 
    \hspace{-4pt}
    \begin{minipage}[t]{\linewidth}
    \begin{tikzonimage}[trim=15pt 0 15pt 0, clip, width=\linewidth]{}[image label]
        \node{$Q_{LCMC}$: 0.173};
    \end{tikzonimage}
    \begin{tikzonimage}[trim=15pt 0 15pt 0, clip, width=\linewidth]{}[image label]
        \node{$Q_{LCMC}$: 0.172};
    \end{tikzonimage}
      \end{minipage}}
  \end{minipage}
  \hspace{-5pt}
  \begin{minipage}[t]{0.157\linewidth}
  \begin{tikzonimage}[trim=15pt 0 15pt 0, clip, width=\linewidth]{}[image label]
        \node{$Q_{LCMC}$: 0.006};
    \end{tikzonimage}
    \begin{tikzonimage}[trim=15pt 0 15pt 0, clip, width=\linewidth]{}[image label]
        \node{$Q_{LCMC}$: 0.228};
    \end{tikzonimage}
    \begin{tikzonimage}[trim=15pt 0 15pt 0, clip, width=\linewidth]{}[image label]
        \node{$Q_{LCMC}$: 0.233};
    \end{tikzonimage}
  \end{minipage}
  \hspace{-4pt}
  \begin{minipage}[t]{0.157\linewidth}
  \fcolorbox{blue}{white}{%
  \begin{minipage}[t]{\linewidth}
    \begin{tikzonimage}[trim=15pt 0 15pt 0, clip, width=\linewidth]{}[image label]
        \node{$Q_{LCMC}$: 0.139};
    \end{tikzonimage}
    \end{minipage}}
    \begin{tikzonimage}[trim=15pt 0 15pt 0, clip, width=\linewidth]{}[image label]
        \node{$Q_{LCMC}$: 0.274};
    \end{tikzonimage}
    \begin{tikzonimage}[trim=15pt 0 15pt 0, clip, width=\linewidth]{}[image label]
        \node{$Q_{LCMC}$: 0.277};
    \end{tikzonimage}
  \end{minipage}
  \hspace{-1pt}
  \fcolorbox{OliveGreen}{white}{
  \begin{minipage}[t]{0.157\linewidth}
    \begin{tikzonimage}[trim=15pt 0 15pt 0, clip, width=\linewidth]{}[image label]
        \node{$Q_{LCMC}$: 0.277};
    \end{tikzonimage}  
    
    \vspace{1pt}
    \begin{tikzonimage}[trim=15pt 0 15pt 0, clip, width=\linewidth]{}[image label]
        \node{$Q_{LCMC}$: 0.303};
    \end{tikzonimage}
    \begin{tikzonimage}[trim=15pt 0 15pt 0, clip, width=\linewidth]{}[image label]
        \node{$Q_{LCMC}$: 0.300};
    \end{tikzonimage}      
  \end{minipage} }
  \hfill
  \rule[-125pt]{.5pt}{180pt}
  \begin{minipage}[t]{0.12\linewidth}
    \centering
    \tiny
    \vspace{-50pt}
    \textit{neato}
    
    \begin{tikzonimage}[trim=15pt 0 15pt 0, clip, width=\linewidth]{}[image label]
        \node{$Q_{LCMC}$: 0.188};
    \end{tikzonimage}      
    
    \textit{fdp}

    \begin{tikzonimage}[trim=15pt 0 15pt 0, clip, width=\linewidth]{}[image label]
        \node{$Q_{LCMC}$: 0.182};
    \end{tikzonimage}      
    
    \textit{sfdp}

    \begin{tikzonimage}[trim=15pt 0 15pt 0, clip, width=\linewidth]{}[image label]
        \node{$Q_{LCMC}$: 0.304};
    \end{tikzonimage}      
    
  \end{minipage}

  \vspace{-10pt}
    \hspace{0.04\linewidth}
  \parbox{0.157\linewidth}{\centering \footnotesize Initial~Layout (0~Iterations) }
  \parbox{0.157\linewidth}{\centering \footnotesize 5 Iterations}
  \hspace{1pt}
  \parbox{0.157\linewidth}{\centering \footnotesize 10 Iterations}
  \parbox{0.157\linewidth}{\centering \footnotesize 25 Iterations}
  \hspace{0.5pt}
  \parbox{0.157\linewidth}{\centering \footnotesize Final~Layout (300~Iterations)}
  \hfill
  \parbox{0.12\linewidth}{\centering \footnotesize Static Layout Methods}  

\caption{Our approach in using persistent homology to untangle force-directed layouts has two main functionalities. 
First, as demonstrated via the \textsc{hic\_1k\_net} dataset, we use persistent homology to formulate an initial graph layout (\textcolor{red}{column 1}) that improves both the convergence rate of the graph layout and the final layout quality.  
Second, our approach comes with interactive capabilities, as illustrated in \autoref{fig:teaser2}. 
We use the local continuity meta criterion ($Q_{LCMC}$) for layout evaluation (larger is better). 
In terms of convergence, our approach (\textcolor{blue}{column 2, rows 2 and 3}) shows the formation of major graph structures as early as $5$ iterations, whereas  the standard approach (\textcolor{blue}{column 4, row 1}) takes $25+$ iterations. 
In terms of final layout quality, the $Q_{LCMC}$ scores for the final layout (\textcolor{OliveGreen}{column 5}) show that our approach significantly exceeds the standard one. Three static layout methods, \emph{neato}, \emph{fdp}, and \emph{sfdp}, are also included for comparison (column 6), with only \emph{sfdp} (column 6, row 3) showing a comparable $Q_{LCMC}$ score.}
  \label{fig:teaser}
 }

\vgtcinsertpkg


\begin{document}

\firstsection{Introduction}

\maketitle

Force-directed layouts remain one of the most popular methods for drawing graphs. Their popularity stems from several desirable qualities: generally, they are simple to implement, they are fast for small graphs, they produce aesthetically pleasing layouts, and their iterative algorithms make progressive visualization and interaction natural. 
Nevertheless, force-directed layouts also suffer from numerous limitations, including poor initialization and over-constraint, leading to poor local minima and limited robustness to noise. 

This paper addresses two of these limitations by utilizing and highlighting important topological features of the graph.
First, force-directed layouts are strongly influenced by the initial layout of graph nodes, which is often generated randomly. 
After the initialization, successive application of the forces among nodes causes the layout to settle in a locally minimal energy state, which \textit{hopefully} shows the graph's topological structure. 
Unfortunately, the random initial layout does not consider the global topology, which potentially slows convergence and can lead to unrelated structures overlapping in the visualization. 
Second, since force-directed layouts are over-constrained systems, even without the overlap of unrelated structures, certain topological features have their shape distorted, tangled, or hidden by noise (i.e., low weight edges) making it difficult to visually identify topological features.

\begin{figure}[!t]
    \centering
    
    \begin{minipage}[b]{0.015\linewidth}
    \end{minipage}
    
    \subfloat[\textsc{bn-mouse-visual-cortex-2}]{\includegraphics[height=90pt]{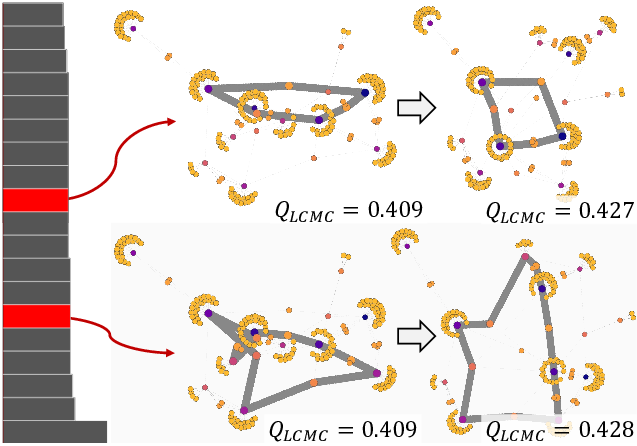}}
    \hfill
    \subfloat[\textsc{bcsstk}]{\includegraphics[height=90pt]{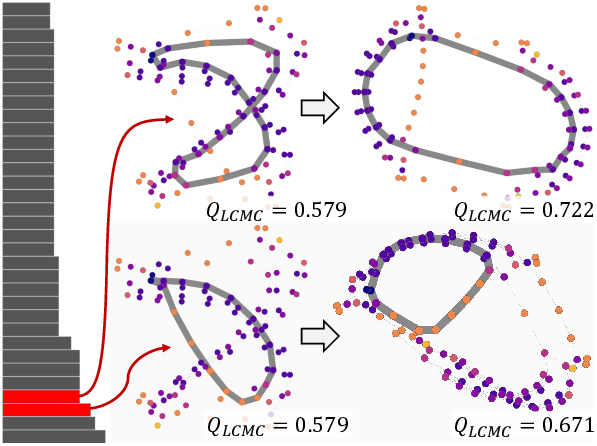}}

    \caption{Our second untangling functionality is to \textit{interactively} untangle persistent homology cycle features. Each cycle is represented by a bar in the barcode (left), which is used to highlight cycles by mouse over (middle) and apply an elliptical force by click (right). In this example, the elliptical forces are used to regularize deformed cycles (a, top and bottom), untangle twisted cycles (b, top), and disentangle a cycle covered by unrelated structures (b, bottom). The $Q_{LCMC}$ scores show that the elliptical forces also improve the overall graph layout.}
    \label{fig:teaser2}
\end{figure}

We address these limitations by using techniques from persistent homology to provide better initialization and support interactive exploration of the graph topology. 
We are not the first to consider the use of persistent homology in force-directed layouts. In their work, Suh et al.~\cite{suh2019persistent} used 0-dimensional persistent homology features (i.e., components) to enable new forms of interaction with a force-directed layout. Importantly, their work did not address the problem of initial graph layouts. Furthermore, their work did not consider 1-dimension persistent homology features, i.e., tunnels/cycles, which are an important topological features frequently present in graphs. Our work advances theirs by: (1)~utilizing persistent homology as a framework for efficiently generating good quality initial graph layouts (see \autoref{fig:teaser}); (2)~extending the algorithm for extracting topological features from graphs to efficiently extract 1-dimensional features; and (3)~providing new interactions with the 1-dimensional persistent homology features of the graph (see \autoref{fig:teaser2}).

A natural question at this point would be, why persistent homology in this context? 
First, it has a strong mathematical foundation, making the technique technically sound and  robust to noise~\cite{zomorodian2005computing}. Second, persistent homology computation, being slow for general point cloud data~\cite{otter2017roadmap}, can be made very efficient within the context we present. Finally, instead of considering only the node-link topology, persistent homology allows for a natural way of evaluating the interaction of the node-link topology with a function (i.e., a weight) defined per link. These advantages make it a valuable companion to existing graph analysis and drawing tools.

\section{Prior Work}

Graph visualization as a research area has received significant attention. 
Tamassia~\cite{tamassia2013handbook} provided a broad overview of the state of the art.
Graphs are often represented as node-link diagrams, adjacency matrices, or hybrid representations. 
Ghoniem et al.~\cite{ghoniem2004comparison}  showed that matrix-based representations are generally better than node-link diagrams for node counts greater than 20, but node-link diagrams outperform for specific tasks (e.g., pathfinding) and are aesthetically pleasing. 
Archambault et al.~\cite{archambault2010readability} assessed how graph representations affected readability and showed that only clustering could be efficiently performed on larger graphs.
A separate study by Saket et al.~\cite{saket2014node} 
concluded that node-link diagrams are superior for topology and network-related analysis compared to adjacency matrices.

\subsection{Node-Link Diagrams}
Node-link diagrams represent the data with nodes for entities and links for their pairwise relations. 
There are multiple general-purpose layout methods to position nodes with frequently used ones including force-directed, constraint-based, layered, algebraic, and multiscale layouts.

\para{Force-Directed Layouts} Force-directed (or force-based) layouts consider graphs as mechanical systems and apply forces to the nodes. In general, repulsive forces, similar to those of electrically charged particles, exist between all the vertices, and attractive forces, similar to spring-like forces, exist between connected vertices or between neighbors. 
The Eades model~\cite{eades1984heuristic} was the first to apply spring forces on the initial layout to achieve a minimal energy position. Later, Fruchterman and Reingold~\cite{fruchterman1991graph} modified the Eades model to achieve a system that distributes the vertices evenly, and has uniform edge lengths and symmetry. 
Kamada and Kawai~\cite{kamada1989algorithm} developed another variant on Eades' work. Instead of just applying attractive forces between neighboring vertices, they applied the concept of ideal distance, which is proportional to the length of the shortest path. 
Although computational costs are high for this method, speed-ups have been achieved using  heuristics~\cite{frick1994fast} and the  GPU~\cite{godiyal2008rapid}. 
Meidiana et al.~\cite{meidiana2020sublinear} presented a sublinear time model that pairs a radial tree drawing of a breadth first spanning tree with random sampling of repulsive forces. While their approach has technical similarities with our initial layout approach, there are important implementation differences and theoretical guarantees that come from our use of 0-dimensional persistent homology.

\para{Constraint-Based Layouts} Constraint-based layouts are a more sophisticated version of force-directed layouts. 
Dwyer et al.~\cite{dwyer2008exploration} archived a high-quality, topology-preserving visualization  by implementing a constraint-based layout for a detailed view and a force-directed layout for the overview. 
Archambault et al.~\cite{archambault2007topolayout} proposed the \emph{TopoLayout} algorithm that dynamically adapts the graph layout method based upon the topology detected within subgraphs.

\para{Layered Layouts} Layered layouts, in general, are used for directed graph layouts. Sugiyama et al.~\cite{sugiyama1981methods} used a 4-phase approach to layout graphs: (1)~removing cycles, (2)~assigning nodes to layers, (3)~reducing edge crossings, and (4)~assigning coordinates  to nodes. 
Bachmaier et al.~\cite{bachmaier2008cyclic} proposed to visualize directed cyclic graphs by skipping the cycle removal step.

\para{Algebraic Methods} Koren et al.~\cite{koren2002ace} developed the \textit{Algebraic Multigrid Method} (ACE) algorithm that minimizes quadratic energy.
Harel and Koren introduced \textit{High-Dimensional Embedding} (HDE)~\cite{harel2002graph} that projects a high-dimensional representation of a graph with PCA.

\para{Multiscale Layouts} Multiscale layouts start with a coarse layout and refine it in phases. 
Hachul and J{\"u}nger~\cite{hachul2004drawing} proposed \textit{Fast Multipole Multilevel Method} (FM$^3$), a force-directed method that incorporates a multiscale approach in a system to calculate repulsive forces in rapidly evolving potential fields. 
By comparing various algorithms, Hachul and J{\"u}nger~\cite{hachul2007large} showed that multiscale methods, including FM$^3$, ACE, and HDE, were significantly faster than regular force-directed layouts. They also found that FM$^3$ produced the best quality graphs in the group.

\para{Node Congestion} Node co-location is a challenging problem, particularly in multiscale layouts~\cite{von2011visual}.
Space-filling curves have been used to avoid node co-location~\cite{muelder2008rapid}. However, the approach works only for datasets with clear clustering, and for dense graphs, the visualizations are not of good quality or aesthetically pleasing.
Gansner and North~\cite{gansner1998improved} proposed to improve the force-directed layout by moving the overlapping nodes within cells of a constructed Voronoi diagram. 
By selecting good starting positions for nodes, Gajer et al.~\cite{gajer2000multi} developed a multiscale approach that improved computation time and better preserved the graph's structure. 
Adai et al.~\cite{adai2004lgl} introduced the \emph{Large Graph Layout} (LGL) algorithm, which uses a minimum spanning tree to guide the force-directed iterative layout to visualize large protein map networks. 
However, they did not consider datasets with different characteristics. 
Dunne and Shneiderman~\cite{dunne2013motif} proposed to use motifs for node and edge bundling. Their technique replaced common graph patterns of nodes and links with intuitive glyphs. They showed that the approach required less screen space and effort, while preserving the underlying relations. However, the glyphs required additional learning from users, and charts with many large glyphs added clutter to the display and increased the possibility of overlap.

\para{Edge Congestion} Node-link diagrams frequently suffer from edge crossings. Carpendale et al.~\cite{carpendale2001examining} proposed displacing edges running through the area of interest. However, certain questions were left unanswered (e.g., the amount of edge displacement to use). 
For graphs without hierarchy, Holten and Van Wijk proposed a self-organizing bundling method, where edges act as flexible springs attracting each other~\cite{holten2009force}. 
\emph{ASK-Graph}~\cite{abello2006ask} addresses the issues for highly dense graphs with node counts approaching 200k.  
Bach et al.~\cite{bach2016towards} proposed to use confluent drawings (CDs) for edge bundling based on network connectivity, which showed some promising results. 
Nevertheless, CDs worked only for sparse graphs where node counts were less than 20 and the edge density was less than 50. 
Such an approach also showed low participant confidence, indicating that CDs require significant learning and may be misleading. 
Zinsmaier et al.~\cite{6327254} proposed a level-of-detail technique that performs density-based nodes aggregation and edge accumulation.

\para{Interaction} Research into interactive visualization has been done to assist with efficient data explorations. Commonly used interaction techniques for graphs include panning and zooming~\cite{viegas2007manyeyes} and fisheye views~\cite{furnas1986generalized, wang2018structure, sarkar1994graphical} that focus on areas of interest.

\subsection{Persistent Homology} 
Persistent homology studies the topological features of data that persist across multiple scales. 
Weinberger gave a brief explanation in his work titled ``What is ... persistent homology?''~\cite{weinberger2011persistent}, while Harer and Edelsbrunner~\cite{edelsbrunner2008persistent} detailed the concept and history of persistent homology. 
Persistent homology has shown great promise to assist in the analysis of complex graphs due to the types of features it extracts and its ability to differentiate signal from noise~\cite{weinan2012landscape, petri2013networks, petri2013topological, horak2009persistent, donato2013decimation}. 
For example, Rieck et al.~\cite{rieck2017clique} used persistent homology to track the evolution of clique communities across different edge weight thresholds. 
Persistent homology has also led to notable results in the study of brain networks~\cite{cassidy2015brain, lee2012persistent, lee2011computing} and social  networks~\cite{bampasidou2014modeling, hajij2018visual, horak2009persistent}. 
Although persistent homology has been used mainly for analysis tasks within these prior methods, it has not been used to improve the visualization of graphs.

Recently, Suh et al.~\cite{suh2019persistent} used 0-dimensional persistent homology features to create a persistence barcode visualization, which was then used to manipulate a force-directed graph layout. Their framework was limited to extracting 0-dimensional topological features and utilizing those features for interactive manipulation of the graph. 
Our work extends and complements the work of Suh et al.\ by utilizing the 0-dimensional features to preprocess the initial layout of graphs, thus improving the rate of convergence and quality of layouts, and by providing an algorithm to efficiently extract, interactively highlight, and manipulate the graph layout with 1-dimensional topological features of a graph.

\section{Methods}
\label{sec:methods}

\begin{figure}[!b]
    \centering
    
    \vspace{-15pt}
    \begin{minipage}[b]{0.775\linewidth}
        \includegraphics[width=\linewidth]{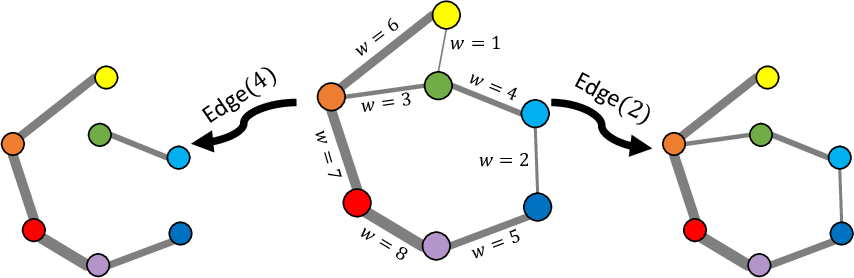}
    \end{minipage}    
    \begin{minipage}[b]{0pt}
        \hspace{-195pt}\subfloat[\label{fig:edge_complex:E4}]{}
        \vspace{45pt}
    \end{minipage}    
    \begin{minipage}[b]{0pt}
        \hspace{-125pt}\subfloat[\label{fig:edge_complex:G}]{}
        \vspace{55pt}
    \end{minipage}    
    \begin{minipage}[b]{0pt}
        \hspace{-15pt}\subfloat[\label{fig:edge_complex:E2}]{}
        \vspace{45pt}
    \end{minipage}    
    \caption{Two $\edgeComplex$ complexes computed on a  weighted graph (b), capturing different homology structures. (a)~$\edgeComplex(4)$ shows two $\Hgroup_0$ components and no $\Hgroup_1$ cycle. (c)~Meanwhile, $\edgeComplex(2)$ shows one $\Hgroup_0$ component and one $\Hgroup_1$ cycle.}
    \label{fig:edge_complex}
    \vspace{-5pt}
\end{figure}

We first describe how the persistent homology information is extracted from an input graph (\autoref{sec:ph}). 
We then describe the use of this information for building fast initial graph layouts (\autoref{sec:layout}) and for highlighting important structures in the visualization (\autoref{sec:h1}).

The input is an undirected graph $G=(V,E)$ equipped with an edge weight $w: E \to \Rspace$. 
$w$ can be any real function that quantifies the edge importance. 
Recall the Jaccard index between a pair of sets $A$ and $B$ is defined to be $J(A,B) = {{|A \cap B|}\over{|A \cup B|}}$.   
In this paper, if $w$ is not known \emph{a priori}, $w(e)$ for an edge $e$ is assigned the Jaccard index between the 1-neighborhood of its nodes, as was also done in~\cite{suh2019persistent}. 

\subsection{Persistent Homology of a Graph}
\label{sec:ph}

We describe a novel approach to extract persistent homology features from a graph, leaving the discussions of the general theory to prior works (e.g.,~\cite{edelsbrunner2008persistent}). 
Previous approaches (e.g.,~\cite{hajij2018visual}) have relied upon mapping a graph to a metric space, computing a Vietoris-Rips complex, and extracting its 0-dimensional 
and 1-dimensional 
persistent homology as topological features. 
However, these approaches may be costly, $O
\left((|V|+|E|)^3\right)$ in the worse case, making them impractical on larger graphs. 
Furthermore, persistent homology identifies a class of cycles and while identifying the existence of such a class is well defined, determining which nodes specifically contribute to the cycle in the context of graphs is ambiguous~\cite{dey2018efficient}. In the following section, we provide an alternate strategy that resolves both of these issues.

Homology deals with the topological features of a space. In particular, given a space $\Xspace$, we are interested in extracting the \textit{0-dimensional}, $\Hgroup_0(\Xspace)$, and \textit{1-dimensional},
$\Hgroup_1(\Xspace)$, homology groups of $\Xspace$, which are the connected components and tunnels/cycles of the space, respectively. 

\begin{figure}[!t]
    \centering
  
    \begin{minipage}[b]{0.925\linewidth}
        \includegraphics[width=\linewidth]{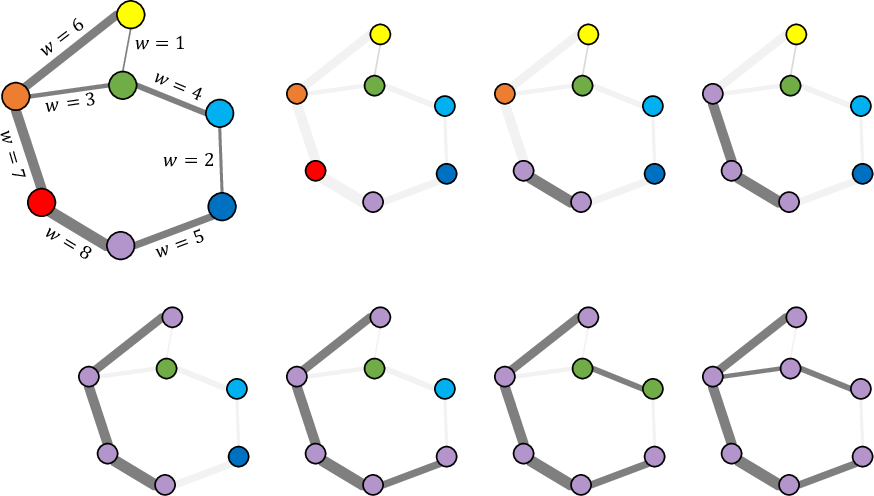}
    \end{minipage}    
    \begin{minipage}[b]{0pt}
        \hspace{-235pt}\subfloat[Input Graph]{\hspace{60pt}}
        \vspace{55pt}
    \end{minipage}    
    \begin{minipage}[b]{0pt}
        \hspace{-165pt}\subfloat[$\edgeComplex(\infty)$ $\Hgroup_0$/$\Hgroup_1$: 7/0]{\hspace{45pt}}
        \hspace{14pt}\subfloat[$\edgeComplex(8)$ $\Hgroup_0$/$\Hgroup_1$: 6/0]{\hspace{45pt}}
        \hspace{14pt}\subfloat[$\edgeComplex(7)$ $\Hgroup_0$/$\Hgroup_1$: 5/0]{\hspace{45pt}}
        \vspace{55pt}
    \end{minipage}
    \begin{minipage}[b]{0pt}
        \hspace{-225pt}\subfloat[$\edgeComplex(6)$ $\Hgroup_0$/$\Hgroup_1$: 4/0]{\hspace{45pt}}
        \hspace{14pt}\subfloat[$\edgeComplex(5)$ $\Hgroup_0$/$\Hgroup_1$: 3/0]{\hspace{45pt}}
        \hspace{14pt}\subfloat[$\edgeComplex(4)$ $\Hgroup_0$/$\Hgroup_1$: 2/0\label{fig:filtration:f}]{\hspace{45pt}}
        \hspace{14pt}\subfloat[$\edgeComplex(3)$ $\Hgroup_0$/$\Hgroup_1$: 1/0\label{fig:filtration:g}]{\hspace{45pt}}
        \vspace{-20pt}
    \end{minipage}    

    \vspace{25pt}
    \begin{minipage}[b]{0.375\linewidth}
        \subfloat[Barcode Visualization for the $\edgeComplex$ filtration\label{fig:filtration:barcode}]{\includegraphics[width=\linewidth]{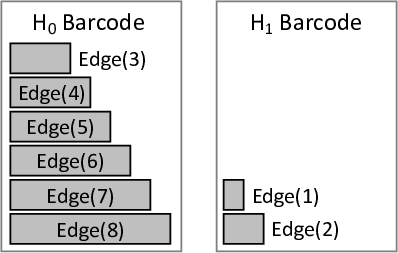}}
    \end{minipage}
    \hspace{5pt}
    \begin{minipage}[b]{0.505\linewidth}
        \centering
        \includegraphics[width=\linewidth]{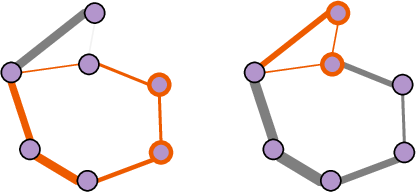}

        \vspace{-12pt}
        \subfloat[$\edgeComplex(2)$ $\Hgroup_0$/$\Hgroup_1$: 1/1\label{fig:cycles:a}]{\hspace{45pt}}
        \hspace{28pt}
        \subfloat[$\edgeComplex(1)$ $\Hgroup_0$/$\Hgroup_1$: 1/2\label{fig:cycles:b}]{\hspace{45pt}}
    \end{minipage}    
    
    \vspace{-4pt}
    \caption{(a)~Illustrating an $\edgeComplex$ filtration. (b)~The filtration begins with seven components (denoted by color), one per node, and no cycles. (c-h) As the filtration continues, the $\Hgroup_0$ components and $\Hgroup_1$ cycles are extracted at each step. (j)~When cycles form, they are extracted, shown in red. (k)~By the end, one component and two cycles remain. (i)~The resulting features are represented in a barcode visualization.}
    \label{fig:filtration}
    \vspace{-2pt}
\end{figure}

To identify the homology of a graph, we begin by describing the $\edgeComplex$ complex of a graph. 
Given a threshold $t$, for each edge $e_i$ in $G$  with a weight $w_i$, the $\edgeComplex$ complex is $\edgeComplex(t)=\{e_i \mid w_i\geq t\}$. In other words, the $\edgeComplex$ complex is the set of all edges whose weight is greater than or equal to the given threshold. 
For example, \autoref{fig:edge_complex:G} shows $\edgeComplex(4)$ and $\edgeComplex(2)$ of the graph, in \autoref{fig:edge_complex:E4} and \ref{fig:edge_complex:E2}, respectively. 

From an $\edgeComplex$ complex, its connected components ($\Hgroup_0$) and cycles ($\Hgroup_1$) can be efficiently extracted by a process that will be discussed in the forthcoming sections. 
However, extracting the homology of the graph from a single $\edgeComplex$ complex may fail to capture homology visible at different thresholds (e.g., see \autoref{fig:edge_complex}) and, therefore, requires careful selection of the threshold $t$.

Instead of selecting a single threshold $t$, we extract  $\Hgroup_0$ and $\Hgroup_1$ features of the graph across all thresholds using a multiscale notion of homology, called \emph{persistent homology}. 
Persistent homology is calculated by extracting a sequence of $\edgeComplex$ complexes, referred to as a \textit{filtration}.
We consider a finite sequence of decreasing thresholds, $\infty = t_0 \geq t_1 \geq \cdots \geq t_m=0$. A sequence of $\edgeComplex$ complexes, known as an $\edgeComplex$ filtration, is calculated and connected by inclusions,   
\[\edgeComplex(t_0) \to \edgeComplex(t_1) \to \cdots \to \edgeComplex(t_m).\] 
In other words, the $\edgeComplex$ complexes are subsets of one another, $\edgeComplex(t_i) \subseteq \edgeComplex(t_{i+1})$ for $0 \leq i \leq m-1$.  
The $\edgeComplex$ filtration can also be described as the upper-star filtration of the graph. 
$\Hgroup_0$ and $\Hgroup_1$ features are tracked across the various $\edgeComplex$ complexes in the filtration, 
\[\Hgroup(\edgeComplex(t_0)) \to \Hgroup(\edgeComplex(t_1)) \to \cdots \to \Hgroup(\edgeComplex(t_m)).\]

In the example filtration in \autoref{fig:filtration}, 
as the threshold decreases, $\Hgroup_0$ component features 
merge. 
The merging of two $\Hgroup_0$ features causes one feature to disappear in what is known as a \emph{death} event while the other feature continues to live. 
Consider the merging of the green and purple components in \autoref{fig:filtration:f}. 
In \autoref{fig:filtration:g}, the green component has died while the purple continues to live. 
In contrast, as the threshold decreases, new $\Hgroup_1$ cycle features 
appear in what are known as \emph{birth} events. Note the creation of cycles at $\edgeComplex(2)$ and $\edgeComplex(1)$ in \autoref{fig:cycles:a} and \ref{fig:cycles:b}, respectively. 
The birth and death events represent critical values that define the importance of a feature.

\subsubsection{Efficient Identification of \texorpdfstring{$\Hgroup_0$}{H0} Connected Components}

To calculate $\Hgroup_0$ features of a graph, Suh et al.~\cite{suh2019persistent} calculated the minimum spanning tree of a metric space representation of the graph by transforming edge weights into distances, e.g., $d(A,B)=1/w_{AB}$ (i.e., larger weights having smaller distances), which is inefficient on larger graphs, taking $O(|V|^2\log{|V|})$. 

Our choice of the $\edgeComplex$ filtration is a very specific one designed to capture features and increase efficiency. Using the $\edgeComplex$ filtration, the $\Hgroup_0$ information for the graph can be obtained by calculating the \textit{maximal spanning tree} of the graph, which is the spanning tree with edge weights greater than or equal to every other possible spanning tree. In calculating the maximal spanning tree of the graph, as edges are added to the tree, each edge $e_i$ represents an $\Hgroup_0$ death event at $w_i$ (i.e., merging of two connected components). 
We calculate the maximal spanning tree using Kruskal's algorithm~\cite{kruskal1956shortest}, selecting the largest weight edge instead of the smallest. The algorithm has a worst-case time complexity of $O(|E|\log{|E|})$. For non-negative weights, the resulting maximal spanning tree captures exactly the same structure as the prior minimal spanning tree of the metric space approach, only more efficiently. In addition, our maximal spanning tree approach captures meaningful features for negative and zero weight edges.

\subsubsection{Efficient Identification of \texorpdfstring{$\Hgroup_1$}{H1} Cycles}

In general persistent homology calculations, both detecting the existence of $\Hgroup_1$ features and extracting a representative cycle are computationally expensive, roughly $O(|V|^3)$~\cite{dey2018efficient}. 
The use of the $\edgeComplex$ filtration enables both detecting and extracting the $\Hgroup_1$ features much more efficiently within the limited context of graphs. To do this, we begin with an interesting observation in~\autoref{theorem:insertion}.

\begin{theorem}
\label{theorem:insertion}
    Given any spanning tree $S$ of a connected graph $G$, inserting any additional graph edge into $S$ creates a cycle.
\end{theorem}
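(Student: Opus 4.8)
The plan is to exploit the defining properties of a spanning tree: it is connected, acyclic, and spans all of $V$. Let $e = \{u,v\}$ be the edge of $G$ being inserted, so $e \in E$ but $e \notin S$ (and $u \neq v$, since $G$ is simple). Because $S$ is connected and contains every vertex of $G$, there must be a path in $S$ joining $u$ and $v$; this existence is the crux, and it follows directly from the spanning plus connectivity properties.

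First I would invoke the standard fact that in a tree any two vertices are joined by a \emph{unique} path; call this path $P \subseteq S$. Since $P$ lies entirely within $S$ and $e \notin S$, the edge $e$ cannot appear among the edges of $P$. Therefore $P$ together with $e$ forms a closed walk: traverse $P$ from $u$ to $v$, then return along $e$ to $u$.

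Next I would verify that this closed walk is a genuine cycle rather than a degenerate backtrack. The intermediate vertices along $P$ are distinct because $P$ is a simple path in a tree, and $e$ merely reconnects its two endpoints without revisiting an interior vertex; the edges are distinct because $e \notin P$. Hence $P \cup \{e\}$ is a cycle in $S \cup \{e\}$, as claimed.

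The argument has essentially no difficult step; the only two points requiring care are (i) confirming that the path $P$ exists, which is immediate from the spanning and connectivity of $S$, and (ii) confirming that the resulting closed walk is an honest cycle, which follows from $e \notin S$ so that $e$ is not already one of the edges of $P$. An alternative and equally short route would be a counting argument: $S$ has exactly $|V|-1$ edges, so $S \cup \{e\}$ is a connected graph on $|V|$ vertices with $|V|$ edges, and any connected graph with at least as many edges as vertices must contain a cycle. I would favor the explicit path-based construction, since it also exhibits the representative cycle, which is precisely what the forthcoming $\Hgroup_1$ extraction needs.
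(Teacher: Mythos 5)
Your proposal is correct and follows essentially the same route as the paper's proof: both invoke the unique simple path in the spanning tree between the endpoints of the inserted edge and close it with that edge to form a cycle. Your version is slightly more careful in verifying that the resulting closed walk is a genuine cycle (since $e$ cannot lie on the tree path), but the underlying argument is the same.
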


\begin{proof} Since $S$ is a tree, it is acyclic, and any two nodes have a unique simple path between them. Therefore, if an edge is added between any two nodes in $S$, those nodes will now have two non-overlapping paths between them. Concatenating the edges of the two paths will create closed trail, i.e., a cycle, between them.
\end{proof}
 
As it turns out, this property enables efficient extraction of $\Hgroup_1$ features with the $\edgeComplex$ filtration. 
As the maximal spanning tree is calculated, an edge  $e_i$ that would be excluded from the spanning tree signifies the existence of an $\Hgroup_1$ cycle feature with a birth at $w_i$\footnote{Since 2-simplices (i.e., triangles) are not used, we do not track cycle death.}.

To extract the $\Hgroup_1$ cycle paths themselves, the unweighted shortest path is calculated between the endpoints of each edge, $e_i$, in the associated $\edgeComplex$ complex, $\edgeComplex(w_i)$. 
The shortest path is computed using Dijkstra's algorithm with worst-case time complexity $O((|V|+|E|)\log |V|)$. In practice, paths are short and generally fast to compute. Nevertheless, the shortest path needs to be calculated for every $\Hgroup_1$ cycle. 
Therefore, in practice, we defer calculating the complete cycle paths until the visualization needs the information. To further reduce the number of $\Hgroup_1$ features considered, cycles of length three are considered to be trivial and discarded\footnote{These are found by comparing the 1-neighborhood of the edge nodes.}.

While this approach will extract all $\Hgroup_1$ features of the $\edgeComplex$ complex, it will not extract all cycles of the graph.
Our $\Hgroup_1$ features are a specific type of cycle, where no chord within the cycle has a weight greater than or equal to the weight of all edges of the cycle. 
This type of cycle has a strong theoretical basis that is useful for many analysis tasks, but it may not be relevant for all such tasks. As we will show in our evaluation (see \autoref{sec.eval.h1}), these cycles are useful for many graphs. Nevertheless, adapting our approach to extracting other representative cycle types would further extend the utility of the approach.

\subsection{Using \texorpdfstring{$\Hgroup_0$}{H0} Features to Untangle Initial Graph Layouts}
\label{sec:layout}

Recent works (e.g., \cite{suh2019persistent, doraiswamy2020topomap}) have demonstrated the value of using $\Hgroup_0$ information in generating high-quality layouts of graphs and high-dimensional data, respectively. In contrast, we focus on quickly producing a good-quality layout that reflects the most important structures of the graph, as defined by persistent homology. We then utilize a D3.js's force-directed layout capabilities~\cite{bostock2011d3} to optimize the final layout.

Our algorithm works by laying out the graph using the maximal spanning tree. 
Inspired by early works on tidy tree drawing~\cite{wetherell1979tidy, reingold1981tidier, supowit1983complexity}, our approach has two main steps, with a focus on simplicity and efficiency. 
First, we generate an abstract layout of the maximal spanning tree that determines the distribution of space to nodes and subtrees of the graph. 
Then, we embed the tree into the drawing canvas using either a layered or radial scheme.

\subsubsection{Abstract Layout}

The first phase of the algorithm forms an abstract layout of the tree. 
The algorithm begins by selecting a node at random to serve as the root of the tree\footnote{We tested several other strategies, e.g., finding the most central node or the node with the most children. However, improvements were minimal, sometimes with a high additional cost over a randomly chosen node.}. The children of the selected root are then laid out hierarchically.
The algorithm recursively processes subtrees, subdividing the available space until all nodes have been visited. 
The available space is divided between children at each level based on the number of nodes in their respective subtrees. For example, in \autoref{fig:layout:abst}, the orange node is selected as the root. The leftmost subtree is allocated more space since it contains more nodes.

\subsubsection{Graph Embedding}

In the second part of the algorithm, the abstract layout is used to embed nodes into the drawing canvas using either a layered or radial layout.

\begin{figure}[!b]
    \centering
    \hfill
    \begin{minipage}[b]{0.925\linewidth}
        \includegraphics[width=\linewidth]{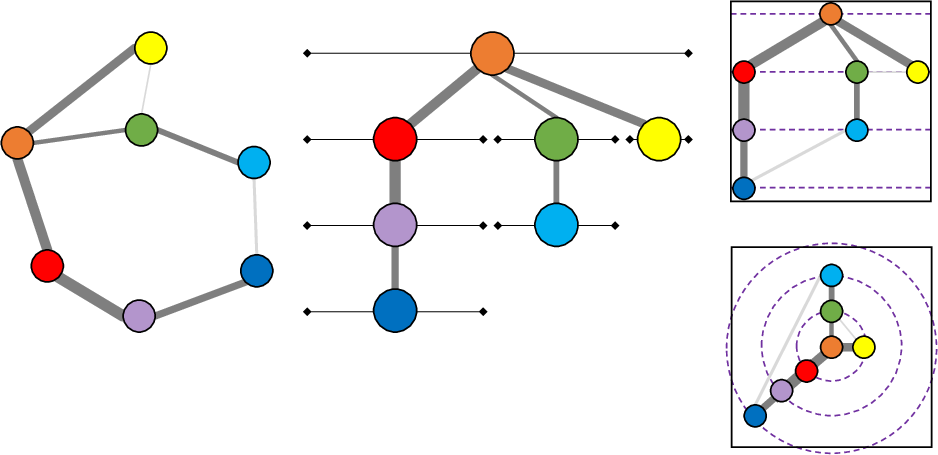}
    \end{minipage}    
    \begin{minipage}[b]{0pt}
        \hspace{-234pt}\subfloat[Graph with maximal spanning tree edges in dark grey\label{fig:layout:mst}]{\hspace{66pt}}
        \hspace{1pt}
        \hspace{5pt}\subfloat[Abstract layout calculates node depth and distributes horizontal space by subtree sizes\label{fig:layout:abst}]{\hspace{95pt}}
        \vspace{0pt}
    \end{minipage}    
    \begin{minipage}[b]{0pt}
        \hspace{-63pt}\subfloat[Layered Layout\label{fig:layout:layered}]{\hspace{60pt}}
        \vspace{52pt}
    \end{minipage}    
    \begin{minipage}[b]{0pt}
        \hspace{-64pt}\subfloat[Radial Layout\label{fig:layout:radial}]{\hspace{60pt}}
        \vspace{-10pt}
    \end{minipage}
    
    \vspace{8pt}
    \caption{An illustration of the initial layout schemes. The input graph~(a) first has an abstract layout formed in~(b) and is then mapped into a layered layout~(c) or a radial layout~(d). After the initial layout, any force-directed layout can be used.}
    \label{fig:layout}
\end{figure}

\para{Layered Layout} The first version of our layout algorithm maps the abstract layout directly to the available drawing canvas in a layered tree visualization. Specifically, the horizontal space on the canvas is mapped to the width of the tree in the abstract layout, and the vertical space is mapped to the height of the tree in the abstract layout. For example, in \autoref{fig:layout:layered}, the abstract layout from \autoref{fig:layout:abst} is mapped to the available drawing canvas.

\para{Radial Layout} The second version adopts a radial layout for the tree. 
The width of the abstract tree is mapped to an angle in the unit circle, and each layer of the tree occupies an increasing radius. For example, in \autoref{fig:layout:radial}, the abstract layout from \autoref{fig:layout:abst} is mapped to polar coordinates in the drawing canvas.

For either layered or radial layout, after the initial layout is formulated, a standard force-directed layout is applied to the entire graph.

\subsection{Interactive Untangling with Persistent Homology}
\label{sec:h1}

\subsubsection{Visualization of Persistent Homology Features}

We visualize the $\Hgroup_0$ and $\Hgroup_1$ persistent homology using a visualization based upon a persistence barcode (see \autoref{fig:teaser2}), a standard tool of persistent homology. 
For this visualization, a barcode is associated with a set of $\Hgroup_0$ or $\Hgroup_1$ features. 
For each barcode, a bar represents a single topological feature. 
Its length is proportional to the death or birth time of the associated $\Hgroup_0$ or $\Hgroup_1$ features, respectively.

\subsubsection{Interacting with \texorpdfstring{$\Hgroup_0$}{H0} Components}

Similar to the $\Hgroup_0$ interactions in~\cite{suh2019persistent}, when $\Hgroup_0$ bars are selected in the barcode, a strong attractive force is created (i.e., a spring-like force) between the nodes of the associated edge from the spanning tree. 
Our selection offers a filtering slider (see demo\footnote{Demo at \demoURL.}) to select multiple features simultaneously.

\subsubsection{Interacting with \texorpdfstring{$\Hgroup_1$}{H1} Cycles}

For interacting with $\Hgroup_1$ cycles, we offer two modalities. The first is a highlighting modality. As the user's mouse goes over the bar for a given cycle, that cycle is extracted and highlighted in the graph. \autoref{fig:teaser2} shows two examples, each highlighting two cycles.

\begin{figure}[!bt]
    \centering
    
    \begin{minipage}[b]{0.95\linewidth}
        \includegraphics[width=\linewidth]{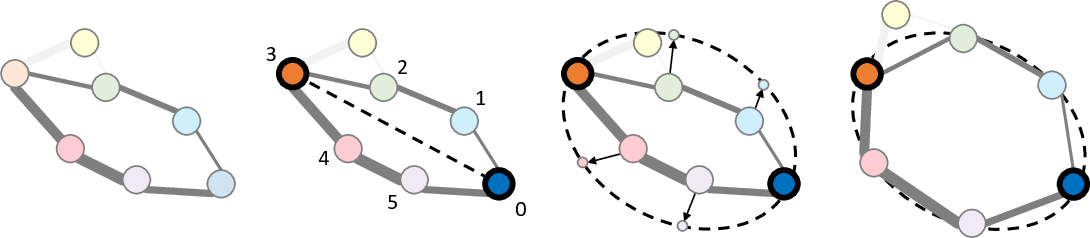}
    \end{minipage}    
    \begin{minipage}[b]{0pt}  
    \hspace{-205pt}
    \subfloat[]{\hspace{20pt}}
    \hspace{40pt}
    \subfloat[]{\hspace{20pt}}
    \hspace{40pt}
    \subfloat[]{\hspace{20pt}}
    \hspace{45pt}
    \subfloat[]{\hspace{5pt}}
    \vspace{38pt}
    \end{minipage}

    \caption{Illustration of the elliptical force applied to an $\Hgroup_1$ feature. (a)~A cycle (dark gray) is selected in the barcode. (b)~The two nodes in the cycle with the greatest euclidean distance in the visualization identify the major axis of the ellipse, whereas  the minor axis diameter is calculated by a user-specified aspect ratio. (c)~The nodes on the cycle are parameterized (i.e., ordered) and their target locations on the ellipse are identified. (d)~Forces are applied to the nodes on the cycle to move them toward their target locations.}
    \label{fig:h1_force}
\end{figure}

The second modality, triggered when a user clicks a feature in the barcode, uses information about the cycle to add a new elliptical force to the cycle nodes in the force-directed layout. 
The approach (see \autoref{fig:h1_force} for an illustration) first takes the nodes of the cycle and identifies the two nodes with the largest Euclidean distance from one another in the visualization. Those nodes serve as the major axis of the ellipse and determine the diameter of the major axis. The minor axis diameter is determined by a user-selectable aspect ratio. Once the elliptical shape is calculated, the cycle nodes are parameterized, i.e., ordered around the ellipse, to select a target location. The ordering step is quite important, as it enables powerful modifications, e.g., untangling cycles. Finally, a strong force is added to attract the nodes to their target locations. However, due to the over-constraint of force-directed layouts, this new force does not guarantee nodes will end up on the ellipse. \autoref{fig:teaser2} shows examples of imposing the elliptical shape on the cycles and examples of the forces untangling cycles in the graph.

\section{Evaluation}

To demonstrate the efficacy of our approach, we provide a two-phase evaluation. In \autoref{sec:results:initial}, we first evaluate our graph initialization approach using $\Hgroup_0$ persistent homology in terms of layout quality and rate of convergence. Second, in \autoref{sec.eval.h1}, we evaluate the layout quality of using $\Hgroup_1$ persistent homology to modify the forces of a force-directed layout. 
For all comparisons, we primarily compare to the state-of-the-practice force-directed layout provided by D3.js~\cite{bostock2011d3}, which uses a random initialization. Furthermore, in \autoref{sec:eval:other_algos}, we compare to static graph visualizations, including the \emph{neato}~\cite{kamada1989algorithm}, \emph{fdp}~\cite{eades1984heuristic}, and \emph{sfpd}~\cite{hu2005efficient} algorithms coming from Graphviz~\cite{ellson2001graphviz}.

\renewcommand{\arraystretch}{1.1}

\begin{table*}[!ht]
    \centering
    \caption{Table of \textit{dense} datasets. See \autoref{sec.eval.metrics} for details about metrics. Our discussion focuses on LCMC metrics (in blue). In these cells, bold indicates a smaller time for $T_{LCMC}$, a lower iteration count to convergence for $C_{LCMC}$, or a value 0.005 larger for $Q_{LCMC}$.}
    \label{tab:dense}
    
    \vspace{-6pt}
    \resizebox{0.975\linewidth}{!}{%
    \begin{tabular}{c!{\color{hlineColor}\vrule}c!{\color{hlineColor}\vrule}c!{\color{hlineColor}\vrule}c!{\color{hlineColor}\vrule}c|c!{\color{hlineColor}\vrule}c|c!{\color{hlineColor}\vrule}c|c!{\color{hlineColor}\vrule}c!{\color{hlineColor}\vrule}c!{\color{hlineColor}\vrule}c!{\color{hlineColor}\vrule}c!{\color{hlineColor}\vrule}c|C{2.35cm}!{\color{hlineColor}\vrule}p{6.45cm}}
    
    \multirow{2}{*}{Dataset} & \multirow{2}{*}{$|V|$} & \multirow{2}{*}{$|E|$} & Avg & \multirow{2}{*}{Layout} & \multirow{2}{*}{$T_{IT}$} & \multirow{2}{*}{$T_{AIT}$} & \multirow{2}{*}{$T_{LCMC}$} & \multirow{2}{*}{$C_{LCMC}$} & \multirow{2}{*}{$Q_{LCMC}$} & \multirow{2}{*}{$Q_{trust}$} & \multirow{2}{*}{$Q_{conv}$} & \multirow{2}{*}{$Q_{EC}$} & \multirow{2}{*}{$Q_{CA}$} & \multirow{2}{*}{$Q_{MAR}$} & \multirow{2}{*}{Source} & \multirow{2}{*}{Description} \\
     &  & & ECC & & & & & & & & & & & & \\
     \hline\hline

\multirow{2}{*}{\textsc{aves sparrow social}} & \multirow{2}{*}{31} & \multirow{2}{*}{211} & \multirow{2}{*}{3.5} & \cellcolor{white}Random & \cellcolor{white}13.5 ms & \cellcolor{white}10.9 ms & \cellcolor{white}1.19 s & \cellcolor{LCMCColorLight}108 & \cellcolor{LCMCColorLight}0.350 & \cellcolor{white}0.885 & \cellcolor{white}0.880 & \cellcolor{white}0.764 & \cellcolor{white}0.757 & \cellcolor{white}0.081 & Network & \multirow{2}{6.45cm}{Nodes represents individual free range birds, and edges are a proximity-based association index.}\\
 & & & & \cellcolor{ourMethodColor}Layered & \cellcolor{ourMethodColor}14.1 ms & \cellcolor{ourMethodColor}11.3 ms & \cellcolor{ourMethodColor}796 ms & \cellcolor{LCMCColorDark}\textbf{69} & \cellcolor{LCMCColorDark}\textbf{0.416} & \cellcolor{ourMethodColor}0.928 & \cellcolor{ourMethodColor}0.926 & \cellcolor{ourMethodColor}0.792 & \cellcolor{ourMethodColor}0.766 & \cellcolor{ourMethodColor}0.140 & Repository & \\
\arrayrulecolor{hlineColor}\hline
\multirow{2}{*}{\textsc{barbasi-albert (50,40)}} & \multirow{2}{*}{50} & \multirow{2}{*}{400} & \multirow{2}{*}{2.0} & \cellcolor{white}Random & \cellcolor{white}18.7 ms & \cellcolor{white}10.8 ms & \cellcolor{white}1.53 s & \cellcolor{LCMCColorLight}140 & \cellcolor{LCMCColorLight}0.145 & \cellcolor{white}0.657 & \cellcolor{white}0.677 & \cellcolor{white}0.546 & \cellcolor{white}0.714 & \cellcolor{white}0.077 & \multirow{2}{*}{NetworkX} & \multirow{2}{6.45cm}{A graph that satisifies Barbasi-Albert preferential attachment model \cite{barabasi1999emergence}.}\\
 & & & & \cellcolor{ourMethodColor}Radial & \cellcolor{ourMethodColor}17.8 ms & \cellcolor{ourMethodColor}11.8 ms & \cellcolor{ourMethodColor}29.6 ms & \cellcolor{LCMCColorDark}\textbf{1} & \cellcolor{LCMCColorDark}\textbf{0.186} & \cellcolor{ourMethodColor}0.691 & \cellcolor{ourMethodColor}0.702 & \cellcolor{ourMethodColor}0.575 & \cellcolor{ourMethodColor}0.725 & \cellcolor{ourMethodColor}0.029 &  & \\
\arrayrulecolor{hlineColor}\hline
\multirow{2}{*}{\textsc{bio-celegans}} & \multirow{2}{*}{453} & \multirow{2}{*}{2025} & \multirow{2}{*}{5.2} & \cellcolor{white}Random & \cellcolor{white}110 ms & \cellcolor{white}47.1 ms & \cellcolor{white}3.08 s & \cellcolor{LCMCColorLight}63 & \cellcolor{LCMCColorLight}0.219 & \cellcolor{white}0.867 & \cellcolor{white}0.761 & \cellcolor{white}0.919 & \cellcolor{white}0.719 & \cellcolor{white}0.196 & Network & \multirow{2}{6.45cm}{A metabolic network where substrates are nodes and edges are metabolic reactions between them.}\\
 & & & & \cellcolor{ourMethodColor}Layered & \cellcolor{ourMethodColor}102 ms & \cellcolor{ourMethodColor}47.3 ms & \cellcolor{ourMethodColor}1.09 s & \cellcolor{LCMCColorDark}\textbf{21} & \cellcolor{LCMCColorDark}\textbf{0.229} & \cellcolor{ourMethodColor}0.867 & \cellcolor{ourMethodColor}0.776 & \cellcolor{ourMethodColor}0.915 & \cellcolor{ourMethodColor}0.720 & \cellcolor{ourMethodColor}0.181 & Repository & \\
\arrayrulecolor{hlineColor}\hline
\multirow{2}{*}{\textsc{bn-mouse-visual-cortex-2}} & \multirow{2}{*}{193} & \multirow{2}{*}{214} & \multirow{2}{*}{6.4} & \cellcolor{white}Random & \cellcolor{white}16.6 ms & \cellcolor{white}14.0 ms & \cellcolor{white}282 ms & \cellcolor{LCMCColorLight}19 & \cellcolor{LCMCColorLight}0.409 & \cellcolor{white}0.958 & \cellcolor{white}0.960 & \cellcolor{white}0.997 & \cellcolor{white}0.746 & \cellcolor{white}0.931 & Network & \multirow{2}{6.45cm}{Mouse brain network where nodes are locations and edges are unweighted fiber tracks between them.}\\
 & & & & \cellcolor{ourMethodColor}Layered & \cellcolor{ourMethodColor}17.2 ms & \cellcolor{ourMethodColor}14.2 ms & \cellcolor{ourMethodColor}201 ms & \cellcolor{LCMCColorDark}\textbf{13} & \cellcolor{LCMCColorDark}\textbf{0.425} & \cellcolor{ourMethodColor}0.967 & \cellcolor{ourMethodColor}0.963 & \cellcolor{ourMethodColor}0.999 & \cellcolor{ourMethodColor}0.837 & \cellcolor{ourMethodColor}0.930 & Repository & \\
\arrayrulecolor{hlineColor}\hline
\multirow{2}{*}{\textsc{chordal cycle (90)}} & \multirow{2}{*}{90} & \multirow{2}{*}{180} & \multirow{2}{*}{5.0} & \cellcolor{white}Random & \cellcolor{white}11.9 ms & \cellcolor{white}13.5 ms & \cellcolor{white}1.19 s & \cellcolor{LCMCColorLight}87 & \cellcolor{LCMCColorLight}0.335 & \cellcolor{white}0.897 & \cellcolor{white}0.818 & \cellcolor{white}0.938 & \cellcolor{white}0.738 & \cellcolor{white}0.467 & \multirow{2}{*}{NetworkX} & \multirow{2}{6.45cm}{Graph where all cylces contain a chord.}\\
 & & & & \cellcolor{ourMethodColor}Radial & \cellcolor{ourMethodColor}14.6 ms & \cellcolor{ourMethodColor}13.6 ms & \cellcolor{ourMethodColor}1.02 s & \cellcolor{LCMCColorDark}\textbf{74} & \cellcolor{LCMCColorDark}\textbf{0.362} & \cellcolor{ourMethodColor}0.905 & \cellcolor{ourMethodColor}0.840 & \cellcolor{ourMethodColor}0.940 & \cellcolor{ourMethodColor}0.735 & \cellcolor{ourMethodColor}0.477 &  & \\
\arrayrulecolor{hlineColor}\hline
\multirow{2}{*}{\textsc{davis southern women }} & \multirow{2}{*}{32} & \multirow{2}{*}{89} & \multirow{2}{*}{3.7} & \cellcolor{white}Random & \cellcolor{white}6.3 ms & \cellcolor{white}14.8 ms & \cellcolor{white}1.53 s & \cellcolor{LCMCColorLight}103 & \cellcolor{LCMCColorLight}\textbf{0.382} & \cellcolor{white}0.921 & \cellcolor{white}0.903 & \cellcolor{white}0.880 & \cellcolor{white}0.736 & \cellcolor{white}0.244 & \multirow{2}{*}{NetworkX} & \multirow{2}{6.45cm}{A graph of observered attendance at 14 social events by 18 southern women in 1930's.}\\
 & & & & \cellcolor{ourMethodColor}Radial & \cellcolor{ourMethodColor}7.3 ms & \cellcolor{ourMethodColor}14.5 ms & \cellcolor{ourMethodColor}528 ms & \cellcolor{LCMCColorDark}\textbf{36} & \cellcolor{LCMCColorDark}0.371 & \cellcolor{ourMethodColor}0.917 & \cellcolor{ourMethodColor}0.903 & \cellcolor{ourMethodColor}0.879 & \cellcolor{ourMethodColor}0.749 & \cellcolor{ourMethodColor}0.180 &  & \\
\arrayrulecolor{hlineColor}\hline
\multirow{2}{*}{\textsc{dolphin social}} & \multirow{2}{*}{62} & \multirow{2}{*}{159} & \multirow{2}{*}{6.5} & \cellcolor{white}Random & \cellcolor{white}10.4 ms & \cellcolor{white}12.8 ms & \cellcolor{white}1.38 s & \cellcolor{LCMCColorLight}107 & \cellcolor{LCMCColorLight}0.474 & \cellcolor{white}0.948 & \cellcolor{white}0.947 & \cellcolor{white}0.955 & \cellcolor{white}0.748 & \cellcolor{white}0.397 & Network & \multirow{2}{6.45cm}{A social interaction network between dolphins.}\\
 & & & & \cellcolor{ourMethodColor}Radial & \cellcolor{ourMethodColor}10.4 ms & \cellcolor{ourMethodColor}13.3 ms & \cellcolor{ourMethodColor}1.02 s & \cellcolor{LCMCColorDark}\textbf{76} & \cellcolor{LCMCColorDark}\textbf{0.486} & \cellcolor{ourMethodColor}0.948 & \cellcolor{ourMethodColor}0.944 & \cellcolor{ourMethodColor}0.947 & \cellcolor{ourMethodColor}0.767 & \cellcolor{ourMethodColor}0.347 & Repository & \\
\arrayrulecolor{hlineColor}\hline
\multirow{2}{*}{\textsc{dorogovtsev-goltsev-mendes (5)}} & \multirow{2}{*}{123} & \multirow{2}{*}{243} & \multirow{2}{*}{4.3} & \cellcolor{white}Random & \cellcolor{white}14.4 ms & \cellcolor{white}12.5 ms & \cellcolor{white}1.31 s & \cellcolor{LCMCColorLight}104 & \cellcolor{LCMCColorLight}0.377 & \cellcolor{white}0.930 & \cellcolor{white}0.860 & \cellcolor{white}0.986 & \cellcolor{white}0.722 & \cellcolor{white}0.534 & \multirow{2}{*}{NetworkX} & \multirow{2}{6.45cm}{A graph that satisifies Dorogovtsev and Mendes algorithm \cite{dorogovtsev2003spectra}.}\\
 & & & & \cellcolor{ourMethodColor}Radial & \cellcolor{ourMethodColor}31.2 ms & \cellcolor{ourMethodColor}12.7 ms & \cellcolor{ourMethodColor}730 ms & \cellcolor{LCMCColorDark}\textbf{55} & \cellcolor{LCMCColorDark}\textbf{0.446} & \cellcolor{ourMethodColor}0.960 & \cellcolor{ourMethodColor}0.912 & \cellcolor{ourMethodColor}0.996 & \cellcolor{ourMethodColor}0.734 & \cellcolor{ourMethodColor}0.563 &  & \\
\arrayrulecolor{hlineColor}\hline
\multirow{2}{*}{\textsc{duplicate divergence }} & \multirow{2}{*}{50} & \multirow{2}{*}{99} & \multirow{2}{*}{4.6} & \cellcolor{white}Random & \cellcolor{white}7.7 ms & \cellcolor{white}14.1 ms & \cellcolor{white}1.29 s & \cellcolor{LCMCColorLight}91 & \cellcolor{LCMCColorLight}0.415 & \cellcolor{white}0.925 & \cellcolor{white}0.913 & \cellcolor{white}0.907 & \cellcolor{white}0.747 & \cellcolor{white}0.483 & \multirow{2}{*}{NetworkX} & \multirow{2}{6.45cm}{Starting with a small graph, nodes and some edges repeatedly duplicated.}\\
 & & & & \cellcolor{ourMethodColor}Layered & \cellcolor{ourMethodColor}8.2 ms & \cellcolor{ourMethodColor}14.1 ms & \cellcolor{ourMethodColor}318 ms & \cellcolor{LCMCColorDark}\textbf{22} & \cellcolor{LCMCColorDark}0.414 & \cellcolor{ourMethodColor}0.919 & \cellcolor{ourMethodColor}0.916 & \cellcolor{ourMethodColor}0.931 & \cellcolor{ourMethodColor}0.733 & \cellcolor{ourMethodColor}0.488 &  & \\
\arrayrulecolor{hlineColor}\hline
\multirow{2}{*}{\textsc{enron email}} & \multirow{2}{*}{143} & \multirow{2}{*}{623} & \multirow{2}{*}{6.1} & \cellcolor{white}Random & \cellcolor{white}28.1 ms & \cellcolor{white}17.4 ms & \cellcolor{white}1.02 s & \cellcolor{LCMCColorLight}57 & \cellcolor{LCMCColorLight}\textbf{0.390} & \cellcolor{white}0.921 & \cellcolor{white}0.888 & \cellcolor{white}0.948 & \cellcolor{white}0.725 & \cellcolor{white}0.234 & Network & \multirow{2}{6.45cm}{An email network collected from the Enron energy accounting scandal of 2001.}\\
 & & & & \cellcolor{ourMethodColor}Layered & \cellcolor{ourMethodColor}27.7 ms & \cellcolor{ourMethodColor}17.9 ms & \cellcolor{ourMethodColor}726 ms & \cellcolor{LCMCColorDark}\textbf{39} & \cellcolor{LCMCColorDark}0.385 & \cellcolor{ourMethodColor}0.928 & \cellcolor{ourMethodColor}0.873 & \cellcolor{ourMethodColor}0.949 & \cellcolor{ourMethodColor}0.725 & \cellcolor{ourMethodColor}0.244 & Repository & \\
\arrayrulecolor{hlineColor}\hline
\multirow{2}{*}{\textsc{hic 1k net 6}} & \multirow{2}{*}{4581} & \multirow{2}{*}{284924} & \multirow{2}{*}{4.2} & \cellcolor{white}Random & \cellcolor{white}12.4 s & \cellcolor{white}10.9 s & \cellcolor{LCMCColorLight}12.3 min & \cellcolor{LCMCColorLight}112 & \cellcolor{LCMCColorLight}0.277 & \cellcolor{white}0.991 & \cellcolor{white}0.970 & \cellcolor{white}-- & \cellcolor{white}-- & \cellcolor{white}-- & \multirow{2}{*}{BioSNAP} & \multirow{2}{6.45cm}{Nodes are gnomic regions and edges are normalized contacts between regions.}\\
 & & & & \cellcolor{ourMethodColor}Radial & \cellcolor{ourMethodColor}16.0 s & \cellcolor{ourMethodColor}10.9 s & \cellcolor{LCMCColorDark}\textbf{6.2 min} & \cellcolor{LCMCColorDark}\textbf{55} & \cellcolor{LCMCColorDark}\textbf{0.303} & \cellcolor{ourMethodColor}0.993 & \cellcolor{ourMethodColor}0.985 & \cellcolor{ourMethodColor}-- & \cellcolor{ourMethodColor}-- & \cellcolor{ourMethodColor}-- &  & \\
\arrayrulecolor{hlineColor}\hline
\multirow{2}{*}{\textsc{les miserables}} & \multirow{2}{*}{77} & \multirow{2}{*}{254} & \multirow{2}{*}{4.1} & \cellcolor{white}Random & \cellcolor{white}12.9 ms & \cellcolor{white}11.0 ms & \cellcolor{white}770 ms & \cellcolor{LCMCColorLight}69 & \cellcolor{LCMCColorLight}\textbf{0.444} & \cellcolor{white}0.930 & \cellcolor{white}0.888 & \cellcolor{white}0.948 & \cellcolor{white}0.750 & \cellcolor{white}0.407 & \multirow{2}{*}{NetworkX} & \multirow{2}{6.45cm}{The co-appearance of charecters in chapters of Les Miserables.}\\
 & & & & \cellcolor{ourMethodColor}Layered & \cellcolor{ourMethodColor}12.8 ms & \cellcolor{ourMethodColor}11.7 ms & \cellcolor{ourMethodColor}340 ms & \cellcolor{LCMCColorDark}\textbf{28} & \cellcolor{LCMCColorDark}0.424 & \cellcolor{ourMethodColor}0.920 & \cellcolor{ourMethodColor}0.873 & \cellcolor{ourMethodColor}0.945 & \cellcolor{ourMethodColor}0.759 & \cellcolor{ourMethodColor}0.417 &  & \\
\arrayrulecolor{hlineColor}\hline
\multirow{2}{*}{\textsc{movies}} & \multirow{2}{*}{101} & \multirow{2}{*}{192} & \multirow{2}{*}{7.0} & \cellcolor{white}Random & \cellcolor{white}27.0 ms & \cellcolor{white}12.8 ms & \cellcolor{white}1.14 s & \cellcolor{LCMCColorLight}87 & \cellcolor{LCMCColorLight}\textbf{0.298} & \cellcolor{white}0.862 & \cellcolor{white}0.819 & \cellcolor{white}0.932 & \cellcolor{white}0.732 & \cellcolor{white}0.483 & \multirow{2}{*}{\cite{de2018exploratory}} & \multirow{2}{6.45cm}{Collabaration between 40 Hollywood composers and 61 producers between 1964 and 1976.}\\
 & & & & \cellcolor{ourMethodColor}Radial & \cellcolor{ourMethodColor}13.1 ms & \cellcolor{ourMethodColor}12.9 ms & \cellcolor{ourMethodColor}685 ms & \cellcolor{LCMCColorDark}\textbf{52} & \cellcolor{LCMCColorDark}0.280 & \cellcolor{ourMethodColor}0.859 & \cellcolor{ourMethodColor}0.804 & \cellcolor{ourMethodColor}0.939 & \cellcolor{ourMethodColor}0.754 & \cellcolor{ourMethodColor}0.471 &  & \\
\arrayrulecolor{hlineColor}\hline
\multirow{2}{*}{\textsc{smith}} & \multirow{2}{*}{2970} & \multirow{2}{*}{97133} & \multirow{2}{*}{4.8} & \cellcolor{white}Random & \cellcolor{white}4.98 s & \cellcolor{white}3.91 s & \cellcolor{LCMCColorLight}114 s & \cellcolor{LCMCColorLight}28 & \cellcolor{LCMCColorLight}0.045 & \cellcolor{white}0.882 & \cellcolor{white}0.705 & \cellcolor{white}-- & \cellcolor{white}-- & \cellcolor{white}-- & \multirow{2}{*}{Facebook 100} & \multirow{2}{6.45cm}{Nodes are students from Smith college and edges are the friendships between them.}\\
 & & & & \cellcolor{ourMethodColor}Radial & \cellcolor{ourMethodColor}5.62 s & \cellcolor{ourMethodColor}3.73 s & \cellcolor{LCMCColorDark}\textbf{20.5 s} & \cellcolor{LCMCColorDark}\textbf{4} & \cellcolor{LCMCColorDark}0.045 & \cellcolor{ourMethodColor}0.887 & \cellcolor{ourMethodColor}0.698 & \cellcolor{ourMethodColor}-- & \cellcolor{ourMethodColor}-- & \cellcolor{ourMethodColor}-- &  & \\
\arrayrulecolor{hlineColor}\hline
\multirow{2}{*}{\textsc{train bombing}} & \multirow{2}{*}{64} & \multirow{2}{*}{243} & \multirow{2}{*}{4.6} & \cellcolor{white}Random & \cellcolor{white}31.2 ms & \cellcolor{white}11.9 ms & \cellcolor{white}1.84 s & \cellcolor{LCMCColorLight}152 & \cellcolor{LCMCColorLight}0.393 & \cellcolor{white}0.912 & \cellcolor{white}0.887 & \cellcolor{white}0.898 & \cellcolor{white}0.746 & \cellcolor{white}0.311 & \multirow{2}{*}{NetworkX} & \multirow{2}{6.45cm}{Nodes are individuals involved in 2004 madrid train bombing and edge are for prior known relations.}\\
 & & & & \cellcolor{ourMethodColor}Radial & \cellcolor{ourMethodColor}12.8 ms & \cellcolor{ourMethodColor}11.6 ms & \cellcolor{ourMethodColor}174 ms & \cellcolor{LCMCColorDark}\textbf{14} & \cellcolor{LCMCColorDark}\textbf{0.486} & \cellcolor{ourMethodColor}0.954 & \cellcolor{ourMethodColor}0.930 & \cellcolor{ourMethodColor}0.928 & \cellcolor{ourMethodColor}0.751 & \cellcolor{ourMethodColor}0.304 &  & \\
\arrayrulecolor{hlineColor}\hline
\multirow{2}{*}{\textsc{usair 97}} & \multirow{2}{*}{332} & \multirow{2}{*}{2126} & \multirow{2}{*}{5.1} & \cellcolor{white}Random & \cellcolor{white}101 ms & \cellcolor{white}41.3 ms & \cellcolor{white}2.21 s & \cellcolor{LCMCColorLight}\textbf{51} & \cellcolor{LCMCColorLight}\textbf{0.323} & \cellcolor{white}0.938 & \cellcolor{white}0.851 & \cellcolor{white}0.876 & \cellcolor{white}0.711 & \cellcolor{white}0.301 & Network & \multirow{2}{6.45cm}{A weighted graph of the air traffic between airports in the US in 1997.}\\
 & & & & \cellcolor{ourMethodColor}Layered & \cellcolor{ourMethodColor}108 ms & \cellcolor{ourMethodColor}41.2 ms & \cellcolor{ourMethodColor}3.69 s & \cellcolor{LCMCColorDark}87 & \cellcolor{LCMCColorDark}0.307 & \cellcolor{ourMethodColor}0.926 & \cellcolor{ourMethodColor}0.820 & \cellcolor{ourMethodColor}0.864 & \cellcolor{ourMethodColor}0.712 & \cellcolor{ourMethodColor}0.301 & Repository & \\
\arrayrulecolor{hlineColor}\hline

    \end{tabular}}
\end{table*}

\begin{table*}[!ht]
    \centering
    \caption{Table of \textit{sparse} datasets. See \autoref{tab:dense} for a description.}
    \label{tab:sparse}
    
    \vspace{-6pt}
    \resizebox{0.975\linewidth}{!}{%
    \begin{tabular}{c!{\color{hlineColor}\vrule}c!{\color{hlineColor}\vrule}c!{\color{hlineColor}\vrule}c!{\color{hlineColor}\vrule}c|c!{\color{hlineColor}\vrule}c|c!{\color{hlineColor}\vrule}c|c!{\color{hlineColor}\vrule}c!{\color{hlineColor}\vrule}c!{\color{hlineColor}\vrule}c!{\color{hlineColor}\vrule}c!{\color{hlineColor}\vrule}c|C{2.35cm}!{\color{hlineColor}\vrule}p{6.45cm}}
    
    \multirow{2}{*}{Dataset} & \multirow{2}{*}{$|V|$} & \multirow{2}{*}{$|E|$} & Avg & \multirow{2}{*}{Layout} & \multirow{2}{*}{$T_{IT}$} & \multirow{2}{*}{$T_{AIT}$} & \multirow{2}{*}{$T_{LCMC}$} & \multirow{2}{*}{$C_{LCMC}$} & \multirow{2}{*}{$Q_{LCMC}$} & \multirow{2}{*}{$Q_{trust}$} & \multirow{2}{*}{$Q_{conv}$} & \multirow{2}{*}{$Q_{EC}$} & \multirow{2}{*}{$Q_{CA}$} & \multirow{2}{*}{$Q_{MAR}$} & \multirow{2}{*}{Source} & \multirow{2}{*}{Description} \\
     &  & & ECC & & & & & & & & & & & & \\
     \hline\hline

\multirow{2}{*}{\textsc{airport}} & \multirow{2}{*}{2896} & \multirow{2}{*}{15641} & \multirow{2}{*}{10.2} & \cellcolor{white}Random & \cellcolor{white}908 ms & \cellcolor{white}902 ms & \cellcolor{LCMCColorLight}69.5 s & \cellcolor{LCMCColorLight}76 & \cellcolor{LCMCColorLight}0.245 & \cellcolor{white}0.948 & \cellcolor{white}0.849 & \cellcolor{white}-- & \cellcolor{white}-- & \cellcolor{white}-- & \multirow{2}{*}{Openflights.org} & \multirow{2}{6.45cm}{Airports are represented as nodes with edges counting the number of routes between airports.}\\
 & & & & \cellcolor{ourMethodColor}Layered & \cellcolor{ourMethodColor}1.1 s & \cellcolor{ourMethodColor}893 ms & \cellcolor{LCMCColorDark}\textbf{34.2 s} & \cellcolor{LCMCColorDark}\textbf{37} & \cellcolor{LCMCColorDark}\textbf{0.253} & \cellcolor{ourMethodColor}0.962 & \cellcolor{ourMethodColor}0.870 & \cellcolor{ourMethodColor}-- & \cellcolor{ourMethodColor}-- & \cellcolor{ourMethodColor}-- &  & \\
\arrayrulecolor{hlineColor}\hline
\multirow{2}{*}{\textsc{balanced tree (3,6)}} & \multirow{2}{*}{1093} & \multirow{2}{*}{1092} & \multirow{2}{*}{11.5} & \cellcolor{white}Random & \cellcolor{white}92.9 ms & \cellcolor{white}63.8 ms & \cellcolor{white}7.55 s & \cellcolor{LCMCColorLight}117 & \cellcolor{LCMCColorLight}0.221 & \cellcolor{white}0.808 & \cellcolor{white}0.676 & \cellcolor{white}0.995 & \cellcolor{white}0.702 & \cellcolor{white}0.752 & \multirow{2}{*}{NetworkX} & \multirow{2}{6.45cm}{This dataset represents a balanced tree with 3 children and a height of 6.}\\
 & & & & \cellcolor{ourMethodColor}Radial & \cellcolor{ourMethodColor}109 ms & \cellcolor{ourMethodColor}74.5 ms & \cellcolor{ourMethodColor}6.07 s & \cellcolor{LCMCColorDark}\textbf{80} & \cellcolor{LCMCColorDark}\textbf{0.374} & \cellcolor{ourMethodColor}0.982 & \cellcolor{ourMethodColor}0.940 & \cellcolor{ourMethodColor}1.000 & \cellcolor{ourMethodColor}0.671 & \cellcolor{ourMethodColor}0.779 &  & \\
\arrayrulecolor{hlineColor}\hline
\multirow{2}{*}{\textsc{barbell}} & \multirow{2}{*}{150} & \multirow{2}{*}{2501} & \multirow{2}{*}{48.3} & \cellcolor{white}Random & \cellcolor{white}107 ms & \cellcolor{white}35.7 ms & \cellcolor{white}3.04 s & \cellcolor{LCMCColorLight}82 & \cellcolor{LCMCColorLight}0.257 & \cellcolor{white}0.893 & \cellcolor{white}0.886 & \cellcolor{white}0.779 & \cellcolor{white}0.741 & \cellcolor{white}0.302 & \multirow{2}{*}{NetworkX} & \multirow{2}{6.45cm}{Two non-overlapping 50 node complete subgraphs connected by a 50 node path.}\\
 & & & & \cellcolor{ourMethodColor}Layered & \cellcolor{ourMethodColor}108 ms & \cellcolor{ourMethodColor}35.1 ms & \cellcolor{ourMethodColor}529 ms & \cellcolor{LCMCColorDark}\textbf{12} & \cellcolor{LCMCColorDark}\textbf{0.350} & \cellcolor{ourMethodColor}0.913 & \cellcolor{ourMethodColor}0.914 & \cellcolor{ourMethodColor}0.780 & \cellcolor{ourMethodColor}0.741 & \cellcolor{ourMethodColor}0.234 &  & \\
\arrayrulecolor{hlineColor}\hline
\multirow{2}{*}{\textsc{bcsstk}} & \multirow{2}{*}{110} & \multirow{2}{*}{254} & \multirow{2}{*}{13.4} & \cellcolor{white}Random & \cellcolor{white}32.6 ms & \cellcolor{white}12.0 ms & \cellcolor{white}1.41 s & \cellcolor{LCMCColorLight}115 & \cellcolor{LCMCColorLight}\textbf{0.582} & \cellcolor{white}0.972 & \cellcolor{white}0.897 & \cellcolor{white}0.984 & \cellcolor{white}0.685 & \cellcolor{white}0.318 & UF Sparse & \multirow{2}{6.45cm}{The stiffness matrix used in structural simulation.}\\
 & & & & \cellcolor{ourMethodColor}Radial & \cellcolor{ourMethodColor}16.8 ms & \cellcolor{ourMethodColor}11.9 ms & \cellcolor{ourMethodColor}1.27 s & \cellcolor{LCMCColorDark}\textbf{106} & \cellcolor{LCMCColorDark}0.575 & \cellcolor{ourMethodColor}0.973 & \cellcolor{ourMethodColor}0.900 & \cellcolor{ourMethodColor}0.990 & \cellcolor{ourMethodColor}0.705 & \cellcolor{ourMethodColor}0.386 & Matrix Collection & \\
\arrayrulecolor{hlineColor}\hline
\multirow{2}{*}{\textsc{bio-diseasome}} & \multirow{2}{*}{516} & \multirow{2}{*}{1188} & \multirow{2}{*}{11.6} & \cellcolor{white}Random & \cellcolor{white}76.1 ms & \cellcolor{white}42.4 ms & \cellcolor{white}3.64 s & \cellcolor{LCMCColorLight}84 & \cellcolor{LCMCColorLight}0.407 & \cellcolor{white}0.909 & \cellcolor{white}0.845 & \cellcolor{white}0.991 & \cellcolor{white}0.738 & \cellcolor{white}0.440 & Network & \multirow{2}{6.45cm}{Graph of links for scientifically known disorder-gene association \cite{goh2007human}.}\\
 & & & & \cellcolor{ourMethodColor}Radial & \cellcolor{ourMethodColor}82.7 ms & \cellcolor{ourMethodColor}42.8 ms & \cellcolor{ourMethodColor}1.24 s & \cellcolor{LCMCColorDark}\textbf{27} & \cellcolor{LCMCColorDark}\textbf{0.478} & \cellcolor{ourMethodColor}0.964 & \cellcolor{ourMethodColor}0.957 & \cellcolor{ourMethodColor}0.992 & \cellcolor{ourMethodColor}0.744 & \cellcolor{ourMethodColor}0.436 & Repository & \\
\arrayrulecolor{hlineColor}\hline
\multirow{2}{*}{\textsc{circular ladder graph (100)}} & \multirow{2}{*}{200} & \multirow{2}{*}{300} & \multirow{2}{*}{51.0} & \cellcolor{white}Random & \cellcolor{white}33.7 ms & \cellcolor{white}16.6 ms & \cellcolor{white}1.77 s & \cellcolor{LCMCColorLight}105 & \cellcolor{LCMCColorLight}0.455 & \cellcolor{white}0.956 & \cellcolor{white}0.805 & \cellcolor{white}0.995 & \cellcolor{white}0.750 & \cellcolor{white}0.426 & \multirow{2}{*}{NetworkX} & \multirow{2}{6.45cm}{Pairs of nodes are connected in a ladder like pattern and the ladder forms a large cycle.}\\
 & & & & \cellcolor{ourMethodColor}Radial & \cellcolor{ourMethodColor}20.2 ms & \cellcolor{ourMethodColor}16.4 ms & \cellcolor{ourMethodColor}1.11 s & \cellcolor{LCMCColorDark}\textbf{66} & \cellcolor{LCMCColorDark}\textbf{0.814} & \cellcolor{ourMethodColor}0.999 & \cellcolor{ourMethodColor}0.998 & \cellcolor{ourMethodColor}1.000 & \cellcolor{ourMethodColor}0.427 & \cellcolor{ourMethodColor}0.557 &  & \\
\arrayrulecolor{hlineColor}\hline
\multirow{2}{*}{\textsc{connected cavemen (10,20)}} & \multirow{2}{*}{200} & \multirow{2}{*}{1900} & \multirow{2}{*}{11.0} & \cellcolor{white}Random & \cellcolor{white}85.3 ms & \cellcolor{white}31.9 ms & \cellcolor{white}1.33 s & \cellcolor{LCMCColorLight}39 & \cellcolor{LCMCColorLight}0.493 & \cellcolor{white}0.985 & \cellcolor{white}0.986 & \cellcolor{white}0.961 & \cellcolor{white}0.766 & \cellcolor{white}0.050 & \multirow{2}{*}{NetworkX} & \multirow{2}{6.45cm}{A graph of 10 cliques of 20 nodes each.}\\
 & & & & \cellcolor{ourMethodColor}Radial & \cellcolor{ourMethodColor}93.6 ms & \cellcolor{ourMethodColor}32.9 ms & \cellcolor{ourMethodColor}915 ms & \cellcolor{LCMCColorDark}\textbf{25} & \cellcolor{LCMCColorDark}0.496 & \cellcolor{ourMethodColor}0.986 & \cellcolor{ourMethodColor}0.985 & \cellcolor{ourMethodColor}0.961 & \cellcolor{ourMethodColor}0.768 & \cellcolor{ourMethodColor}0.045 &  & \\
\arrayrulecolor{hlineColor}\hline
\multirow{2}{*}{\textsc{engymes-g123}} & \multirow{2}{*}{90} & \multirow{2}{*}{127} & \multirow{2}{*}{10.3} & \cellcolor{white}Random & \cellcolor{white}13.9 ms & \cellcolor{white}13.9 ms & \cellcolor{white}1.7 s & \cellcolor{LCMCColorLight}121 & \cellcolor{LCMCColorLight}0.374 & \cellcolor{white}0.891 & \cellcolor{white}0.808 & \cellcolor{white}0.979 & \cellcolor{white}0.788 & \cellcolor{white}0.611 & Network & \multirow{2}{6.45cm}{Graph of cheminformatics.}\\
 & & & & \cellcolor{ourMethodColor}Layered & \cellcolor{ourMethodColor}15.5 ms & \cellcolor{ourMethodColor}14.3 ms & \cellcolor{ourMethodColor}1.2 s & \cellcolor{LCMCColorDark}\textbf{83} & \cellcolor{LCMCColorDark}\textbf{0.436} & \cellcolor{ourMethodColor}0.937 & \cellcolor{ourMethodColor}0.792 & \cellcolor{ourMethodColor}0.986 & \cellcolor{ourMethodColor}0.733 & \cellcolor{ourMethodColor}0.592 & Repository & \\
\arrayrulecolor{hlineColor}\hline
\multirow{2}{*}{\textsc{ladder}} & \multirow{2}{*}{20} & \multirow{2}{*}{28} & \multirow{2}{*}{8.0} & \cellcolor{white}Random & \cellcolor{white}3.6 ms & \cellcolor{white}15.3 ms & \cellcolor{white}1.71 s & \cellcolor{LCMCColorLight}111 & \cellcolor{LCMCColorLight}0.377 & \cellcolor{white}0.932 & \cellcolor{white}0.929 & \cellcolor{white}0.975 & \cellcolor{white}0.834 & \cellcolor{white}0.507 & \multirow{2}{*}{NetworkX} & \multirow{2}{6.45cm}{Pairs of nodes connected in a ladder like pattern.}\\
 & & & & \cellcolor{ourMethodColor}Radial & \cellcolor{ourMethodColor}2.5 ms & \cellcolor{ourMethodColor}15.1 ms & \cellcolor{ourMethodColor}1.21 s & \cellcolor{LCMCColorDark}\textbf{80} & \cellcolor{LCMCColorDark}\textbf{0.402} & \cellcolor{ourMethodColor}0.950 & \cellcolor{ourMethodColor}0.947 & \cellcolor{ourMethodColor}1.000 & \cellcolor{ourMethodColor}1.000 & \cellcolor{ourMethodColor}0.602 &  & \\
\arrayrulecolor{hlineColor}\hline
\multirow{2}{*}{\textsc{lobster}} & \multirow{2}{*}{300} & \multirow{2}{*}{299} & \multirow{2}{*}{77.5} & \cellcolor{white}Random & \cellcolor{white}21.2 ms & \cellcolor{white}20.9 ms & \cellcolor{white}3.79 s & \cellcolor{LCMCColorLight}180 & \cellcolor{LCMCColorLight}0.210 & \cellcolor{white}0.916 & \cellcolor{white}0.869 & \cellcolor{white}0.992 & \cellcolor{white}0.723 & \cellcolor{white}0.857 & \multirow{2}{*}{NetworkX} & \multirow{2}{6.45cm}{A tree that forms a caterpillar graph with the removal of a leaf \cite{golomb1996polyominoes}.}\\
 & & & & \cellcolor{ourMethodColor}Layered & \cellcolor{ourMethodColor}21.1 ms & \cellcolor{ourMethodColor}19.9 ms & \cellcolor{ourMethodColor}1.12 s & \cellcolor{LCMCColorDark}\textbf{55} & \cellcolor{LCMCColorDark}\textbf{0.688} & \cellcolor{ourMethodColor}0.996 & \cellcolor{ourMethodColor}0.997 & \cellcolor{ourMethodColor}1.000 & \cellcolor{ourMethodColor}1.000 & \cellcolor{ourMethodColor}0.807 &  & \\
\arrayrulecolor{hlineColor}\hline
\multirow{2}{*}{\textsc{lollipop (10,50)}} & \multirow{2}{*}{60} & \multirow{2}{*}{95} & \multirow{2}{*}{40.2} & \cellcolor{white}Random & \cellcolor{white}6.9 ms & \cellcolor{white}14.3 ms & \cellcolor{white}1.37 s & \cellcolor{LCMCColorLight}95 & \cellcolor{LCMCColorLight}0.444 & \cellcolor{white}0.894 & \cellcolor{white}0.833 & \cellcolor{white}0.909 & \cellcolor{white}0.803 & \cellcolor{white}0.770 & \multirow{2}{*}{NetworkX} & \multirow{2}{6.45cm}{The shape of a lollipop with a clique of 10 nodes connected to a thread of 50 nodes.}\\
 & & & & \cellcolor{ourMethodColor}Layered & \cellcolor{ourMethodColor}7.4 ms & \cellcolor{ourMethodColor}14.4 ms & \cellcolor{ourMethodColor}482 ms & \cellcolor{LCMCColorDark}\textbf{33} & \cellcolor{LCMCColorDark}\textbf{0.755} & \cellcolor{ourMethodColor}0.996 & \cellcolor{ourMethodColor}0.995 & \cellcolor{ourMethodColor}0.921 & \cellcolor{ourMethodColor}0.813 & \cellcolor{ourMethodColor}0.551 &  & \\
\arrayrulecolor{hlineColor}\hline
\multirow{2}{*}{\textsc{map of science}} & \multirow{2}{*}{554} & \multirow{2}{*}{2276} & \multirow{2}{*}{12.6} & \cellcolor{white}Random & \cellcolor{white}129 ms & \cellcolor{white}56.4 ms & \cellcolor{white}5.77 s & \cellcolor{LCMCColorLight}100 & \cellcolor{LCMCColorLight}0.361 & \cellcolor{white}0.956 & \cellcolor{white}0.930 & \cellcolor{white}0.986 & \cellcolor{white}0.666 & \cellcolor{white}0.135 & \multirow{2}{*}{\cite{borner2012design}} & \multirow{2}{6.45cm}{Graph of science sub-disciplines and cross-disciplinary co-authorships.}\\
 & & & & \cellcolor{ourMethodColor}Radial & \cellcolor{ourMethodColor}153 ms & \cellcolor{ourMethodColor}58.6 ms & \cellcolor{ourMethodColor}4.14 s & \cellcolor{LCMCColorDark}\textbf{68} & \cellcolor{LCMCColorDark}\textbf{0.402} & \cellcolor{ourMethodColor}0.977 & \cellcolor{ourMethodColor}0.957 & \cellcolor{ourMethodColor}0.990 & \cellcolor{ourMethodColor}0.687 & \cellcolor{ourMethodColor}0.145 &  & \\
\arrayrulecolor{hlineColor}\hline
\multirow{2}{*}{\textsc{random geometric (400,0,1)}} & \multirow{2}{*}{400} & \multirow{2}{*}{2263} & \multirow{2}{*}{13.5} & \cellcolor{white}Random & \cellcolor{white}107 ms & \cellcolor{white}51.8 ms & \cellcolor{white}4.56 s & \cellcolor{LCMCColorLight}86 & \cellcolor{LCMCColorLight}0.546 & \cellcolor{white}0.983 & \cellcolor{white}0.925 & \cellcolor{white}0.990 & \cellcolor{white}0.665 & \cellcolor{white}0.071 & \multirow{2}{*}{NetworkX} & \multirow{2}{6.45cm}{Nodes randomly placed in a cube and connectted if their distance is less than 0.1.}\\
 & & & & \cellcolor{ourMethodColor}Radial & \cellcolor{ourMethodColor}113 ms & \cellcolor{ourMethodColor}53.2 ms & \cellcolor{ourMethodColor}2.83 s & \cellcolor{LCMCColorDark}\textbf{51} & \cellcolor{LCMCColorDark}\textbf{0.622} & \cellcolor{ourMethodColor}0.987 & \cellcolor{ourMethodColor}0.963 & \cellcolor{ourMethodColor}0.993 & \cellcolor{ourMethodColor}0.660 & \cellcolor{ourMethodColor}0.076 &  & \\
\arrayrulecolor{hlineColor}\hline
\multirow{2}{*}{\textsc{retweet}} & \multirow{2}{*}{96} & \multirow{2}{*}{117} & \multirow{2}{*}{7.3} & \cellcolor{white}Random & \cellcolor{white}9.7 ms & \cellcolor{white}14.0 ms & \cellcolor{white}1.44 s & \cellcolor{LCMCColorLight}102 & \cellcolor{LCMCColorLight}0.440 & \cellcolor{white}0.936 & \cellcolor{white}0.905 & \cellcolor{white}0.985 & \cellcolor{white}0.678 & \cellcolor{white}0.781 & Network & \multirow{2}{6.45cm}{Network of twitter users as nodes and retweets as edges.}\\
 & & & & \cellcolor{ourMethodColor}Radial & \cellcolor{ourMethodColor}9.8 ms & \cellcolor{ourMethodColor}14.0 ms & \cellcolor{ourMethodColor}977 ms & \cellcolor{LCMCColorDark}\textbf{69} & \cellcolor{LCMCColorDark}\textbf{0.466} & \cellcolor{ourMethodColor}0.935 & \cellcolor{ourMethodColor}0.900 & \cellcolor{ourMethodColor}0.989 & \cellcolor{ourMethodColor}0.736 & \cellcolor{ourMethodColor}0.778 & Repository & \\
\arrayrulecolor{hlineColor}\hline
\multirow{2}{*}{\textsc{science collaboration network}} & \multirow{2}{*}{379} & \multirow{2}{*}{914} & \multirow{2}{*}{12.1} & \cellcolor{white}Random & \cellcolor{white}61.6 ms & \cellcolor{white}31.1 ms & \cellcolor{white}2.52 s & \cellcolor{LCMCColorLight}79 & \cellcolor{LCMCColorLight}0.433 & \cellcolor{white}0.947 & \cellcolor{white}0.887 & \cellcolor{white}0.992 & \cellcolor{white}0.741 & \cellcolor{white}0.352 & \multirow{2}{*}{\cite{newman2001structure}} & \multirow{2}{6.45cm}{Nodes are network theory publishing scientists and edges are collaborations between them.}\\
 & & & & \cellcolor{ourMethodColor}Layered & \cellcolor{ourMethodColor}50.6 ms & \cellcolor{ourMethodColor}36.2 ms & \cellcolor{ourMethodColor}2.4 s & \cellcolor{LCMCColorDark}\textbf{65} & \cellcolor{LCMCColorDark}\textbf{0.500} & \cellcolor{ourMethodColor}0.972 & \cellcolor{ourMethodColor}0.960 & \cellcolor{ourMethodColor}0.994 & \cellcolor{ourMethodColor}0.749 & \cellcolor{ourMethodColor}0.356 &  & \\
\arrayrulecolor{hlineColor}\hline
\multirow{2}{*}{\textsc{watts strogatz (100,5,0,0.5)}} & \multirow{2}{*}{100} & \multirow{2}{*}{200} & \multirow{2}{*}{12.3} & \cellcolor{white}Random & \cellcolor{white}24.7 ms & \cellcolor{white}13.1 ms & \cellcolor{white}1.75 s & \cellcolor{LCMCColorLight}131 & \cellcolor{LCMCColorLight}0.479 & \cellcolor{white}0.927 & \cellcolor{white}0.882 & \cellcolor{white}0.989 & \cellcolor{white}0.687 & \cellcolor{white}0.199 & \multirow{2}{*}{NetworkX} & \multirow{2}{6.45cm}{A small world graph that satisifies the Watts-Strogatz model \cite{watts1998collective}.}\\
 & & & & \cellcolor{ourMethodColor}Layered & \cellcolor{ourMethodColor}13.9 ms & \cellcolor{ourMethodColor}13.8 ms & \cellcolor{ourMethodColor}470 ms & \cellcolor{LCMCColorDark}\textbf{33} & \cellcolor{LCMCColorDark}\textbf{0.611} & \cellcolor{ourMethodColor}0.976 & \cellcolor{ourMethodColor}0.941 & \cellcolor{ourMethodColor}0.991 & \cellcolor{ourMethodColor}0.646 & \cellcolor{ourMethodColor}0.196 &  & \\
\arrayrulecolor{hlineColor}\hline

    \end{tabular}}
\end{table*}

\subsection{Implementation}

We have implemented our approach in JavaScript and D3.js v5 using the base implementation of D3.js force-directed layout with all standard settings. To initialize the layout, our code provides xy-coordinates to all nodes before the D3.js force-directed layout simulation takes control of the data. Modifications to the graph forces are done by adding new forces to the D3.js layout simulation. All experiments use the default D3.js stopping criteria for computation.
A demo version of our approach 
is at \demoURL, 
and our source code is available at \sourceURL.

\subsection{Datasets}
\label{sec.eval.data}

We have tested 32 datasets that include a mix of synthetic and real-world datasets, acquired from sources including the Network Repository~\cite{rossi2015network}, NetworkX~\cite{hagberg2008exploring}, BioSNAP~\cite{biosnapnets}, and the UF Sparse Matrix Collection~\cite{davis2011university}. The graphs are evenly divided into 16 dense and 16 sparse graphs, based upon their average node eccentricity (ECC)\footnote{Eccentricity is the maximum shortest path distance from a given node.}. A summary of graphs found in the paper can be seen in \autoref{tab:dense} and~\autoref{tab:sparse}. Further, as a practical matter, interactivity of graph visualizations begins to degrade at $\sim1000$ nodes in D3.js. Therefore, we differentiate larger graphs as those where $|N|>1000$. Graphs not in the paper can be found in a comprehensive table of results included in our supplemental materials and in our demo.

All graphs are colored using the D3.js plasma color map (0~\scalebox{-1}[1]{\includegraphics[width=40pt, height=5pt]{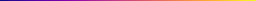}}~1) of their normalized node valence. The only exception is the \textsc{map of science} dataset (see \autoref{fig:sparse:map}), which is colored using a categorical color map.

\subsection{Evaluation Metrics}
\label{sec.eval.metrics}

Layout algorithms are often optimized considering aesthetic criteria. 
Purchase~\cite{purchase1997aesthetic} worked on various aesthetic criteria of importance and priority and showed that minimizing the number of edge crossings serves as critical aesthetic quality. Beck et al.~\cite{beck2009towards} defined several aesthetic criteria that ease designing, comparing, and evaluating different dynamic visualizations, including general aesthetic criteria, dynamic aesthetic criteria, and aesthetic scalability criteria. We use several criteria, including time ($T_*$), convergence ($C_*$), and layout quality ($Q_*$).

Our main goal is to measure whether global structures overlap with one another. To identify those overlaps, the primary measure we consider is that of co-ranking. Co-ranking compares the $k$-neighborhoods of a high-dimensional space, in our case defined by the unweighted shortest path distance in the graph, with a low-dimensional embedding, the Euclidean distance between nodes on the image. We use several meta-criteria on the co-ranking.

\begin{description}[noitemsep,itemsep=2pt,leftmargin=7pt,labelindent=10pt,itemindent=0pt]
\item[Local Continuity Meta Criterion] ($Q_{LCMC}$/$C_{LCMC}$) measures the ranked order overlap of $k$-neighborhoods for a range $[1,k]$ and averages them~\cite{chen2009local}. To ease comparisons, we fix $k=20$. $Q_{LCMC}$ is normalized such that $Q_{LCMC}\in[-1,1]$, where larger is better and negative values imply opposite ordering. We also utilize the convergence $C_{LCMC}$, which is the iteration number when $Q_{LCMC}$ is within 0.01 of the final value (after 300 iterations of force calculations). 
\item[Trustworthiness] ($Q_{trust}$) and \textbf{Continuity} ($Q_{cont}$) co-ranking meta criteria~\cite{venna2006local}, whose conclusions parallel LCMC, are also provided.   
\end{description}

We next consider three measures of performance that quantify the processing time.
\begin{description}[noitemsep,itemsep=2pt,leftmargin=7pt,labelindent=10pt,itemindent=0pt]
    \item[Initialization Time] ($T_{IT}$) is the time taken to initialize the force-directed layout simulation. For our approach, the timing includes the overhead to calculate the spanning tree and position nodes.
    \item[Average Iteration Time] ($T_{AIT}$)  is the average time required to calculate one iteration of the force-directed layout.
    \item[Total Time to LCMC Convergence] ($T_{LCMC}$) is the total time ($T_{IT}+T_{AIT}*C_{LCMC}$) required to reach the LCMC convergence criteria.    
\end{description}

Finally, we produce a set of well-established graph readability metrics~\cite{dunne2015readability,gove2018pays}. Our evaluation does not discuss them, but they are included for completeness.
\begin{description}[noitemsep,itemsep=2pt,leftmargin=7pt,labelindent=10pt,itemindent=0pt]
    \item[Edge Crossings] ($Q_{EC}$) measures the ratio of non-intersecting edges to total possible intersections. Graphs are generally considered more readable with fewer crossings. $Q_{EC}$ is normalized, such that $Q_{EC}\in[0,1]$, where larger is better.
    \item[Crossing Angle] ($Q_{CA}$) is the average deviation from the ideal crossing angle. If edges cross, it is preferable they cross at an ideal crossing angle of 70 degrees that makes their individual paths most visible. $Q_{CA}$ is normalized, such that $Q_{CA}\in[0,1]$, where larger is better.
    \item[Minimum Angular Resolution]  ($Q_{MAR}$) measures the average deviation of adjacent edge angles from the ideal angle ($360^{\circ}/degree(v_i)$ for any $v_i \in V$). For nodes with multiple edges, it is preferable to their egress be distributed around the node as much as possible. $Q_{MAR}$ is normalized, such that $Q_{MAR}\in[0,1]$, where larger is better.
\end{description}

\section{Results}

We evaluate our method's ability to untangle initial graph layouts, followed by untangling cycle structures.

\subsection{Untangling Initial Graph Layouts}
\label{sec:results:initial}

We evaluate our initial graph layout approach in terms of graph quality, convergence, and time. For the experiments, we initialized the graphs with either the standard D3.js random layout or our approach and let them run until D3.js stopped force calculations (after 300 iterations using the default settings).

\subsubsection{Layout Quality}

\autoref{tab:dense} and~\ref{tab:sparse} show the results for all quality metrics from \autoref{sec.eval.metrics}, except for the readability measures for the three largest graphs,  which were skipped due to very high computational costs. Although all metrics are available, we discuss only $Q_{LCMC}$.

\para{Layered vs.\ Radial} The results in \autoref{tab:dense} and~\ref{tab:sparse}  show only the layout method, layered or radial, which produced higher $Q_{LCMC}$. In many cases, the result between both methods is effectively identical. The results show that neither method is universally better and seemed to be graph dependent. Nevertheless, the results for both layout methods are available in the supplemental material.

\begin{figure*}[!ht]

\subfloat[\textsc{bio-celegans}\label{fig:dense:bio}]{
\begin{minipage}[t]{0.315\linewidth}
\rotatebox{90}{\tiny \hspace{13pt} Random}
\hspace{-8pt}
\begin{tikzonimage}[width=0.325\linewidth]{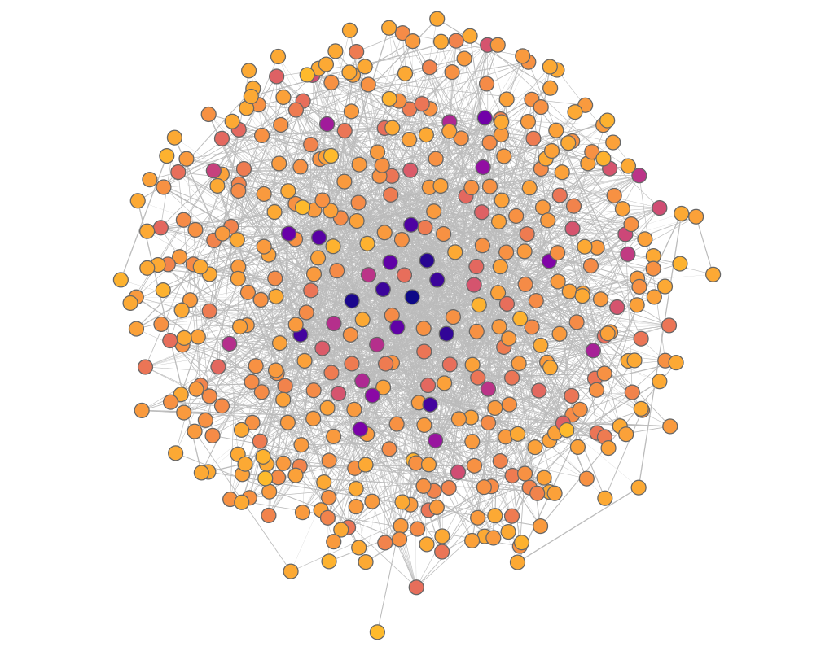}[image label]\node{$Q_{LCMC}$: 0.002};\end{tikzonimage}
\hspace{-2pt}
\begin{tikzonimage}[width=0.325\linewidth]{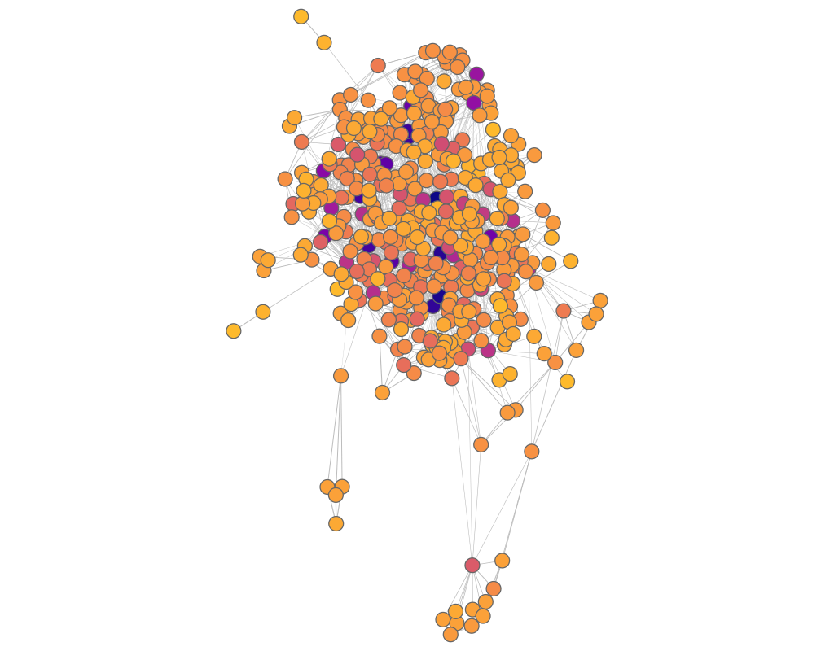}[image label]\node{$Q_{LCMC}$: 0.219};\end{tikzonimage}
\hspace{-1pt}
\rule[-45pt]{.5pt}{85pt}
\rotatebox{90}{\tiny \hspace{14pt} neato}
\hspace{-9pt}
\begin{tikzonimage}[width=0.28\linewidth]{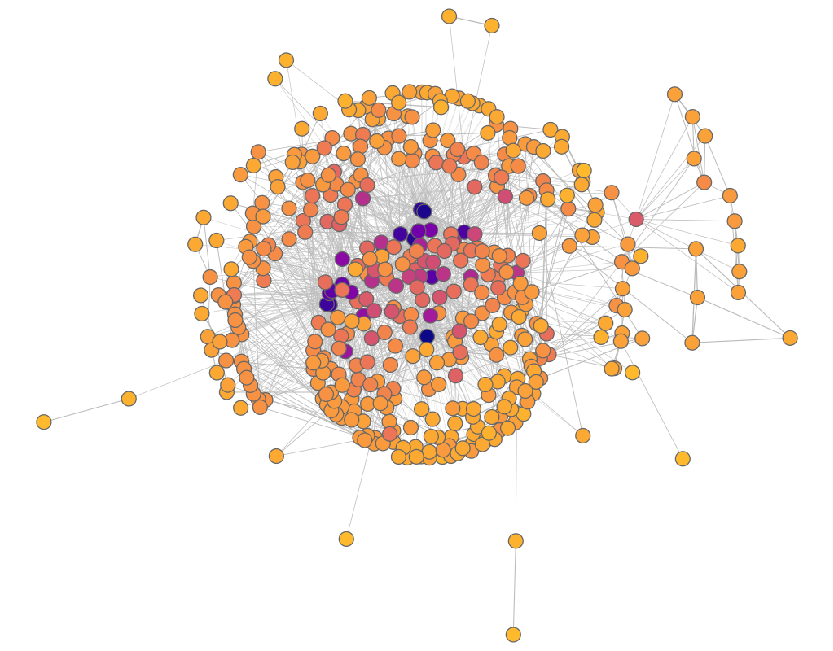}[image label]\node{$Q_{LCMC}$: 0.242};\end{tikzonimage}

\vspace{-45pt}
\rotatebox{90}{\tiny \hspace{13pt} Layered}
\hspace{-8pt}
\begin{tikzonimage}[width=0.325\linewidth]{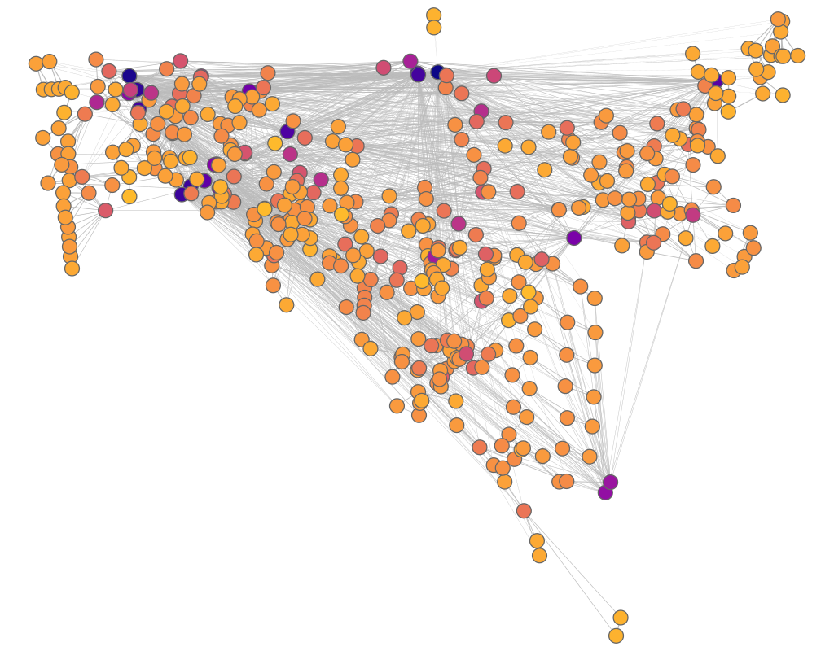}[image label]\node{$Q_{LCMC}$: 0.127};\end{tikzonimage}
\hspace{-2pt}
\begin{tikzonimage}[width=0.325\linewidth]{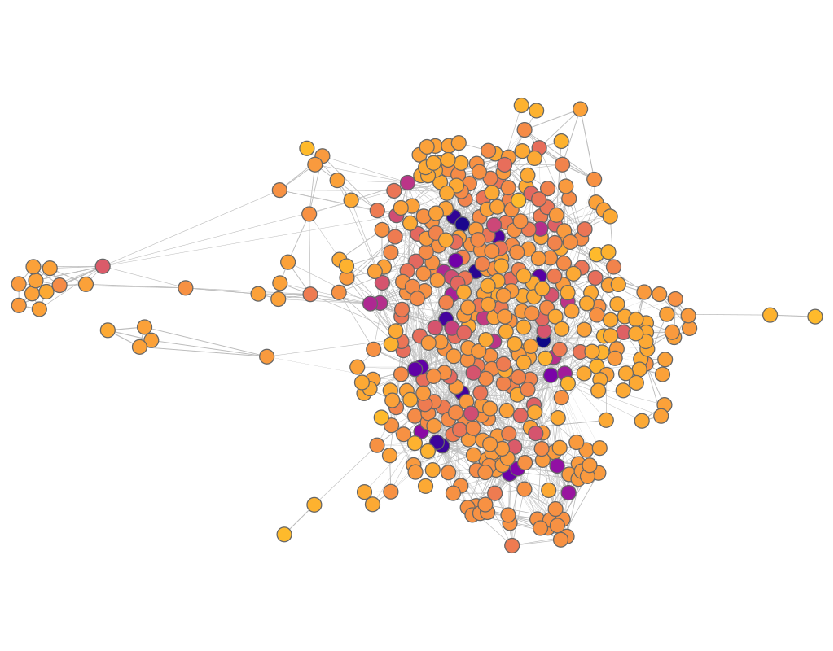}[image label]\node{$Q_{LCMC}$: 0.229};\end{tikzonimage}
\hspace{2.5pt}\rotatebox{90}{\tiny \hspace{15pt} sfdp}
\hspace{-9pt}
\begin{tikzonimage}[width=0.28\linewidth]{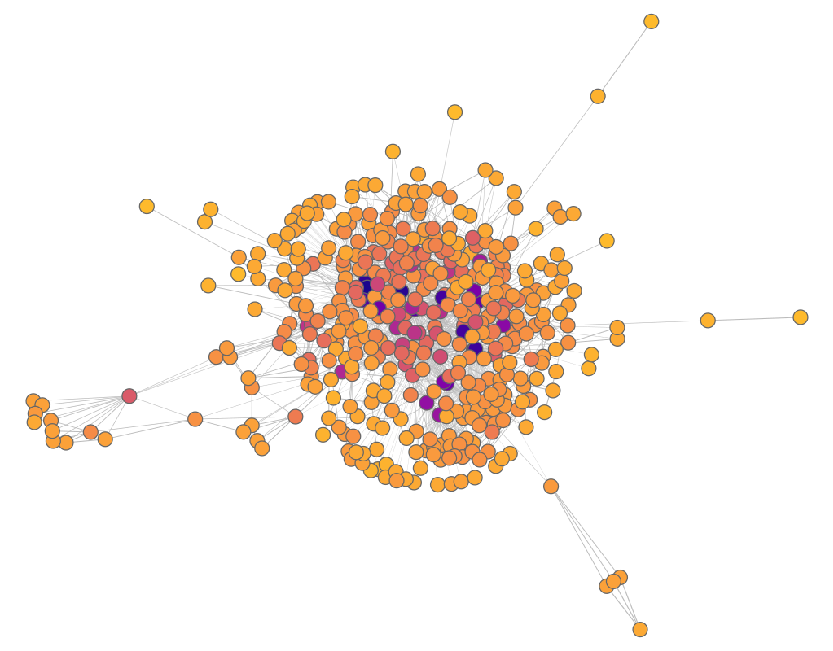}[image label]\node{$Q_{LCMC}$: 0.225};\end{tikzonimage}

{\footnotesize \hspace{23pt} Initial \hspace{33pt} Final}
\end{minipage}
}
\hfill
\subfloat[\textsc{bn-mouse-visual-cortex-2}\label{fig:dense:mouse}]{
\begin{minipage}[t]{0.315\linewidth}
\rotatebox{90}{\tiny \hspace{13pt} Random} \hspace{-8pt}
\begin{tikzonimage}[width=0.325\linewidth]{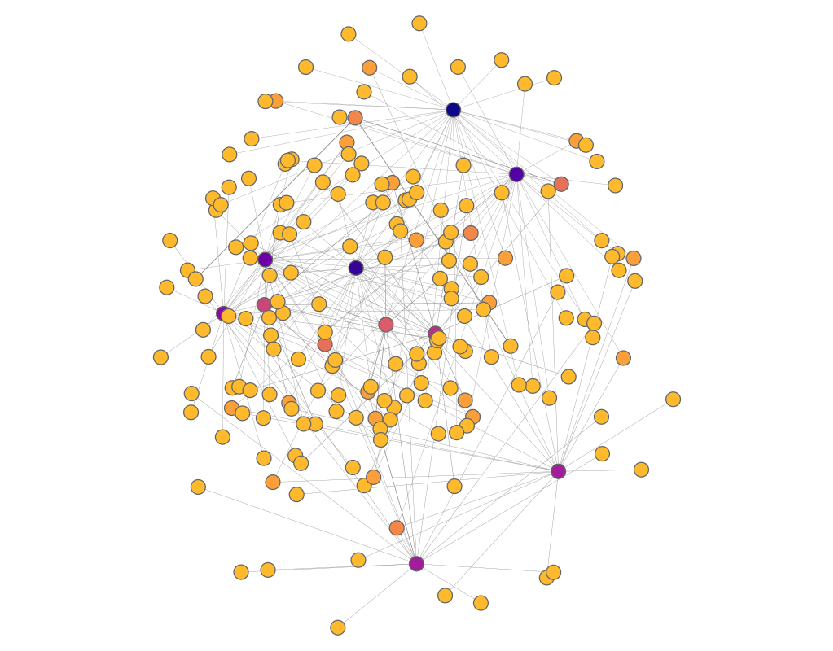}[image label]\node{$Q_{LCMC}$: 0.035};\end{tikzonimage}
\begin{tikzonimage}[width=0.325\linewidth]{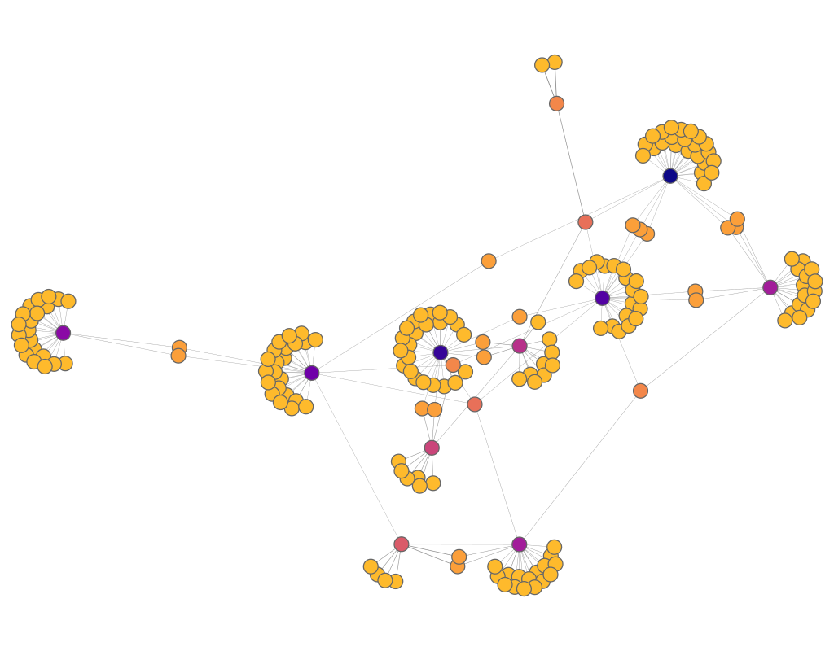}[image label]\node{$Q_{LCMC}$: 0.409};\end{tikzonimage}
\hspace{-1pt} \rule[-45pt]{.5pt}{85pt} \rotatebox{90}{\tiny \hspace{14pt} neato} \hspace{-9pt}
\begin{tikzonimage}[width=0.28\linewidth]{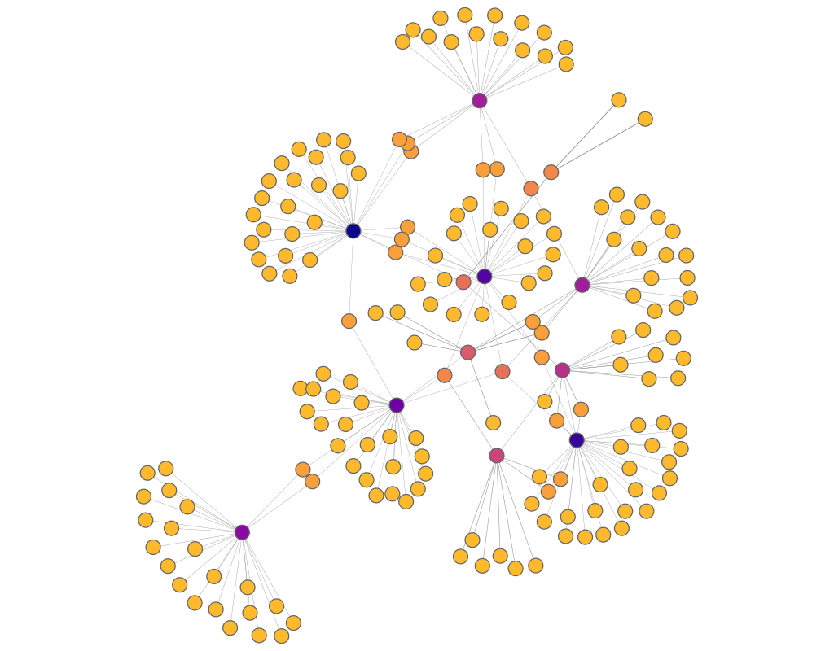}[image label]\node{$Q_{LCMC}$: 0.372};\end{tikzonimage}

\vspace{-45pt} \rotatebox{90}{\tiny \hspace{13pt} Layered} \hspace{-8pt}
\begin{tikzonimage}[width=0.325\linewidth]{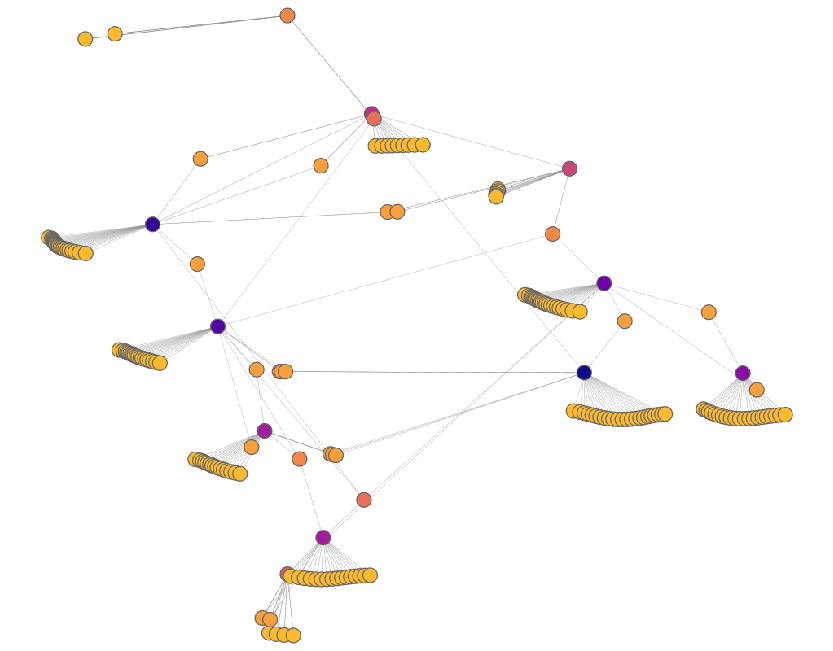}[image label]\node{$Q_{LCMC}$: 0.433};\end{tikzonimage}
\begin{tikzonimage}[width=0.325\linewidth]{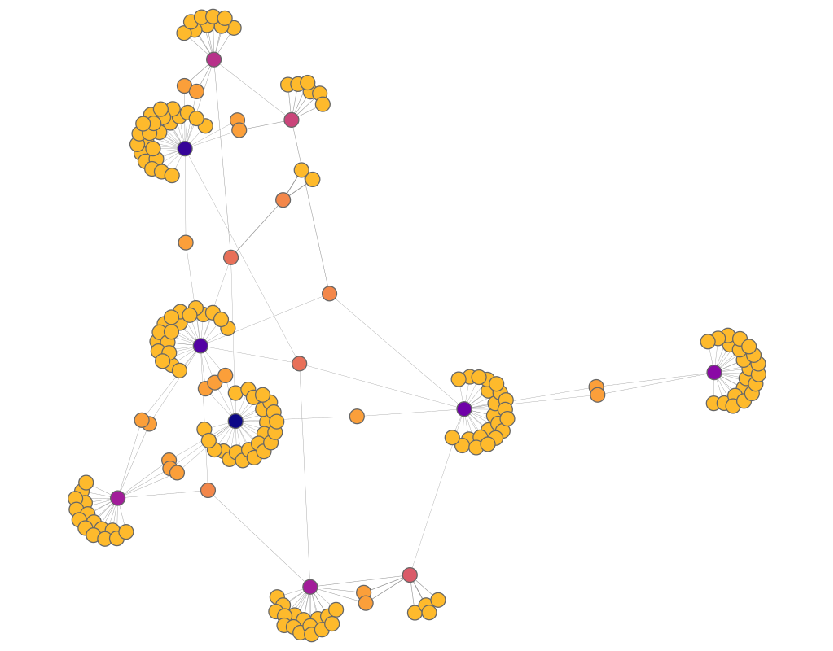}[image label]\node{$Q_{LCMC}$: 0.425};\end{tikzonimage}
\hspace{2.5pt}\rotatebox{90}{\tiny \hspace{14pt} sfdp} \hspace{-9pt}
\begin{tikzonimage}[width=0.28\linewidth]{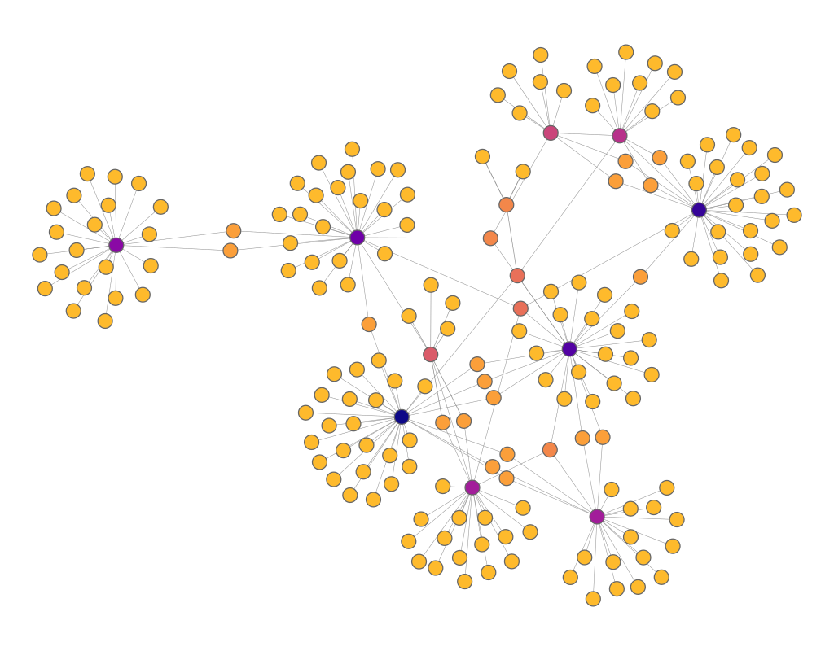}[image label]\node{$Q_{LCMC}$: 0.358};\end{tikzonimage}

{\footnotesize \hspace{23pt} Initial \hspace{33pt} Final}
\end{minipage}
}
\hfill
\subfloat[\textsc{dorogovtsev-goltsev-mendes}\label{fig:dense:dorogovtsev}]{
\begin{minipage}[t]{0.315\linewidth}
\rotatebox{90}{\tiny \hspace{13pt} Random} \hspace{-8pt}
\begin{tikzonimage}[width=0.325\linewidth]{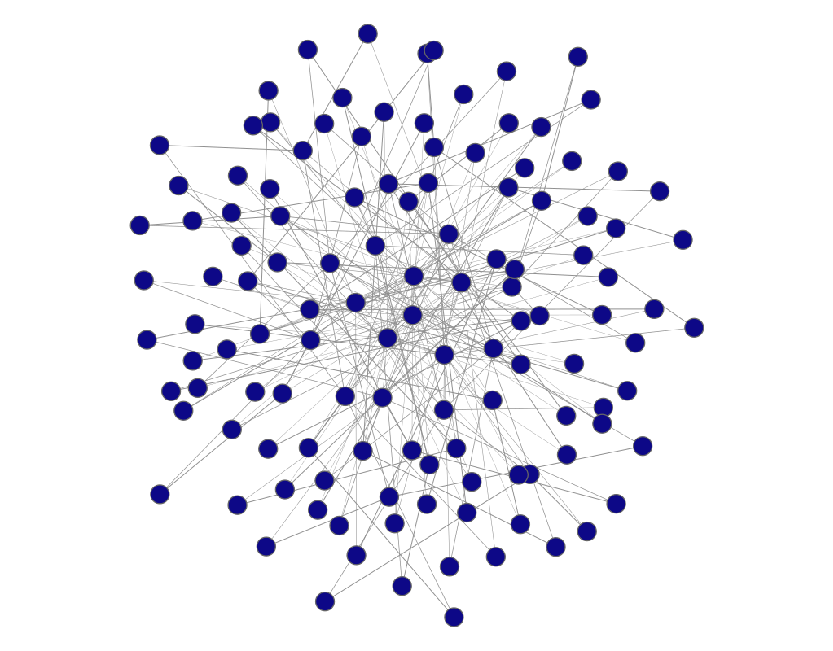}[image label]\node{$Q_{LCMC}$: 0.012};\end{tikzonimage}
\begin{tikzonimage}[width=0.325\linewidth]{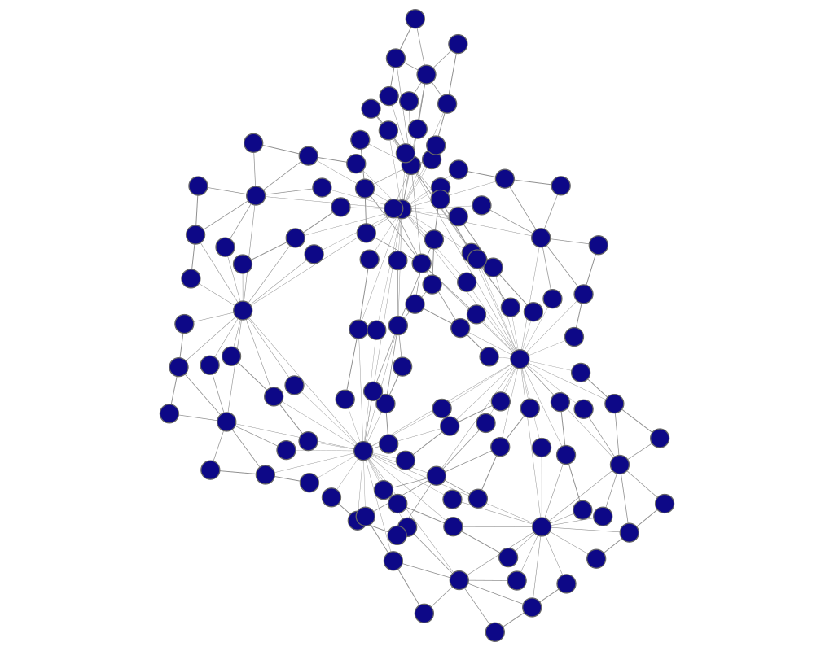}[image label]\node{$Q_{LCMC}$: 0.377};\end{tikzonimage}
\hspace{-1pt} \rule[-45pt]{.5pt}{85pt} \rotatebox{90}{\tiny \hspace{14pt} neato} \hspace{-9pt}
\begin{tikzonimage}[width=0.28\linewidth]{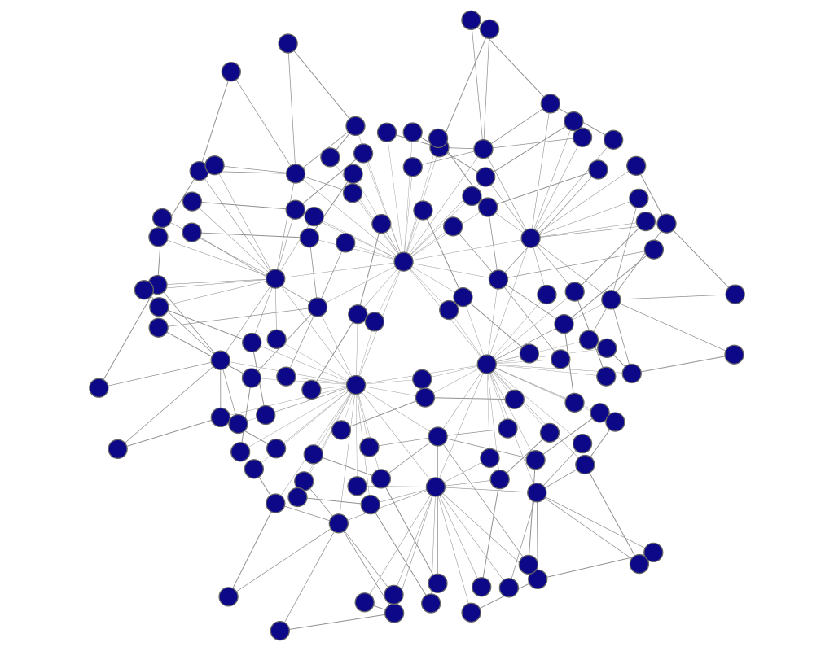}[image label]\node{$Q_{LCMC}$: 0.485};\end{tikzonimage}

\vspace{-45pt} \rotatebox{90}{\tiny \hspace{13pt} Layered} \hspace{-8pt}
\begin{tikzonimage}[width=0.325\linewidth]{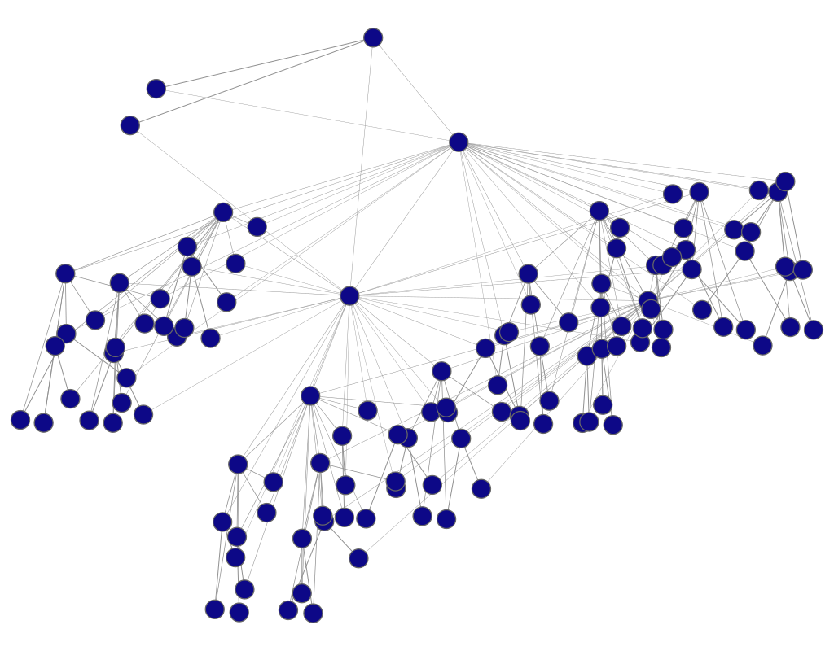}[image label]\node{$Q_{LCMC}$: 0.211};\end{tikzonimage}
\begin{tikzonimage}[width=0.325\linewidth]{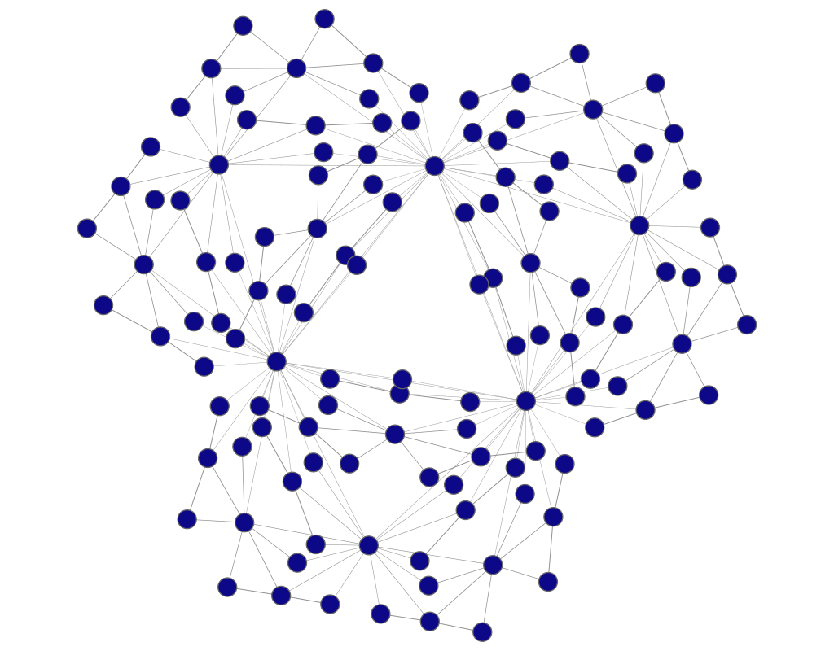}[image label]\node{$Q_{LCMC}$: 0.446};\end{tikzonimage}
\hspace{2.5pt}\rotatebox{90}{\tiny \hspace{14pt} sfdp} \hspace{-9pt}
\begin{tikzonimage}[width=0.28\linewidth]{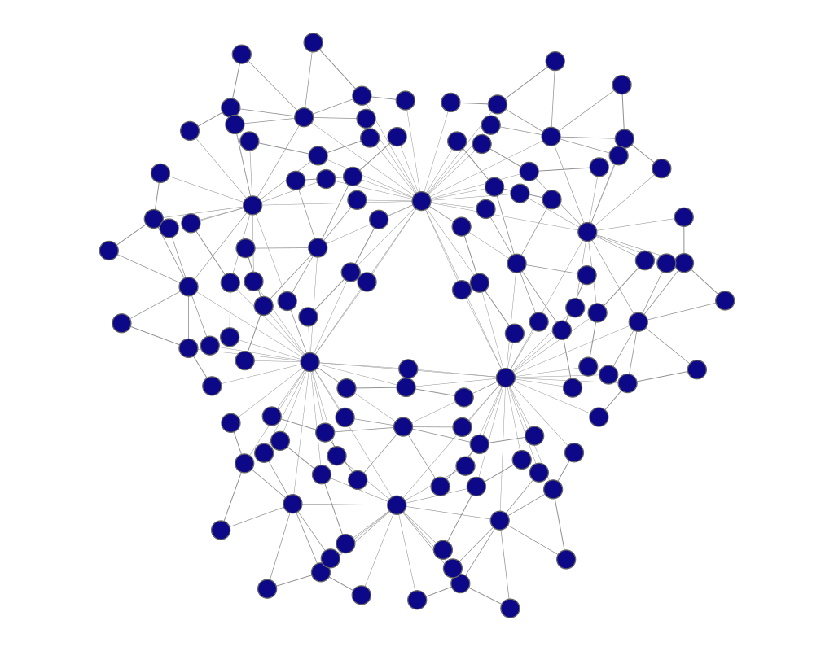}[image label]\node{$Q_{LCMC}$: 0.464};\end{tikzonimage}

{\footnotesize \hspace{23pt} Initial \hspace{33pt} Final}
\end{minipage}
}

\vspace{-3pt}
\subfloat[\textsc{duplicate divergence}\label{fig:dense:divergence}]{
\begin{minipage}[t]{0.315\linewidth}
\rotatebox{90}{\tiny \hspace{13pt} Random} \hspace{-8pt}
\begin{tikzonimage}[width=0.325\linewidth]{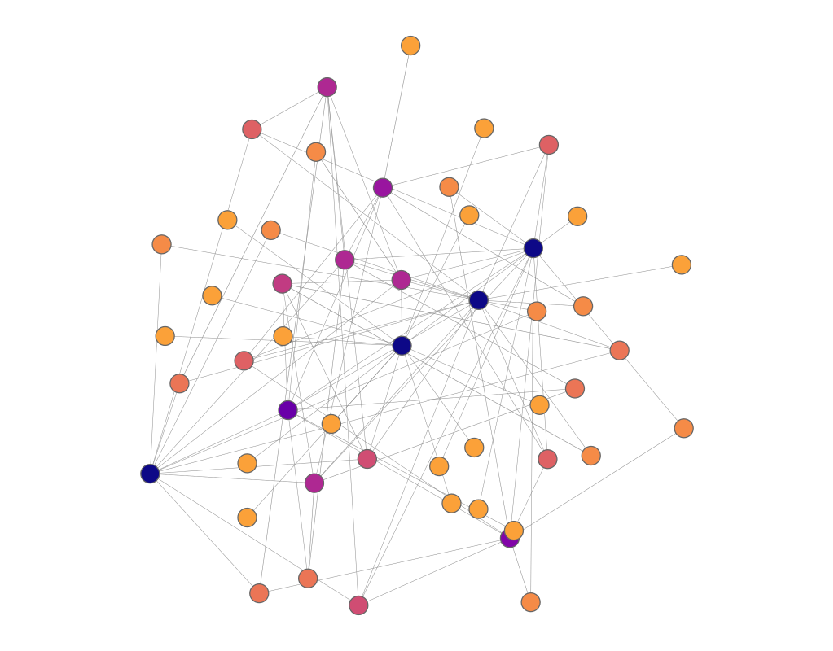}[image label]\node{$Q_{LCMC}$: 0.017};\end{tikzonimage}
\begin{tikzonimage}[width=0.325\linewidth]{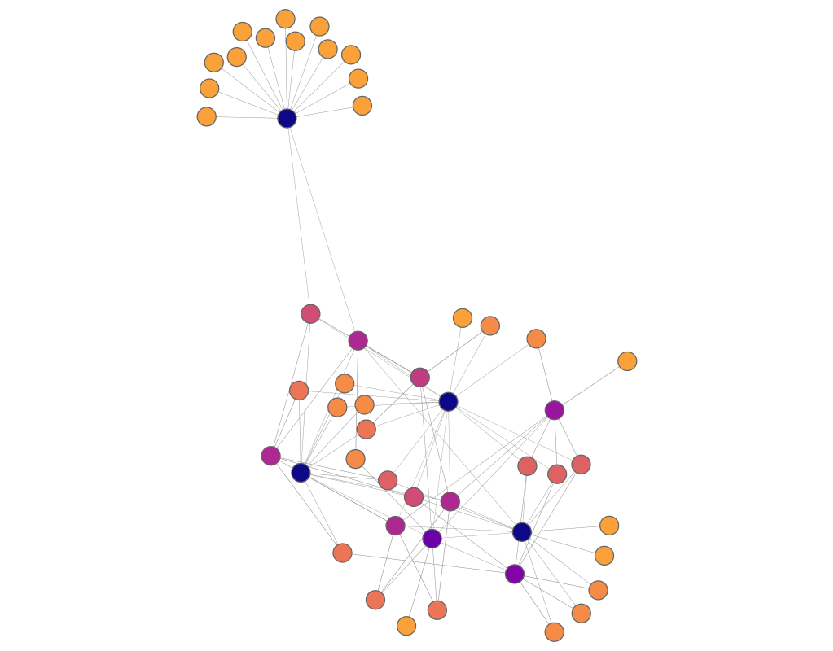}[image label]\node{$Q_{LCMC}$: 0.415};\end{tikzonimage}
\hspace{-1pt} \rule[-45pt]{.5pt}{85pt} \rotatebox{90}{\tiny \hspace{14pt} neato} \hspace{-9pt}
\begin{tikzonimage}[width=0.28\linewidth]{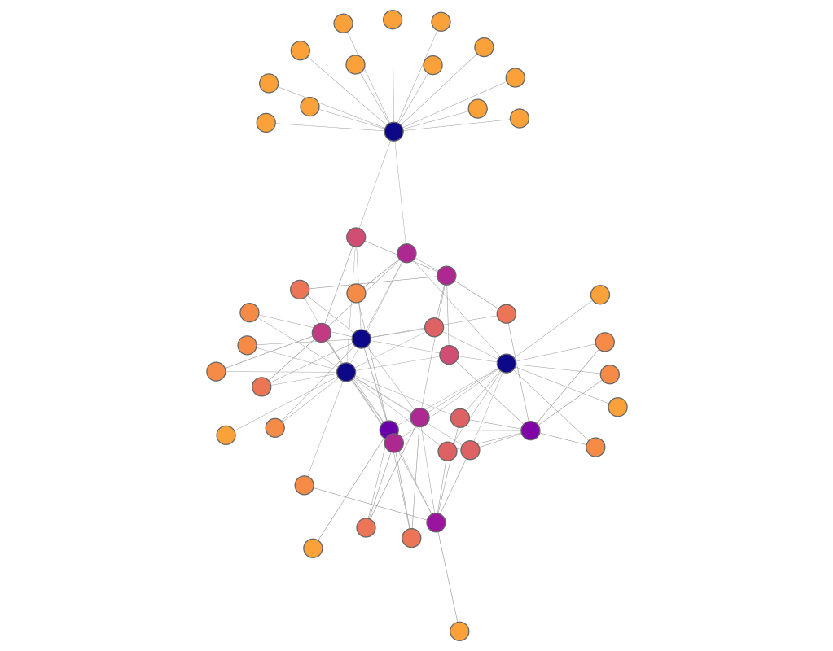}[image label]\node{$Q_{LCMC}$: 0.453};\end{tikzonimage}

\vspace{-45pt} \rotatebox{90}{\tiny \hspace{13pt} Layered} \hspace{-8pt}
\begin{tikzonimage}[width=0.325\linewidth]{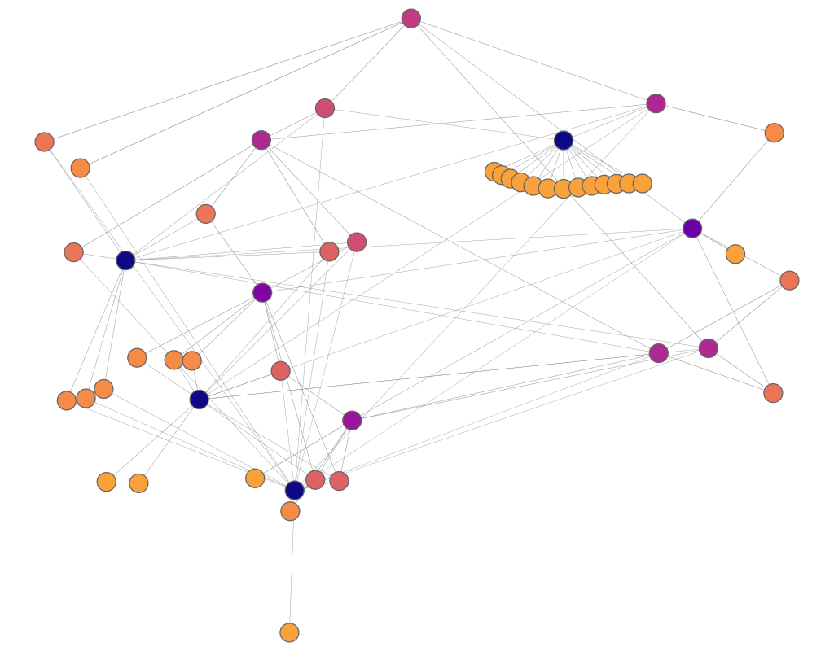}[image label]\node{$Q_{LCMC}$: 0.265};\end{tikzonimage}
\hspace{-2pt}
\begin{tikzonimage}[width=0.325\linewidth]{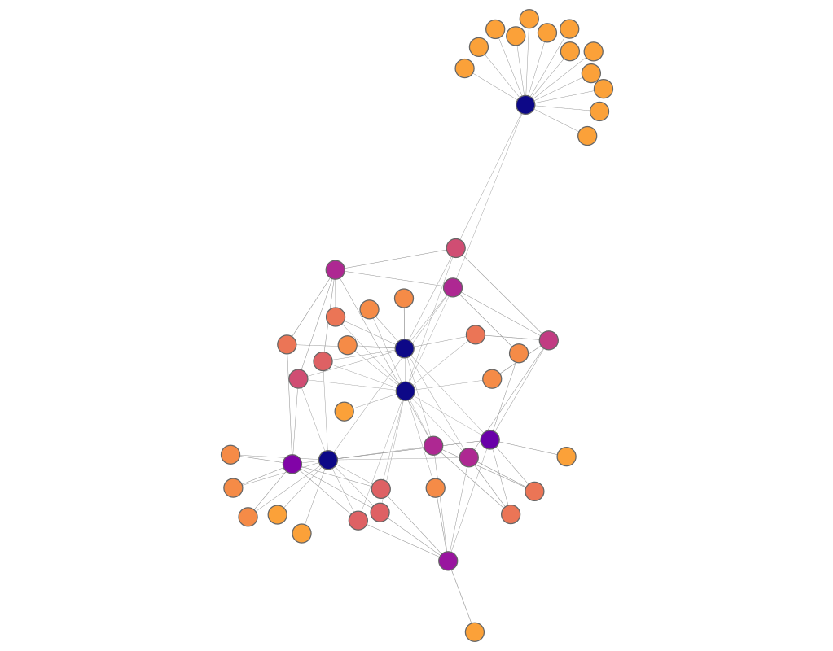}[image label]\node{$Q_{LCMC}$: 0.414};\end{tikzonimage}
\hspace{2.5pt}\rotatebox{90}{\tiny \hspace{14pt} sfdp} \hspace{-9pt}
\begin{tikzonimage}[width=0.28\linewidth]{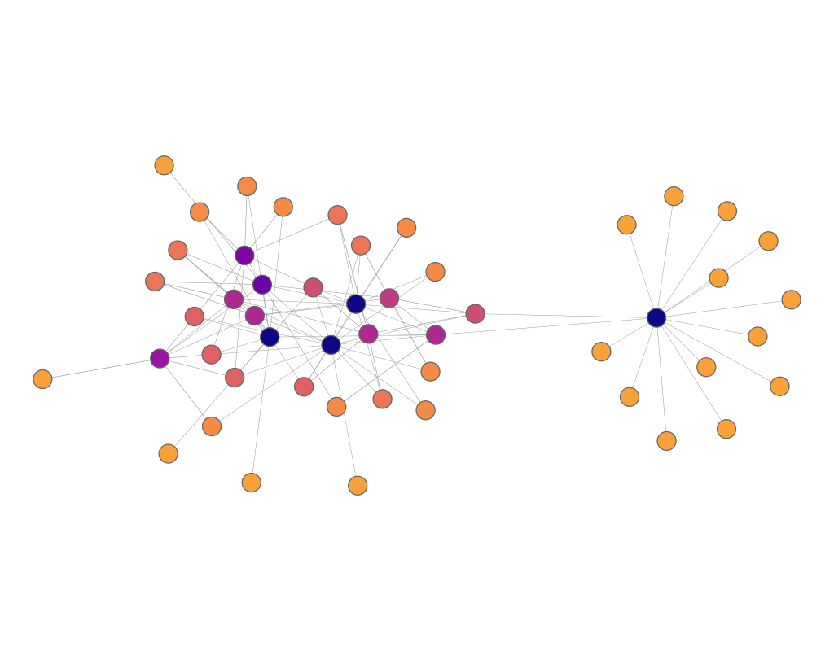}[image label]\node{$Q_{LCMC}$: 0.355};\end{tikzonimage}

{\footnotesize \hspace{23pt} Initial \hspace{33pt} Final}
\end{minipage}
}
\hfill
\subfloat[\textsc{enron email}\label{fig:dense:enron}]{
\begin{minipage}[t]{0.315\linewidth}
\rotatebox{90}{\tiny \hspace{13pt} Random} \hspace{-8pt}
\begin{tikzonimage}[width=0.325\linewidth]{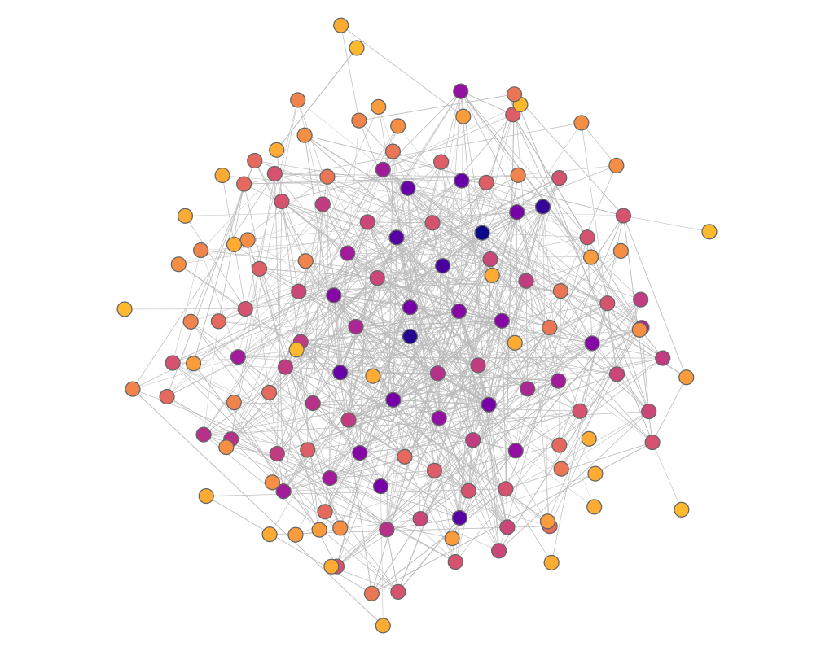}[image label]\node{$Q_{LCMC}$: 0.026};\end{tikzonimage}
\hspace{-2pt}
\begin{tikzonimage}[width=0.325\linewidth]{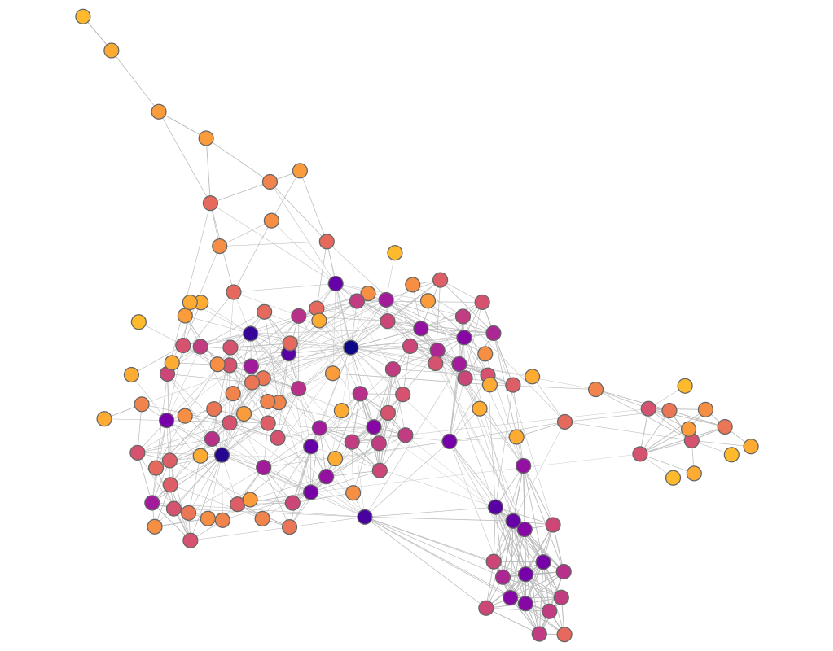}[image label]\node{$Q_{LCMC}$: 0.390};\end{tikzonimage}
\hspace{-1pt} \rule[-45pt]{.5pt}{85pt} \rotatebox{90}{\tiny \hspace{14pt} neato} \hspace{-9pt}
\begin{tikzonimage}[width=0.28\linewidth]{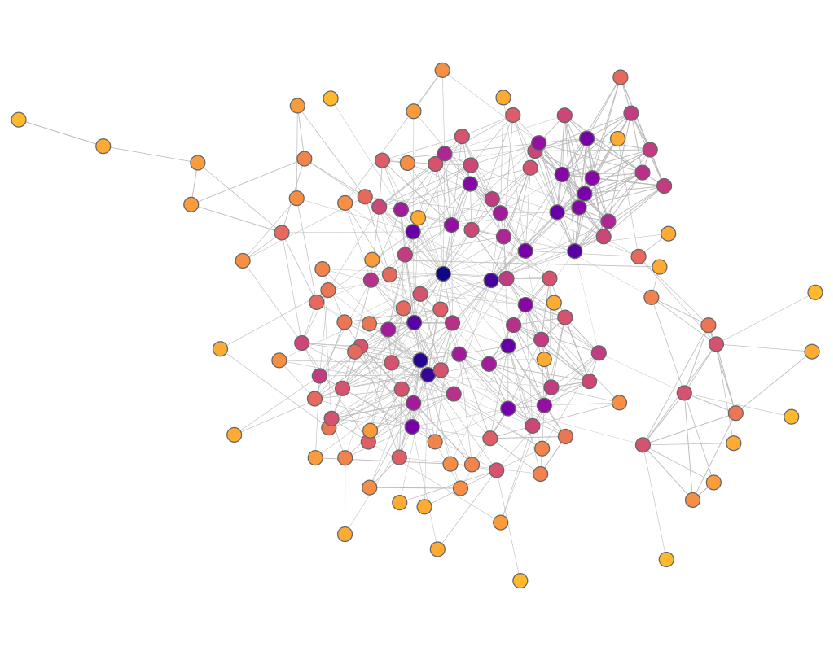}[image label]\node{$Q_{LCMC}$: 0.419};\end{tikzonimage}

\vspace{-45pt} \rotatebox{90}{\tiny \hspace{13pt} Layered} \hspace{-8pt}
\begin{tikzonimage}[width=0.325\linewidth]{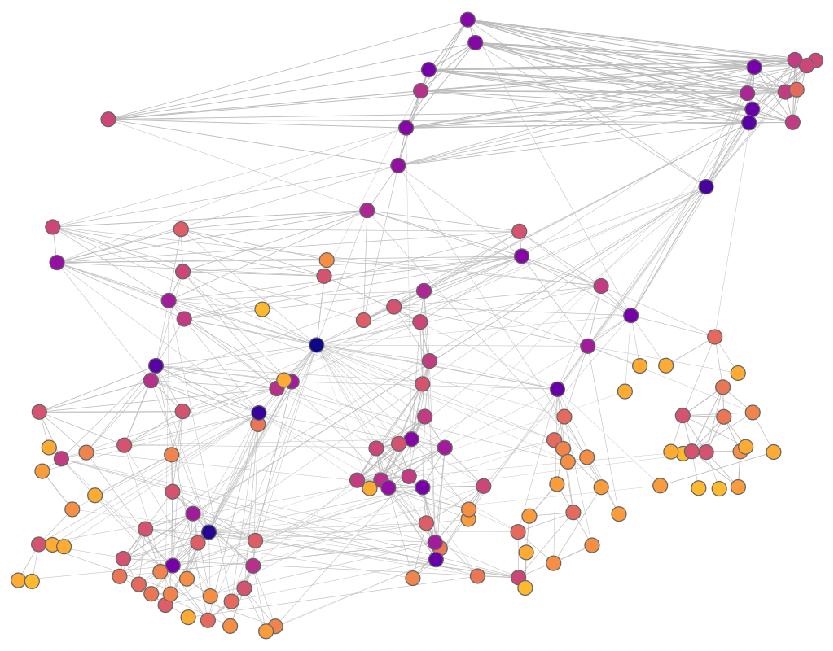}[image label]\node{$Q_{LCMC}$: 0.317};\end{tikzonimage}
\hspace{-2pt}
\begin{tikzonimage}[width=0.325\linewidth]{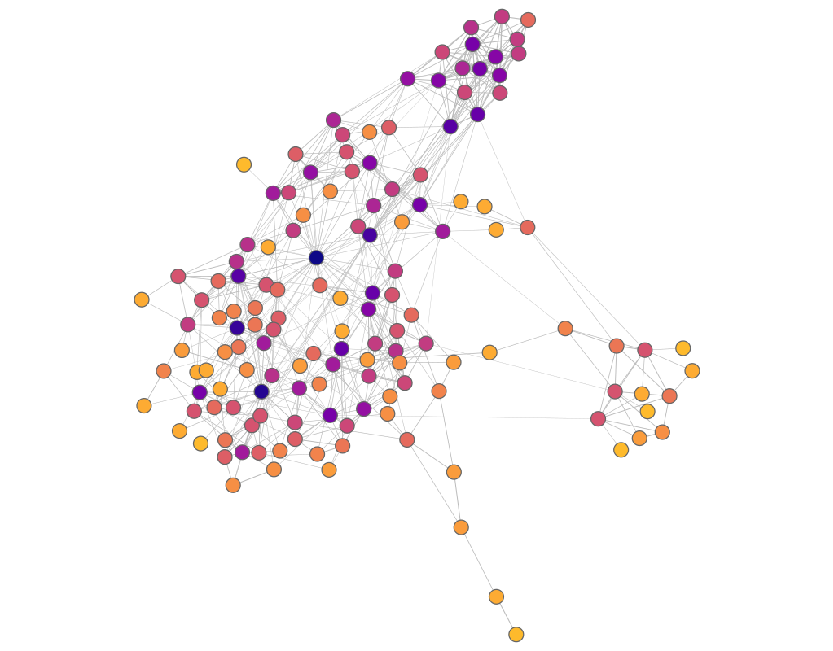}[image label]\node{$Q_{LCMC}$: 0.385};\end{tikzonimage}
\hspace{2.5pt}\rotatebox{90}{\tiny \hspace{14pt} sfdp} \hspace{-9pt}
\begin{tikzonimage}[width=0.28\linewidth]{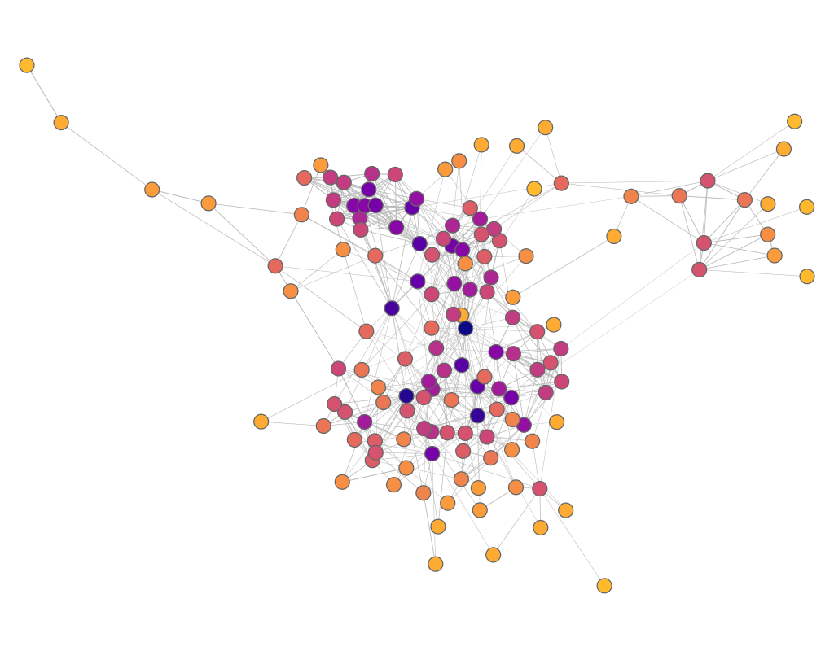}[image label]\node{$Q_{LCMC}$: 0.410};\end{tikzonimage}

{\footnotesize \hspace{23pt} Initial \hspace{33pt} Final}
\end{minipage}
}
\hfill
\subfloat[\textsc{usair 97}\label{fig:dense:usair}]{
\begin{minipage}[t]{0.315\linewidth}
\rotatebox{90}{\tiny \hspace{13pt} Random} \hspace{-8pt}
\begin{tikzonimage}[width=0.325\linewidth]{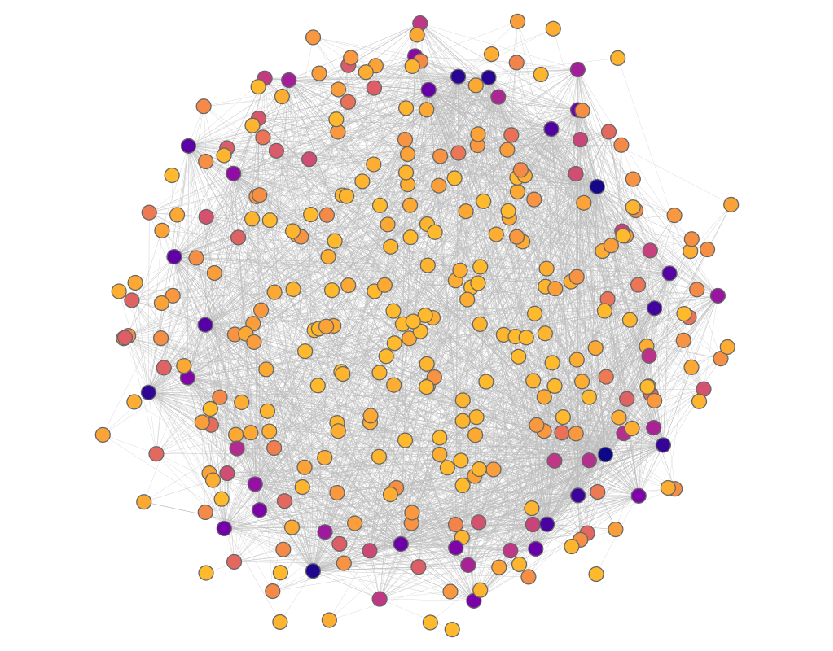}[image label]\node{$Q_{LCMC}$: 0.011};\end{tikzonimage}
\hspace{-2pt}
\begin{tikzonimage}[width=0.325\linewidth]{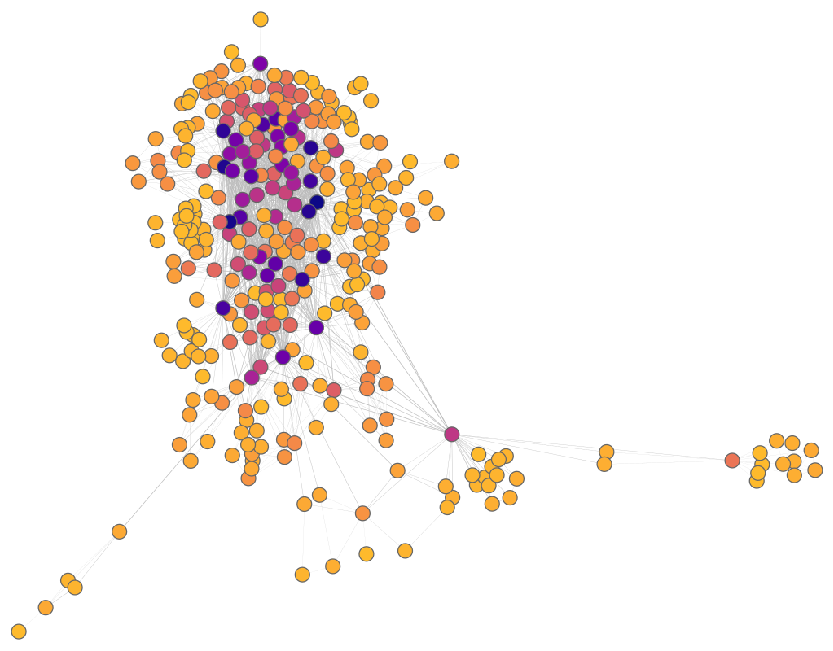}[image label]\node{$Q_{LCMC}$: 0.323};\end{tikzonimage}
\hspace{-1pt} \rule[-45pt]{.5pt}{85pt} \rotatebox{90}{\tiny \hspace{14pt} neato} \hspace{-9pt}
\begin{tikzonimage}[width=0.28\linewidth]{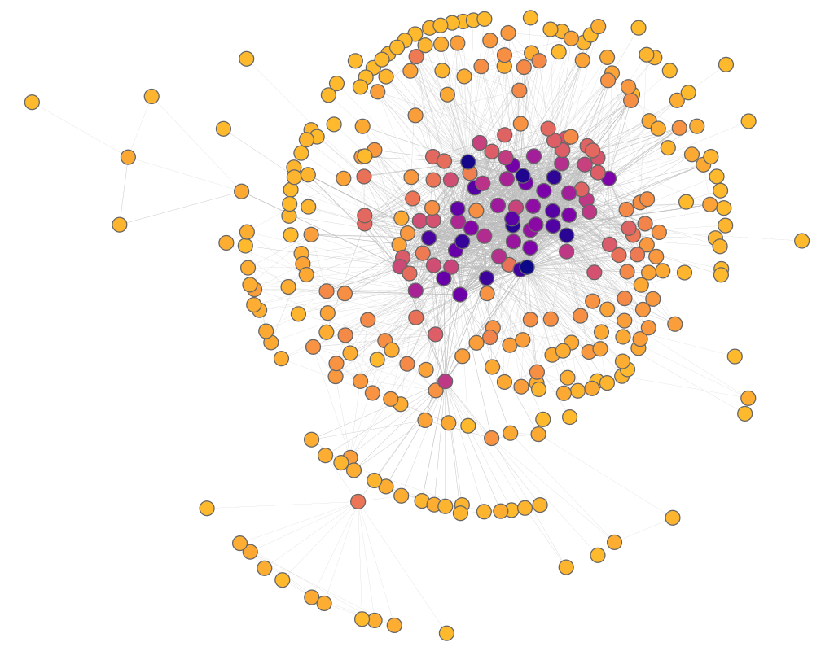}[image label]\node{$Q_{LCMC}$: 0.398};\end{tikzonimage}

\vspace{-45pt} \rotatebox{90}{\tiny \hspace{13pt} Layered} \hspace{-8pt}
\begin{tikzonimage}[width=0.325\linewidth]{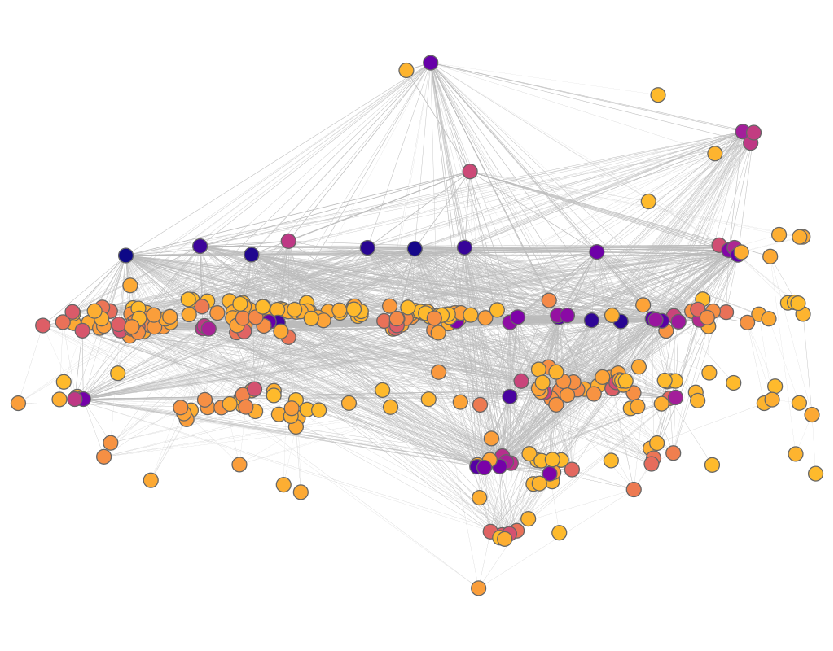}[image label]\node{$Q_{LCMC}$: 0.171};\end{tikzonimage}
\hspace{-2pt}
\begin{tikzonimage}[width=0.325\linewidth]{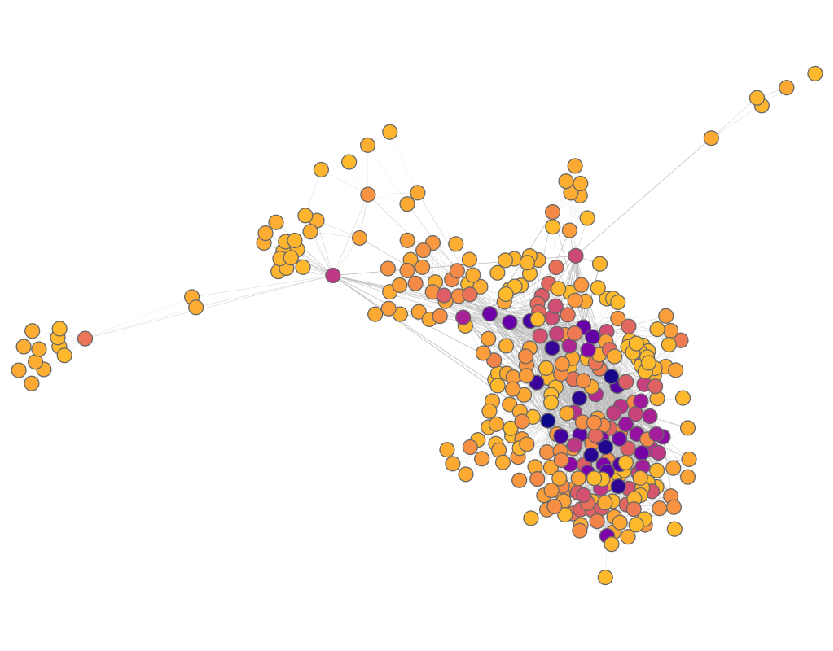}[image label]\node{$Q_{LCMC}$: 0.307};\end{tikzonimage}
\hspace{2.5pt}\rotatebox{90}{\tiny \hspace{14pt} sfdp} \hspace{-9pt}
\begin{tikzonimage}[width=0.28\linewidth]{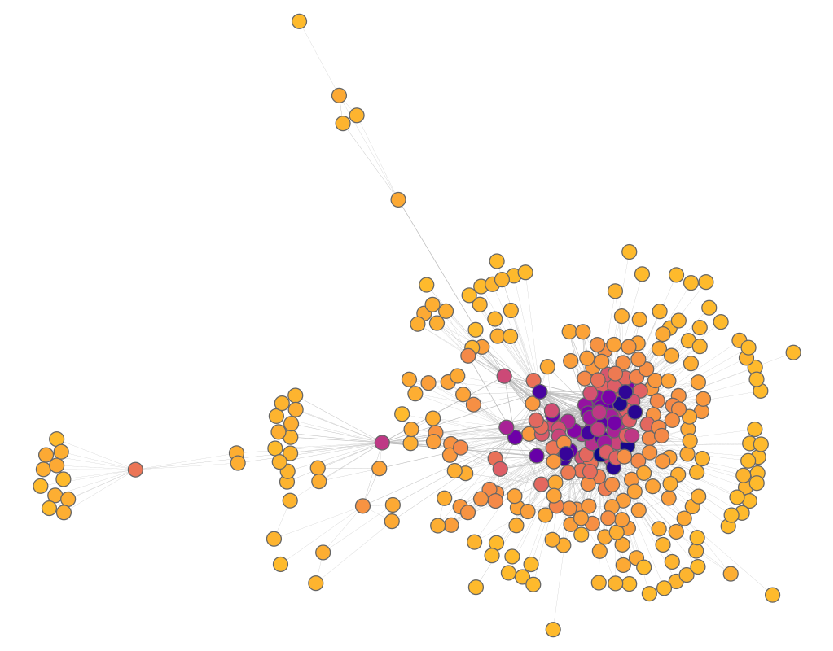}[image label]\node{$Q_{LCMC}$: 0.375};\end{tikzonimage}

{\footnotesize \hspace{23pt} Initial \hspace{33pt} Final}
\end{minipage}
}

    \vspace{-3pt}
    \caption{Example of \textit{dense} graph initial layouts (left), final layouts (middle), and the result of running neato and sfdp (right). Graphs using our technique (bottom) converge more quickly and show similar, occasionally better, results compared to the standard random approach (top).}
    \label{fig.dense}
    \vspace{-3pt}
\end{figure*}

\para{Dense Graphs} With dense graphs (see \autoref{tab:dense} and \autoref{fig.dense}), our approach generally produced higher $Q_{LCMC}$ scores. However, for several graphs, our scores were similar or slightly lower, e.g., \autoref{fig:dense:enron} and~\ref{fig:dense:usair}. In these cases, the results are still quite similar visually.
In general though for dense graphs, it seems that no matter the initial position of nodes, the layout will end in a more or less similar configuration. Importantly, even when our score is lower, our approach converges much more quickly, e.g., for the \textsc{hic 1k net~6} dataset (see \autoref{fig:large_graphs.smith}), our method produced a similar $Q_{LCMC}$ in about one fifth the time of the random layout. This property is discussed more in \autoref{sec:eval:h0:conv}.

\para{Sparse Graphs} With sparse graphs (see \autoref{tab:sparse} and \autoref{fig.sparse}), the story is a bit different, as these graphs are not so overconstrained. Using random initial layouts, quite often, their topological structures are overlapping or hidden altogether. On the other hand, our approach untangles these topological structures, leading to better final graph layouts, e.g., with \textsc{engymes-g123} dataset (see \autoref{fig:sparse:engymes}) our approach ($Q_{LCMC}=0.436$) produces higher co-ranking scores than the random layout ($Q_{LCMC}=0.374$), and the cycle structures of the graph are more clearly visible. There was one sparse case, \textsc{bcsstk}, where our approach performed slightly worse than random because it was unable to untangle the long cycle in the graph (see supplement). However, the interactive functionality discussed in \autoref{sec.eval.h1} resolved that issue.

\para{Larger Graphs} Improving layouts is particularly important for larger graphs (i.e., graphs of 1000 nodes, see \autoref{sec.eval.data}). \autoref{fig:teaser} and \ref{fig:large_graphs} show examples of larger graphs. Our approach shows better clustering structures for the \textsc{hic 1k net} and \textsc{airport} datasets, supported by the improved $Q_{LCMC}$ scores. \textsc{smith}, on the other hand, being a dense graph, shows similar clustering and identical $Q_{LCMC}$ scores. 

Overall, we observe that although our approach could improve the layout quality of dense graphs, sparse graphs almost always benefited from utilizing our approach.

\begin{figure}[!b]
    \centering
    \vspace{-5pt}
    
    \begin{minipage}[t]{0.02\linewidth}
    \rotatebox{90}{\tiny \hspace{65pt}$Q_{LCMC}$}
    \end{minipage}
    \begin{minipage}[t]{0.3\linewidth}
    \includegraphics[width=\linewidth]{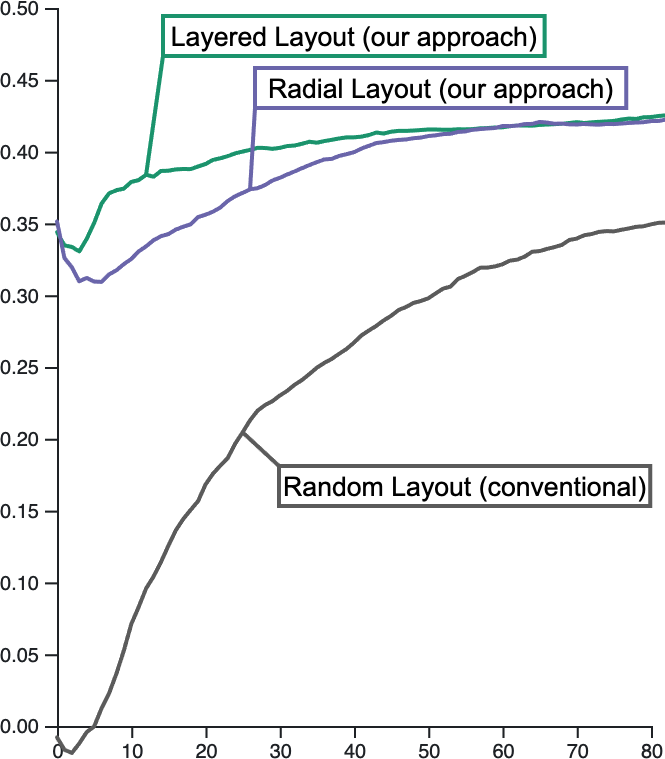}
    
    \vspace{-5pt}\hspace{55pt}
    {\tiny iterations}
    
    \vspace{-12pt}
    \subfloat[\textsc{engymes-g123}]{\hspace{\linewidth}}
    
    \end{minipage}
    \hfill
    \begin{minipage}[t]{0.3\linewidth}
    
    \includegraphics[width=\linewidth]{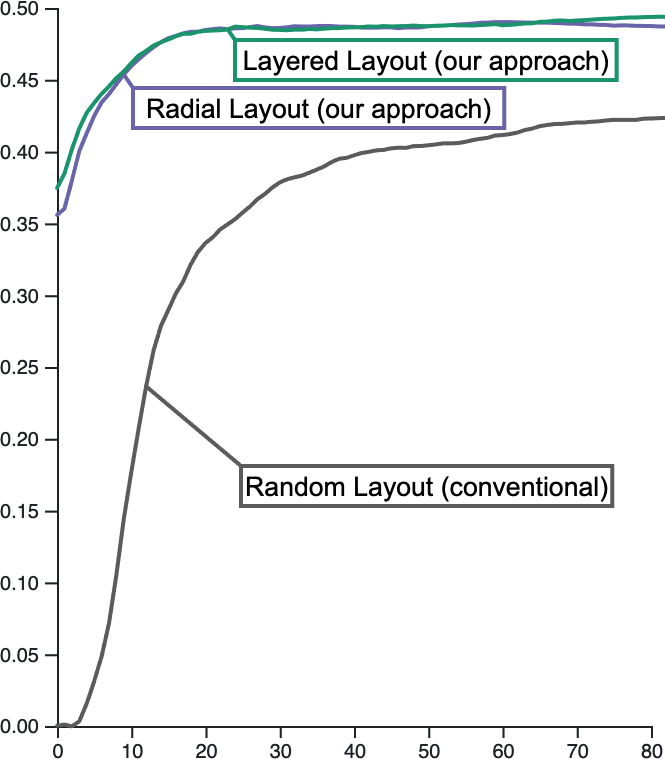}
    
    \vspace{-5pt}\hspace{55pt}
    {\tiny iterations}
    
    \vspace{-12pt}
    \subfloat[\textsc{science collab.}]{\hspace{\linewidth}}
    
    \end{minipage}
    \hfill
    \begin{minipage}[t]{0.3\linewidth}    
    
    \includegraphics[width=\linewidth]{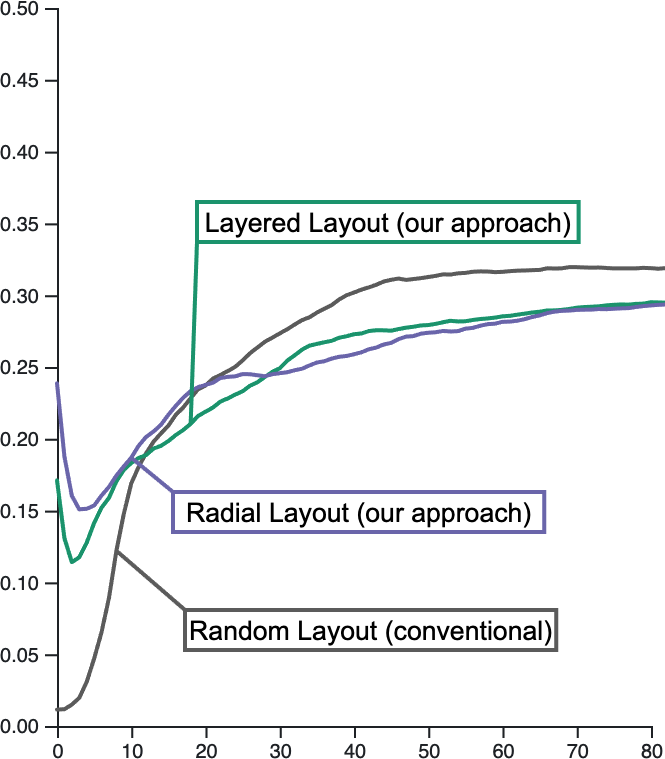}
    
    \vspace{-5pt}\hspace{55pt}
    {\tiny iterations}
    
    \vspace{-12pt}
    \subfloat[\textsc{usair 97}\label{fig:convergence:usair97}]{\hspace{\linewidth}}
    
    \end{minipage}

    \vspace{-5pt}
    \caption{Plots of $Q_{LCMC}$ against the number of iterations show that compared to the random initial layout, our approach begins with high co-ranking scores and spends most of the iterations fine-tuning.}
    \label{fig:convergence}
\end{figure}

\subsubsection{Rate of Convergence}
\label{sec:eval:h0:conv}

We compute the convergence metrics (see \autoref{sec.eval.metrics}) on all of the datasets listed in \autoref{tab:dense} and~\ref{tab:sparse}, and we focus on the convergence of the LCMC ($C_{LCMC}$). Our results show that for all except one dataset, \textsc{usair 97}, using our approach converged faster than random layouts, often significantly so. \autoref{fig:convergence} shows plots of $Q_{LCMC}$ against iterations for three example datasets, including \textsc{usair 97}. In all cases, our approach starts with a much higher $Q_{LCMC}$ score and fine-tunes the results. One interesting observation is that the $Q_{LCMC}$ sometimes starts high and dips, e.g., in \autoref{fig:convergence:usair97}. This results from our good initial layout being in a high energy state which is distorted by the force-directed layout before settling in a good quality low energy state.

\begin{figure*}[!ht]

\subfloat[\textsc{bio-diseasome}\label{fig:sparse:diseasome}]{
\begin{minipage}[t]{0.315\linewidth}
\rotatebox{90}{\tiny \hspace{13pt} Random} \hspace{-8pt}
\begin{tikzonimage}[width=0.315\linewidth]{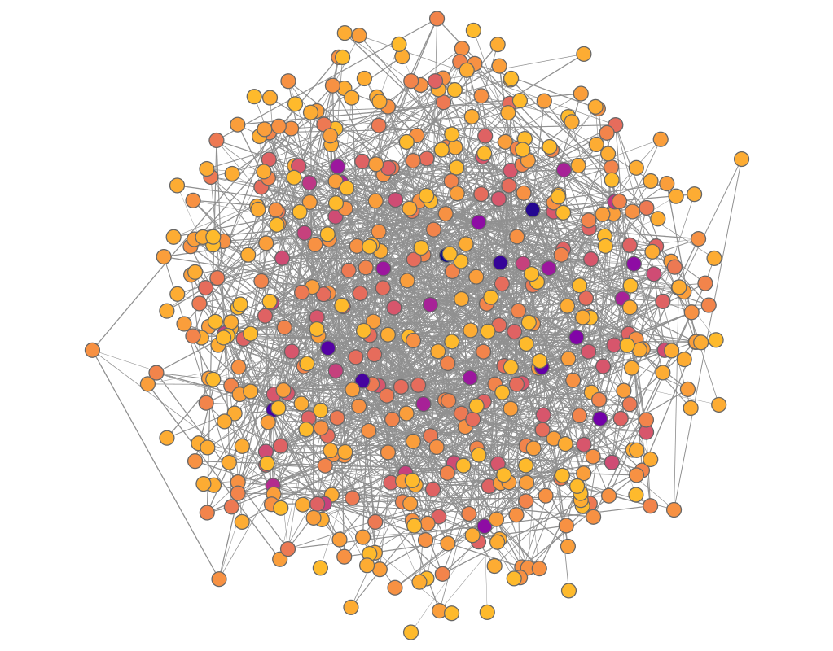}[image label]\node{$Q_{LCMC}$: 0.000};\end{tikzonimage}
\hspace{-2pt}
\begin{tikzonimage}[width=0.315\linewidth]{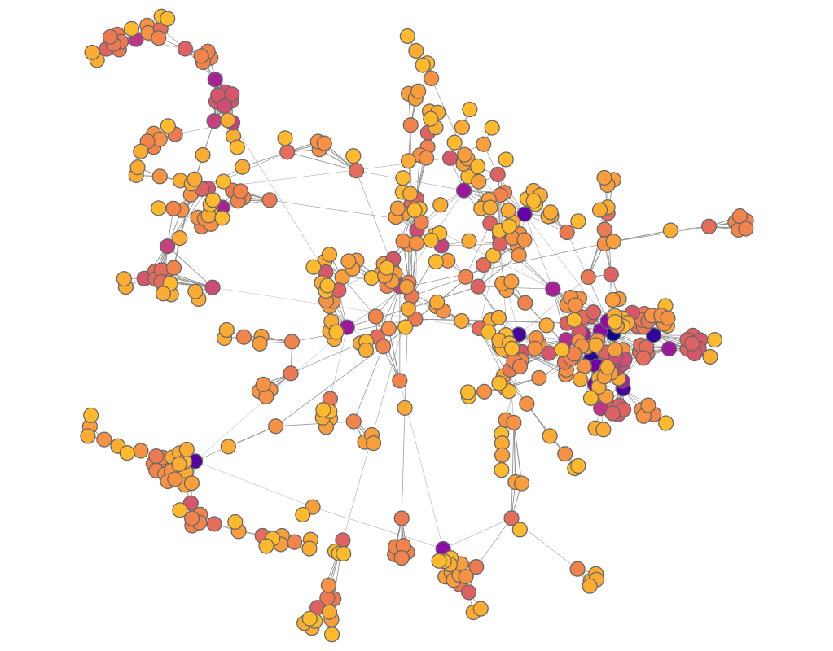}[image label]\node{$Q_{LCMC}$: 0.407};\end{tikzonimage}
\hspace{-1pt} \rule[-45pt]{.5pt}{85pt} \rotatebox{90}{\tiny \hspace{14pt} neato} \hspace{-9pt}
\begin{tikzonimage}[width=0.28\linewidth]{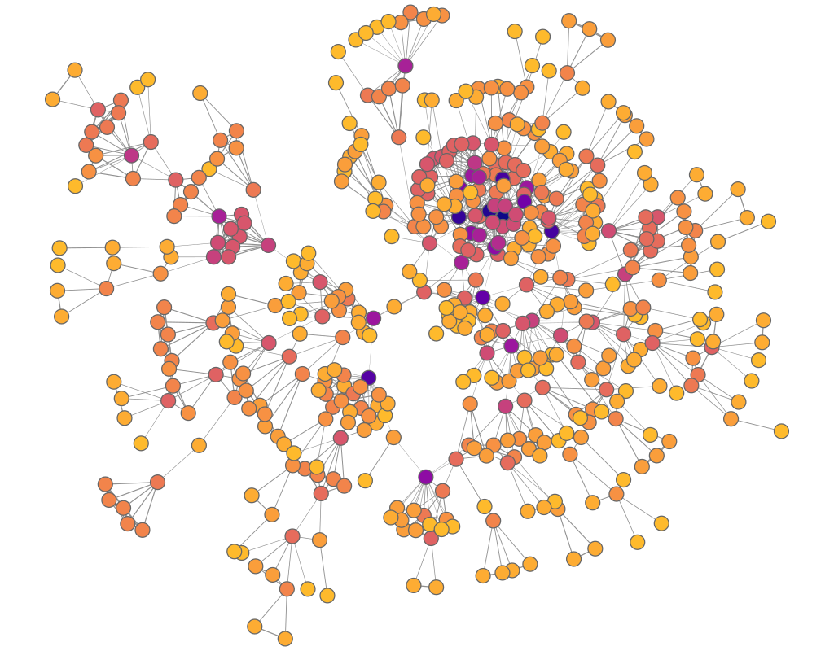}[image label]\node{$Q_{LCMC}$: 0.509};\end{tikzonimage}

\vspace{-45pt} \rotatebox{90}{\tiny \hspace{13pt} Radial} \hspace{-8pt}
\begin{tikzonimage}[width=0.315\linewidth]{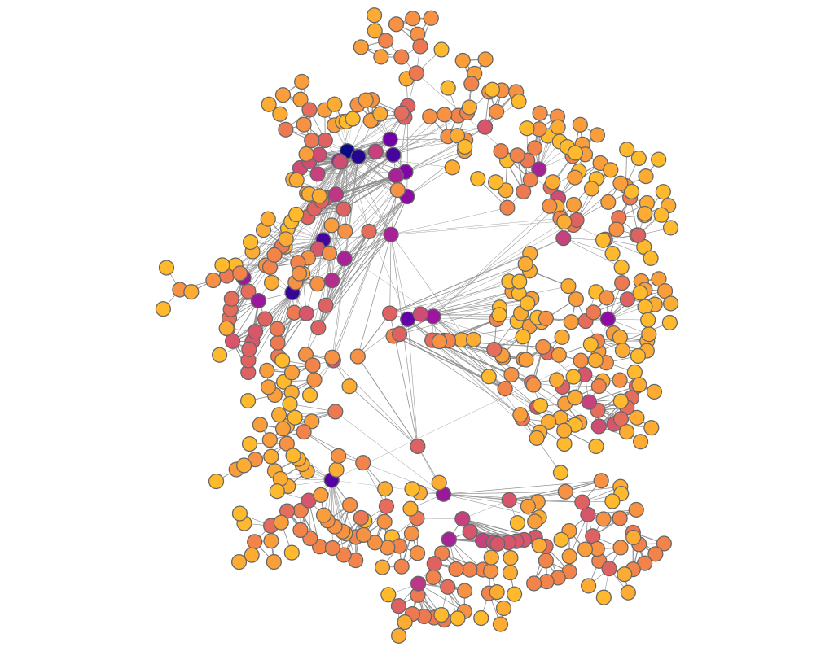}[image label]\node{$Q_{LCMC}$: 0.369};\end{tikzonimage}
\hspace{-2pt}
\begin{tikzonimage}[width=0.315\linewidth]{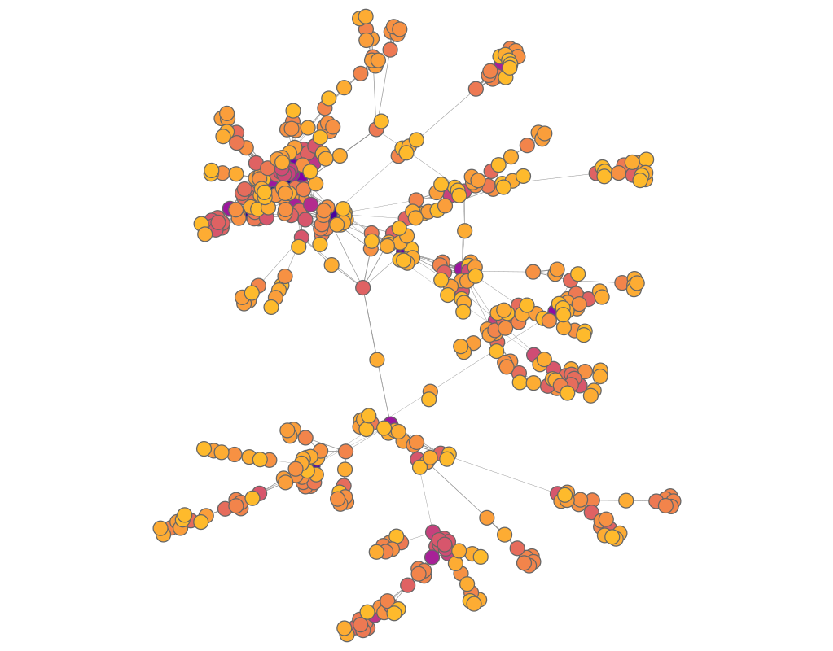}[image label]\node{$Q_{LCMC}$: 0.478};\end{tikzonimage}
\hspace{2.5pt}\rotatebox{90}{\tiny \hspace{15pt} sfdp} \hspace{-9pt}
\begin{tikzonimage}[width=0.28\linewidth]{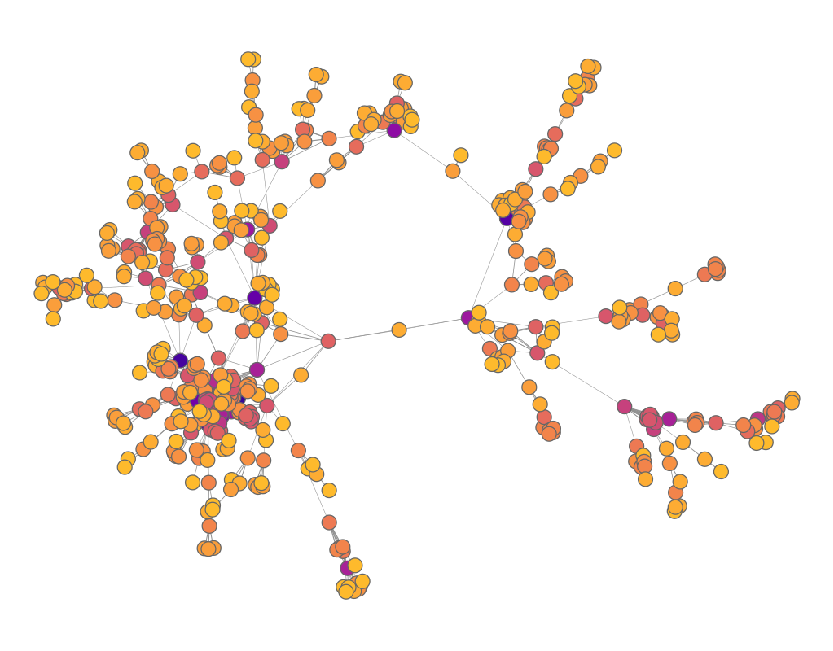}[image label]\node{$Q_{LCMC}$: 0.518};\end{tikzonimage}

{\footnotesize \hspace{23pt} Initial \hspace{33pt} Final}
\end{minipage}
}
\hfill
\subfloat[\textsc{circular ladder graph (100)}\label{fig:sparse:circular}]{
\begin{minipage}[t]{0.315\linewidth}
\rotatebox{90}{\tiny \hspace{13pt} Random} \hspace{-8pt}
\begin{tikzonimage}[width=0.315\linewidth]{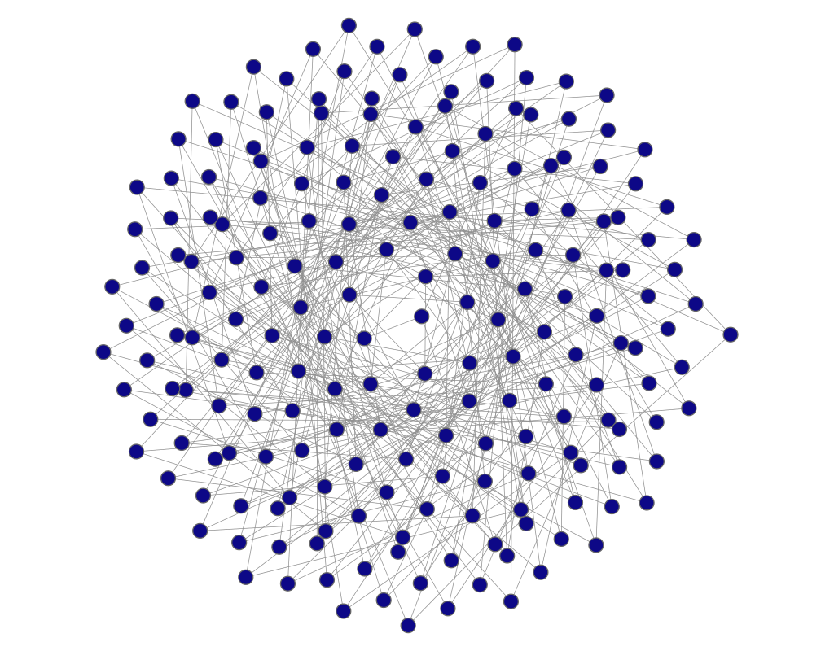}[image label]\node{$Q_{LCMC}$: -0.035};\end{tikzonimage}
\hspace{-2pt}
\begin{tikzonimage}[width=0.315\linewidth]{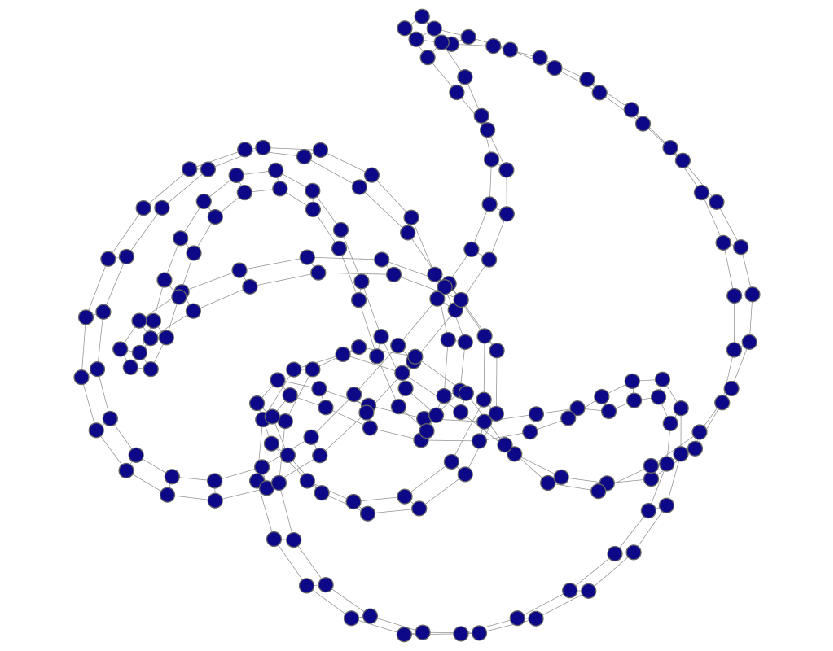}[image label]\node{$Q_{LCMC}$: 0.455};\end{tikzonimage}
\hspace{-1pt} \rule[-45pt]{.5pt}{85pt} \rotatebox{90}{\tiny \hspace{14pt} neato} \hspace{-9pt}
\begin{tikzonimage}[width=0.28\linewidth]{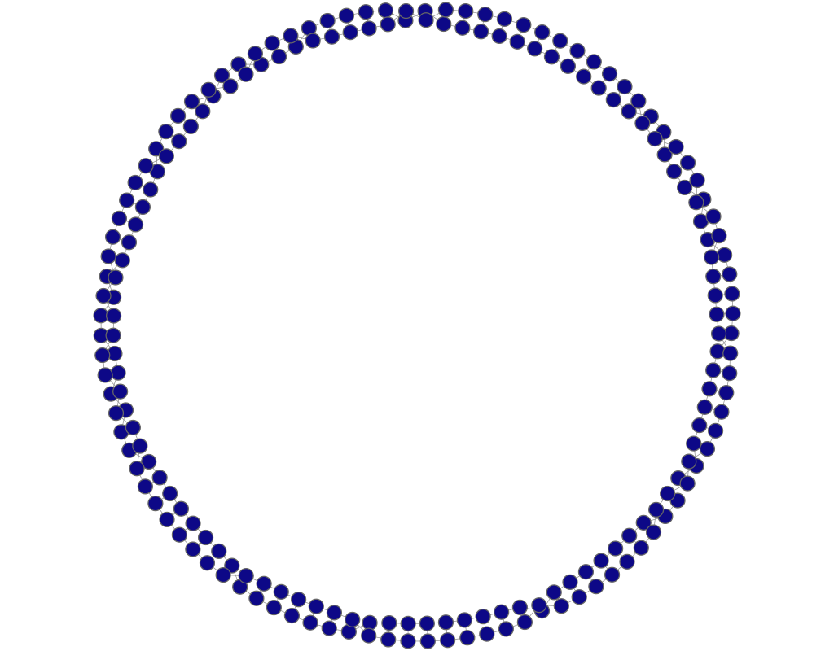}[image label]\node{$Q_{LCMC}$: 0.834};\end{tikzonimage}

\vspace{-45pt} \rotatebox{90}{\tiny \hspace{13pt} Radial} \hspace{-8pt}
\begin{tikzonimage}[width=0.315\linewidth]{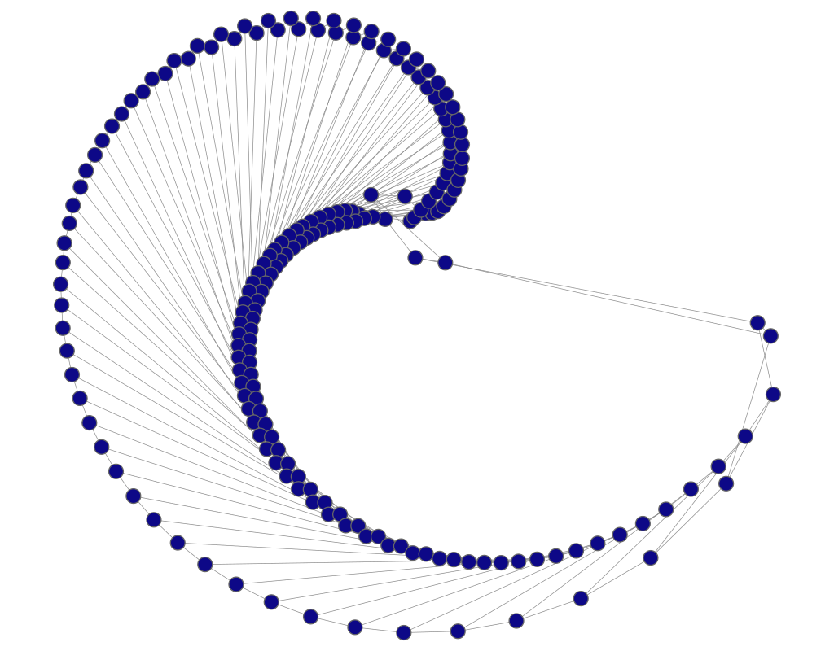}[image label]\node{$Q_{LCMC}$: 0.430};\end{tikzonimage}
\hspace{-2pt}
\begin{tikzonimage}[width=0.315\linewidth]{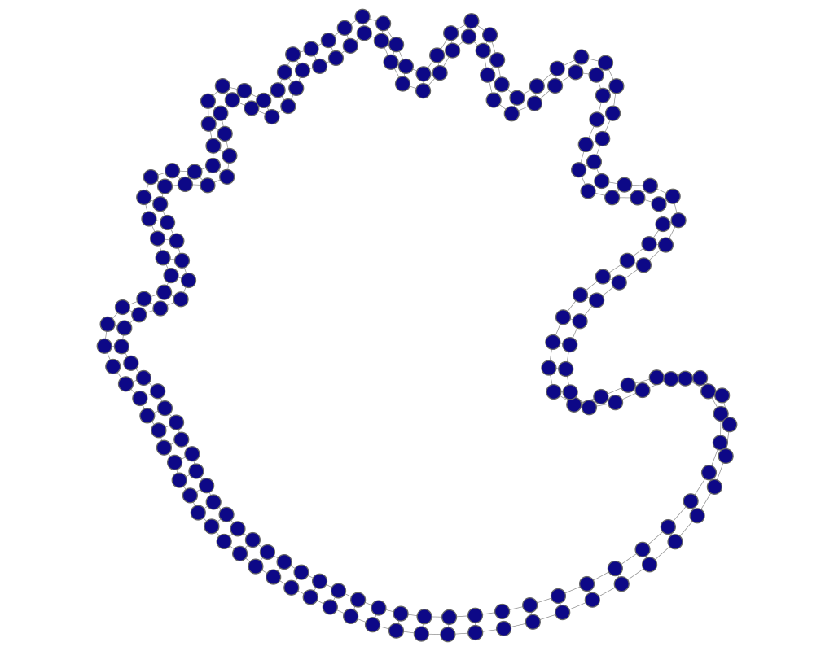}[image label]\node{$Q_{LCMC}$: 0.814};\end{tikzonimage}
\hspace{2.5pt}\rotatebox{90}{\tiny \hspace{15pt} sfdp} \hspace{-9pt}
\begin{tikzonimage}[width=0.28\linewidth]{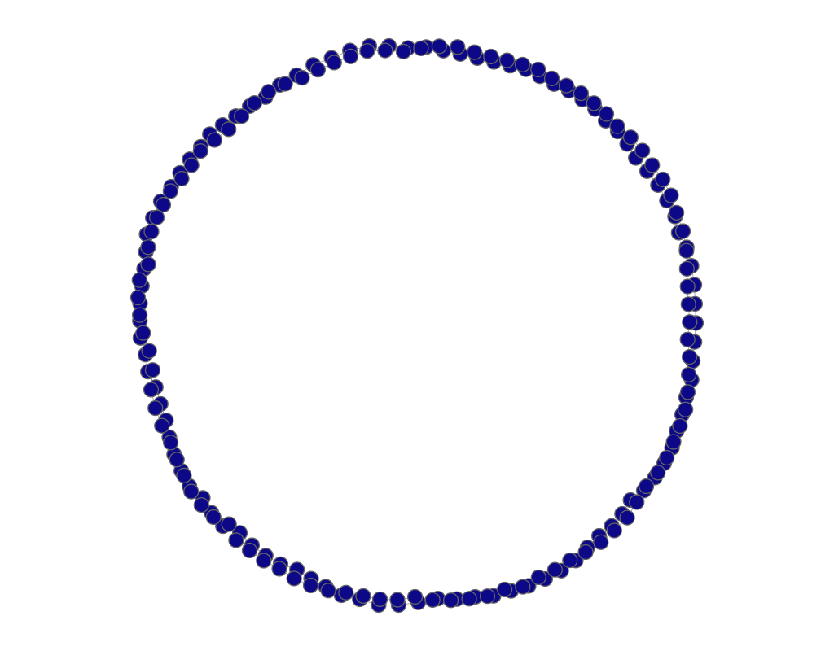}[image label]\node{$Q_{LCMC}$: 0.816};\end{tikzonimage}

{\footnotesize \hspace{23pt} Initial \hspace{33pt} Final}
\end{minipage}
}
\hfill
\subfloat[\textsc{engymes-g123}\label{fig:sparse:engymes}]{
\begin{minipage}[t]{0.315\linewidth}
\rotatebox{90}{\tiny \hspace{13pt} Random} \hspace{-8pt}
\begin{tikzonimage}[width=0.315\linewidth]{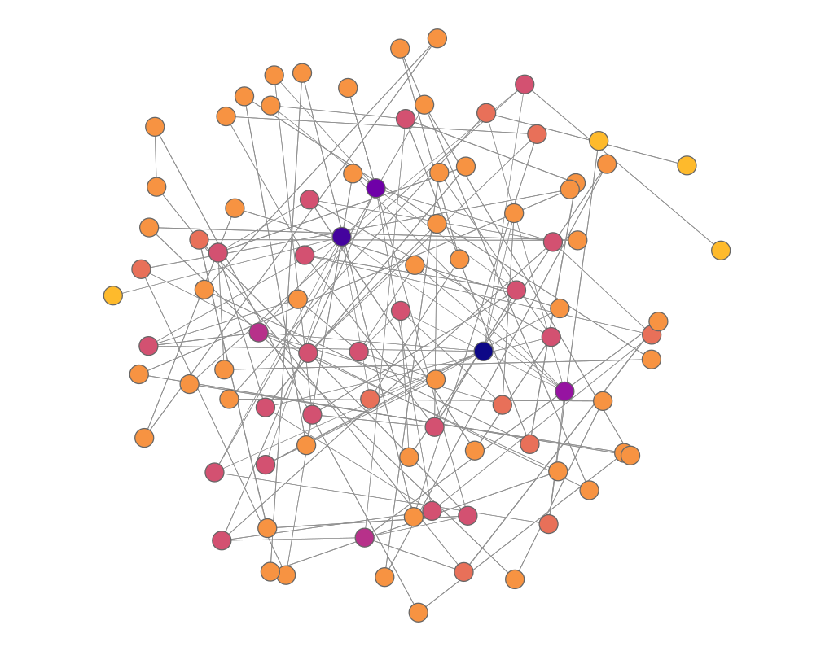}[image label]\node{$Q_{LCMC}$: -0.008};\end{tikzonimage}
\hspace{-2pt}
\begin{tikzonimage}[width=0.315\linewidth]{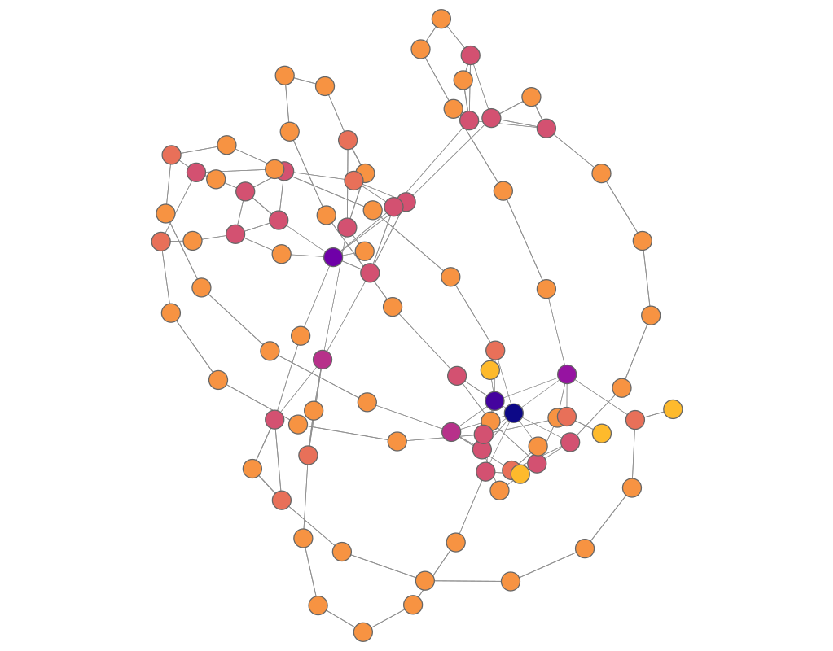}[image label]\node{$Q_{LCMC}$: 0.374};\end{tikzonimage}
\hspace{-1pt} \rule[-45pt]{.5pt}{85pt} \rotatebox{90}{\tiny \hspace{14pt} neato} \hspace{-9pt}
\begin{tikzonimage}[width=0.28\linewidth]{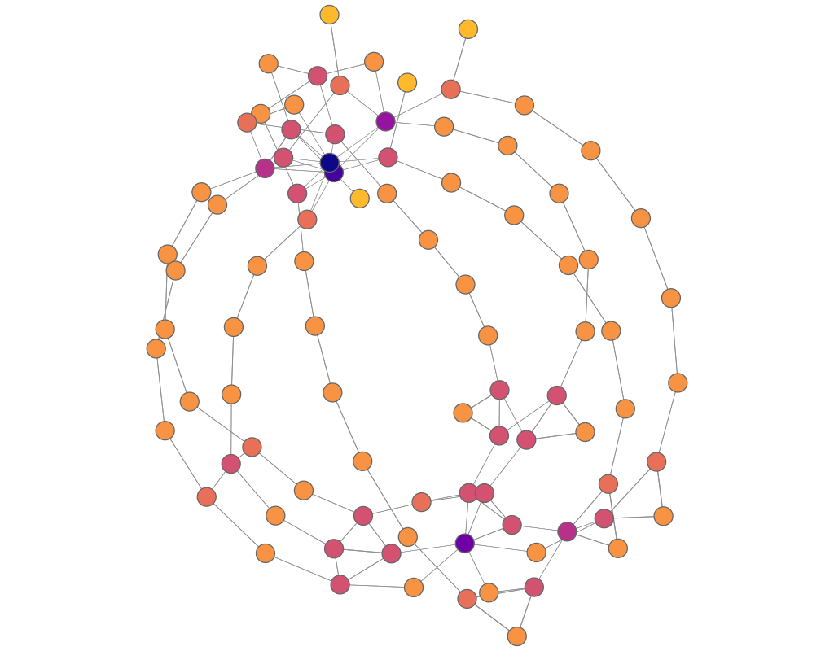}[image label]\node{$Q_{LCMC}$: 0.505};\end{tikzonimage}

\vspace{-45pt} \rotatebox{90}{\tiny \hspace{13pt} Layered} \hspace{-8pt}
\begin{tikzonimage}[width=0.315\linewidth]{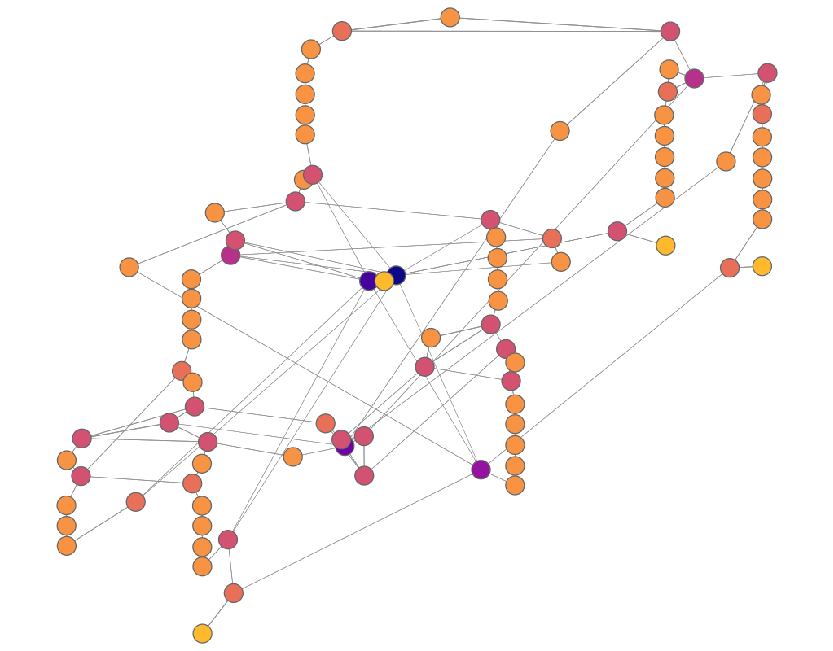}[image label]\node{$Q_{LCMC}$: 0.344};\end{tikzonimage}
\hspace{-2pt}
\begin{tikzonimage}[width=0.315\linewidth]{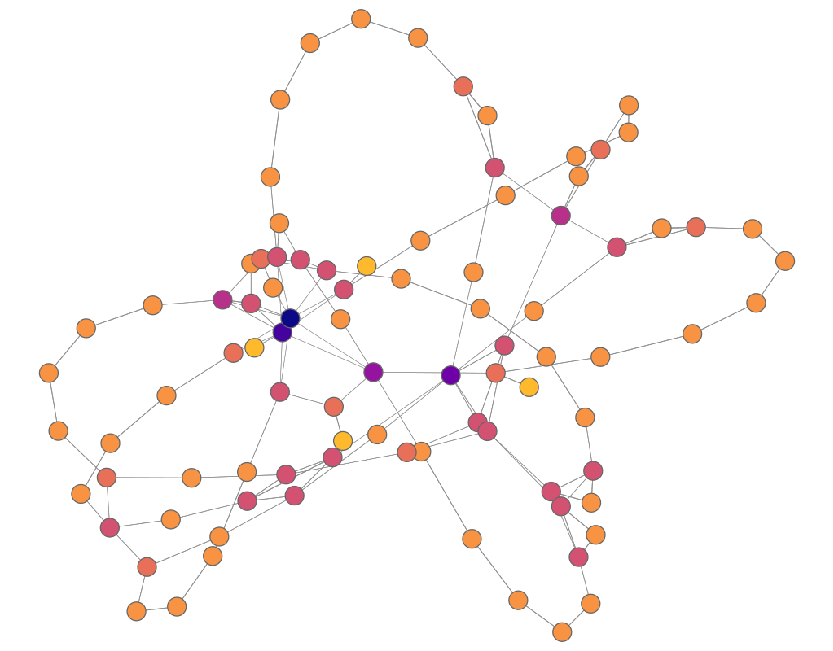}[image label]\node{$Q_{LCMC}$: 0.436};\end{tikzonimage}
\hspace{2.5pt}\rotatebox{90}{\tiny \hspace{15pt} sfdp} \hspace{-9pt}
\begin{tikzonimage}[width=0.28\linewidth]{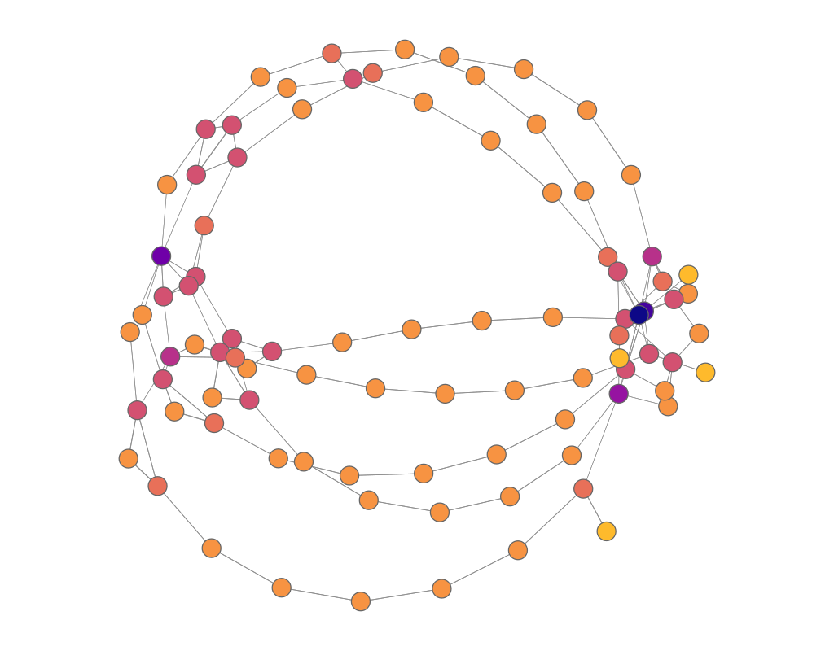}[image label]\node{$Q_{LCMC}$: 0.475};\end{tikzonimage}

{\footnotesize \hspace{23pt} Initial \hspace{33pt} Final}
\end{minipage}
}

\vspace{-3pt}
\subfloat[\textsc{lollipop (10,50)}\label{fig:sparse:lollipop}]{
\begin{minipage}[t]{0.315\linewidth}
\rotatebox{90}{\tiny \hspace{13pt} Random} \hspace{-8pt}
\begin{tikzonimage}[width=0.315\linewidth]{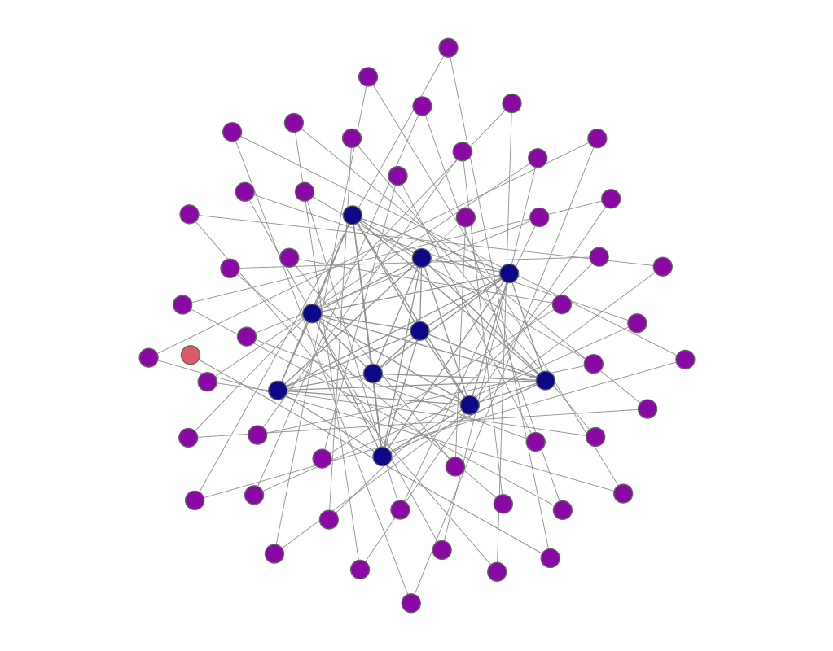}[image label]\node{$Q_{LCMC}$: 0.021};\end{tikzonimage}
\hspace{-2pt}
\begin{tikzonimage}[width=0.315\linewidth]{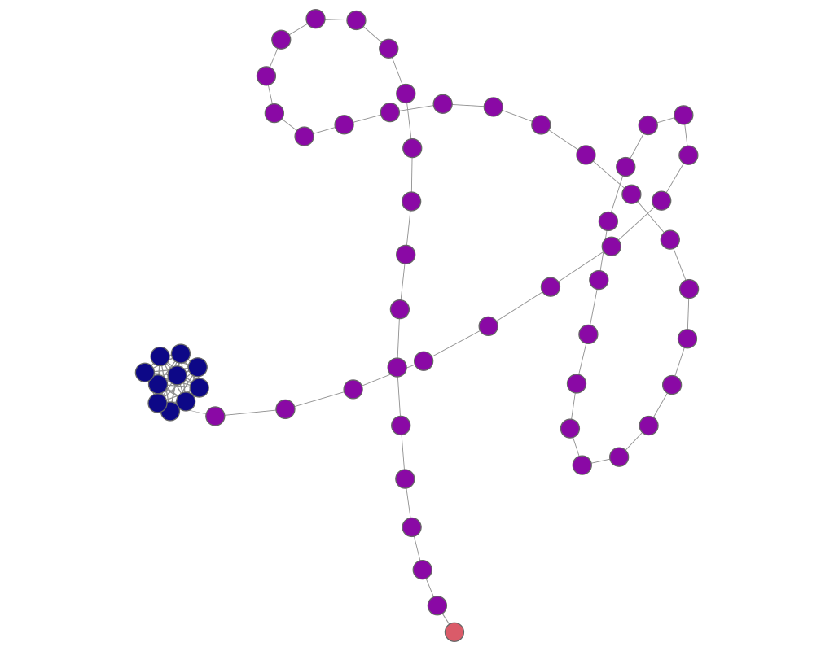}[image label]\node{$Q_{LCMC}$: 0.444};\end{tikzonimage}
\hspace{-1pt} \rule[-45pt]{.5pt}{85pt} \rotatebox{90}{\tiny \hspace{14pt} neato} \hspace{-9pt}
\begin{tikzonimage}[width=0.28\linewidth]{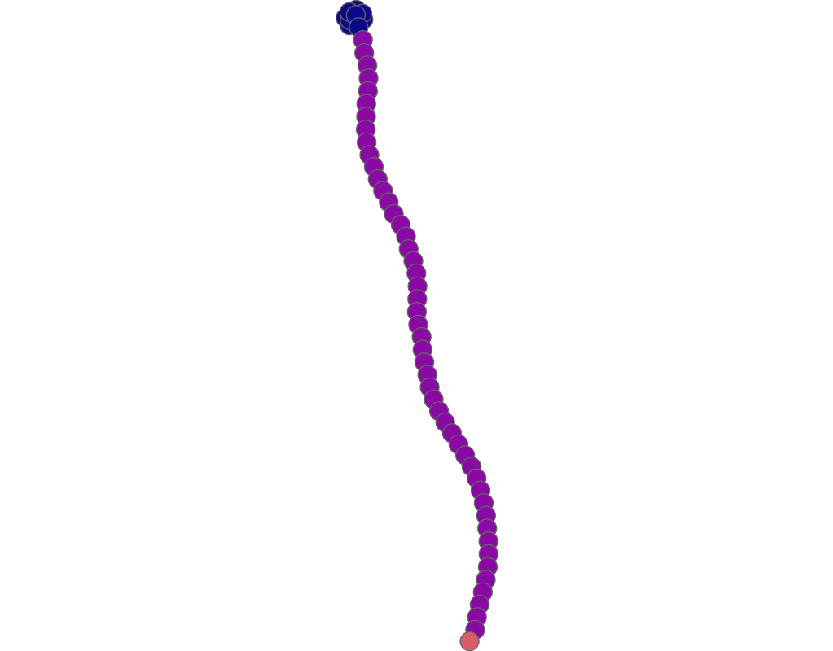}[image label]\node{$Q_{LCMC}$: 0.764};\end{tikzonimage}

\vspace{-45pt} \rotatebox{90}{\tiny \hspace{13pt} Layered} \hspace{-8pt}
\begin{tikzonimage}[width=0.315\linewidth]{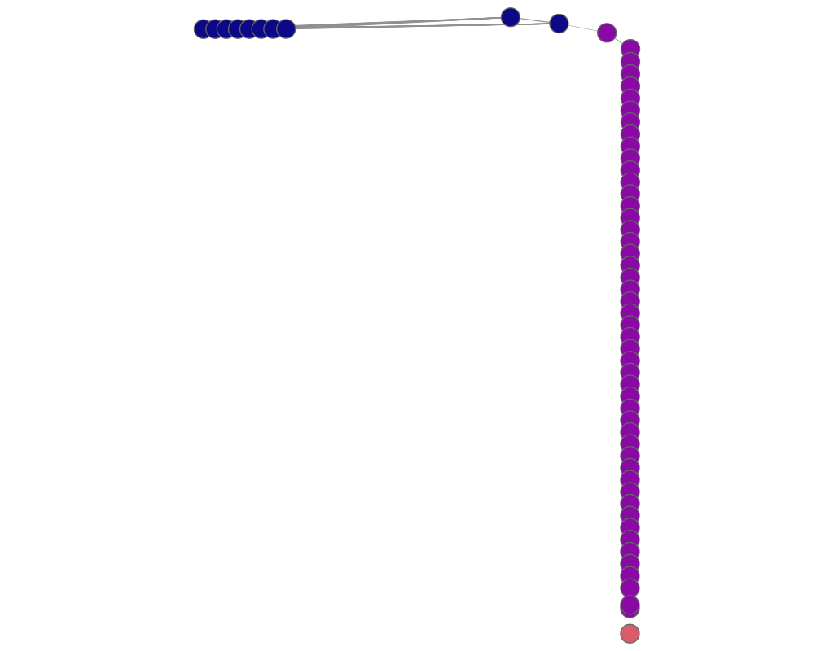}[image label]\node{$Q_{LCMC}$: 0.717};\end{tikzonimage}
\hspace{-2pt}
\begin{tikzonimage}[width=0.315\linewidth]{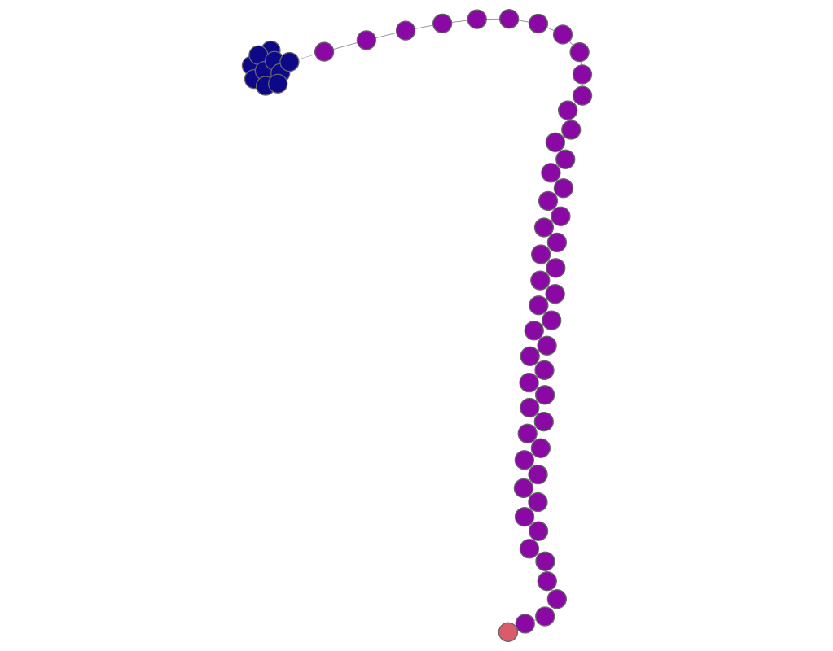}[image label]\node{$Q_{LCMC}$: 0.755};\end{tikzonimage}
\hspace{2.5pt}\rotatebox{90}{\tiny \hspace{15pt} sfdp} \hspace{-9pt}
\begin{tikzonimage}[width=0.28\linewidth]{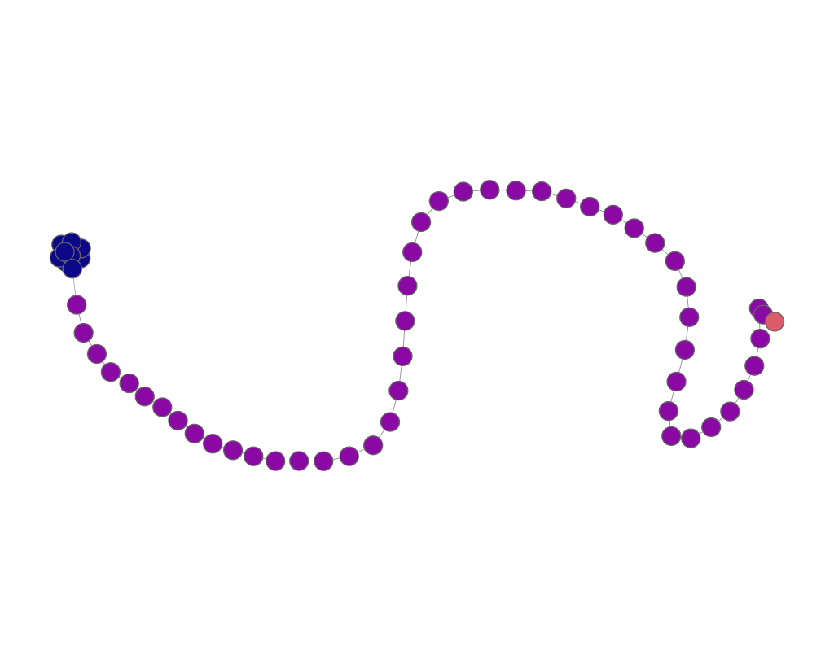}[image label]\node{$Q_{LCMC}$: 0.734};\end{tikzonimage}

{\footnotesize \hspace{23pt} Initial \hspace{33pt} Final}
\end{minipage}
}
\hfill
\subfloat[\textsc{map of science}\label{fig:sparse:map}]{
\begin{minipage}[t]{0.315\linewidth}
\rotatebox{90}{\tiny \hspace{13pt} Random} \hspace{-8pt}
\begin{tikzonimage}[width=0.315\linewidth]{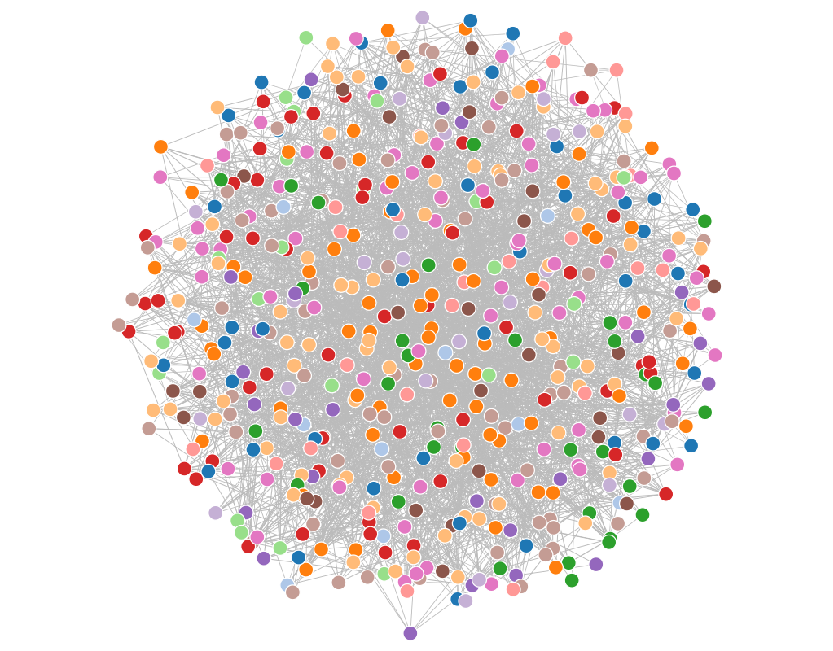}[image label]\node{$Q_{LCMC}$: 0.004};\end{tikzonimage}
\hspace{-2pt}
\begin{tikzonimage}[width=0.315\linewidth]{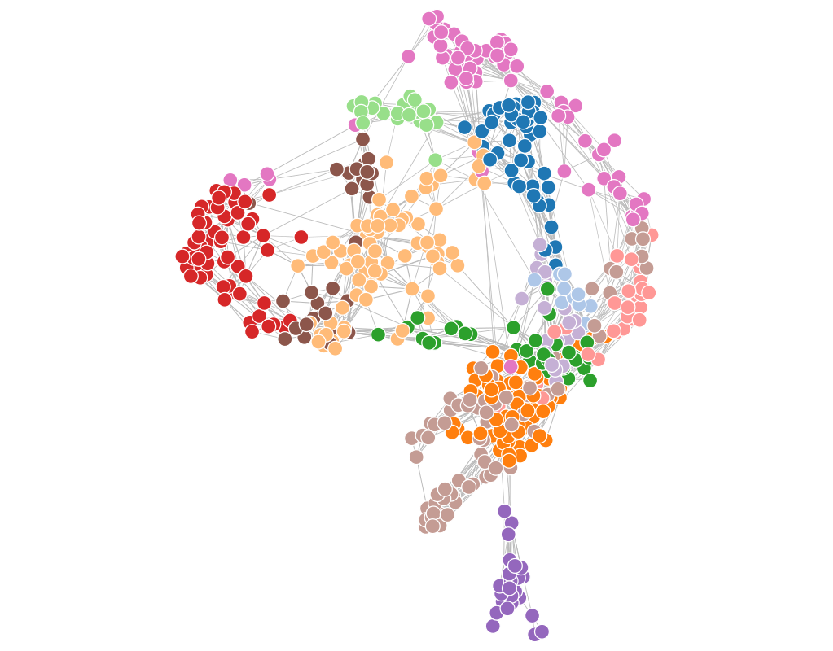}[image label]\node{$Q_{LCMC}$: 0.361};\end{tikzonimage}
\hspace{-1pt} \rule[-45pt]{.5pt}{85pt} \rotatebox{90}{\tiny \hspace{14pt} neato} \hspace{-9pt}
\begin{tikzonimage}[width=0.28\linewidth]{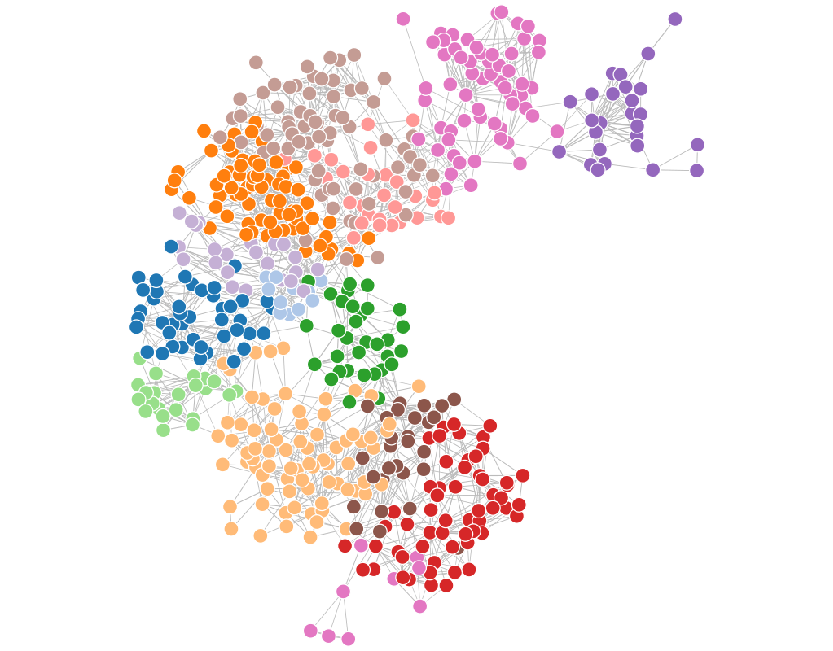}[image label]\node{$Q_{LCMC}$: 0.514};\end{tikzonimage}

\vspace{-45pt} \rotatebox{90}{\tiny \hspace{13pt} Radial} \hspace{-8pt}
\begin{tikzonimage}[width=0.315\linewidth]{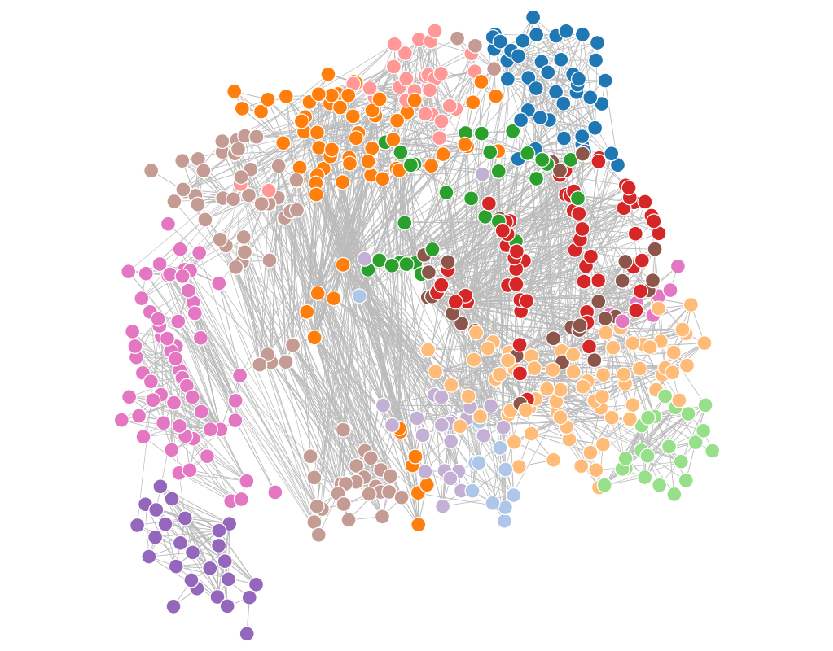}[image label]\node{$Q_{LCMC}$: 0.183};\end{tikzonimage}
\hspace{-2pt}
\begin{tikzonimage}[width=0.315\linewidth]{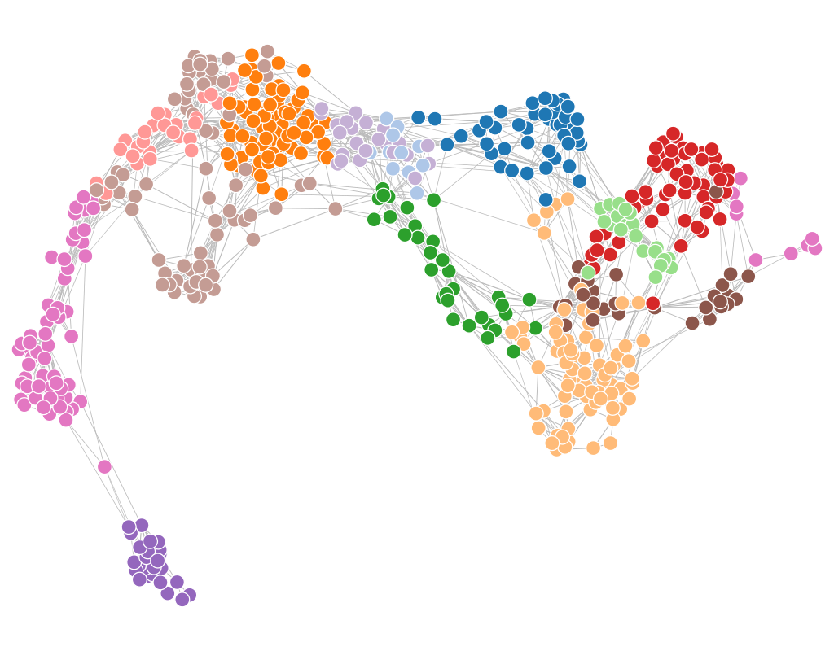}[image label]\node{$Q_{LCMC}$: 0.402};\end{tikzonimage}
\hspace{2.5pt}\rotatebox{90}{\tiny \hspace{15pt} sfdp} \hspace{-9pt}
\begin{tikzonimage}[width=0.28\linewidth]{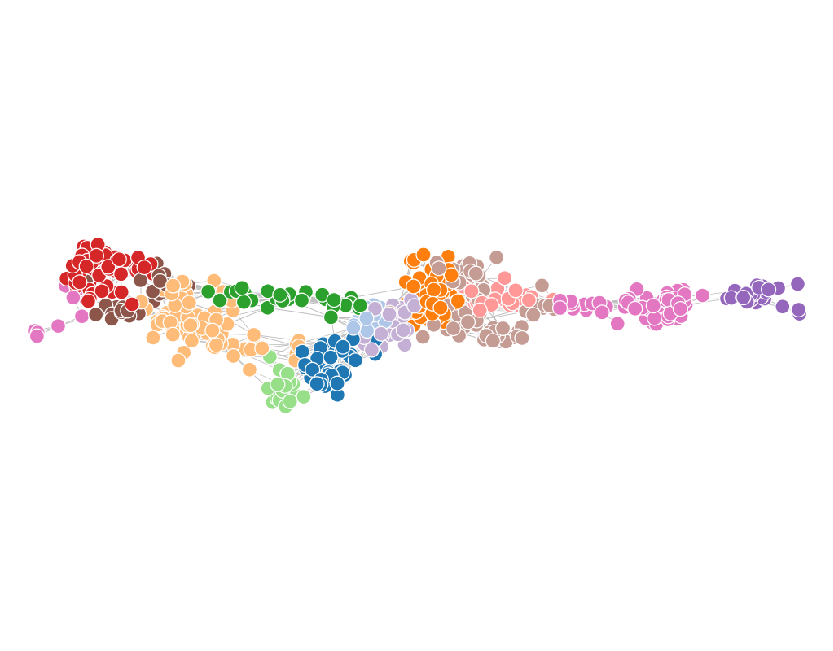}[image label]\node{$Q_{LCMC}$: 0.391};\end{tikzonimage}

{\footnotesize \hspace{23pt} Initial \hspace{33pt} Final}
\end{minipage}
}
\hfill
\subfloat[\textsc{science collaboration}\label{fig:sparse:collaboration}]{
\begin{minipage}[t]{0.315\linewidth}
\rotatebox{90}{\tiny \hspace{13pt} Random} \hspace{-8pt}
\begin{tikzonimage}[width=0.315\linewidth]{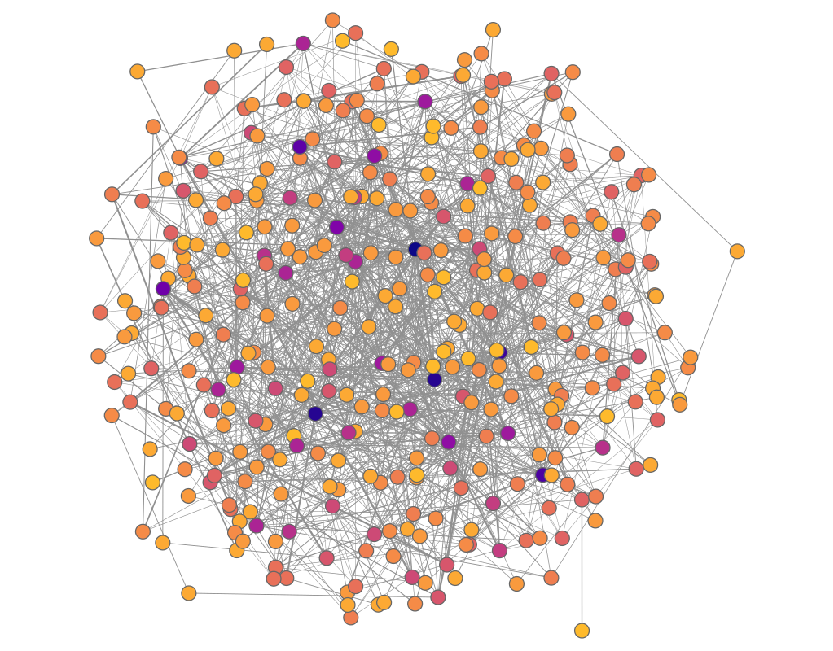}[image label]\node{$Q_{LCMC}$: 0.000};\end{tikzonimage}
\hspace{-2pt}
\begin{tikzonimage}[width=0.315\linewidth]{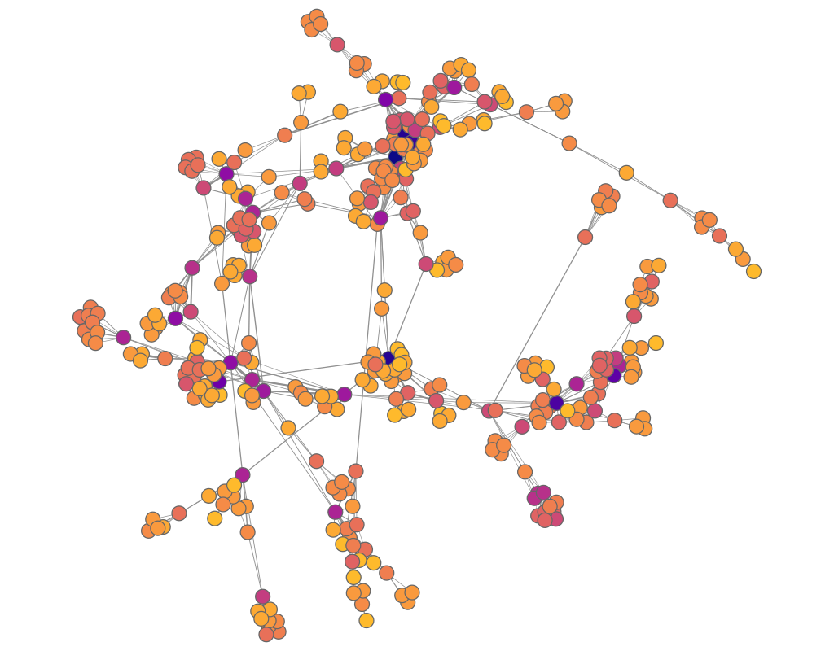}[image label]\node{$Q_{LCMC}$: 0.433};\end{tikzonimage}
\hspace{-1pt} \rule[-45pt]{.5pt}{85pt} \rotatebox{90}{\tiny \hspace{14pt} neato} \hspace{-9pt}
\begin{tikzonimage}[width=0.28\linewidth]{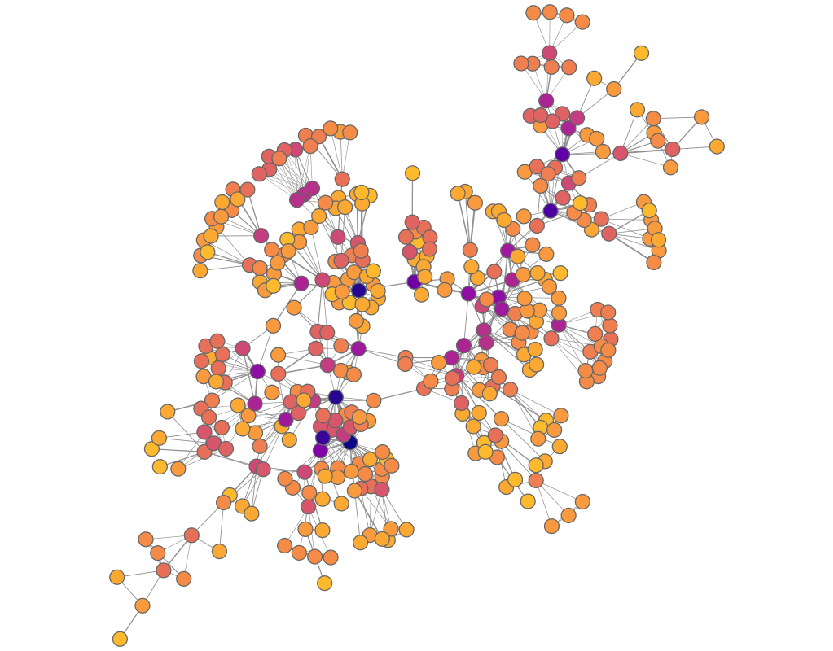}[image label]\node{$Q_{LCMC}$: 0.545};\end{tikzonimage}

\vspace{-45pt} \rotatebox{90}{\tiny \hspace{13pt} Layered} \hspace{-8pt}
\begin{tikzonimage}[width=0.315\linewidth]{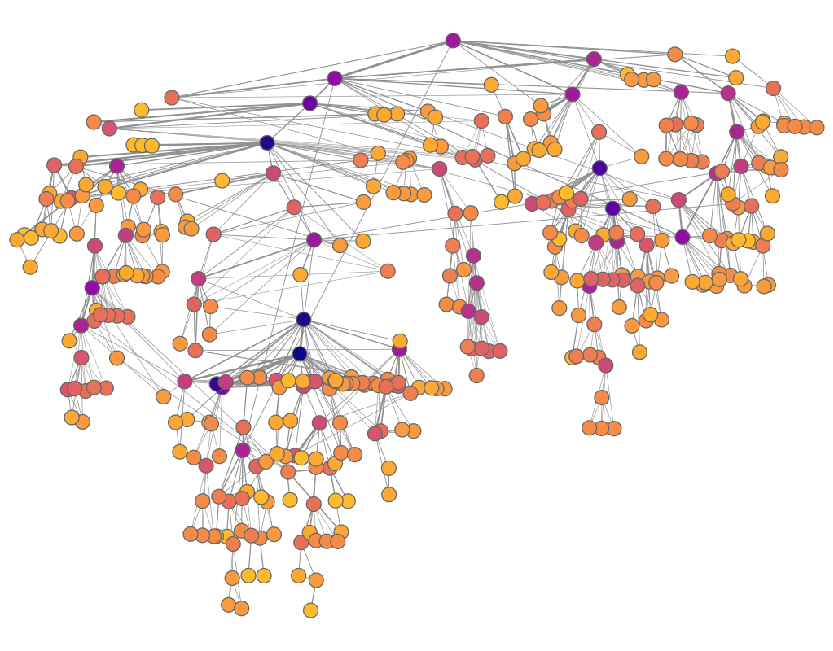}[image label]\node{$Q_{LCMC}$: 0.375};\end{tikzonimage}
\hspace{-2pt}
\begin{tikzonimage}[width=0.315\linewidth]{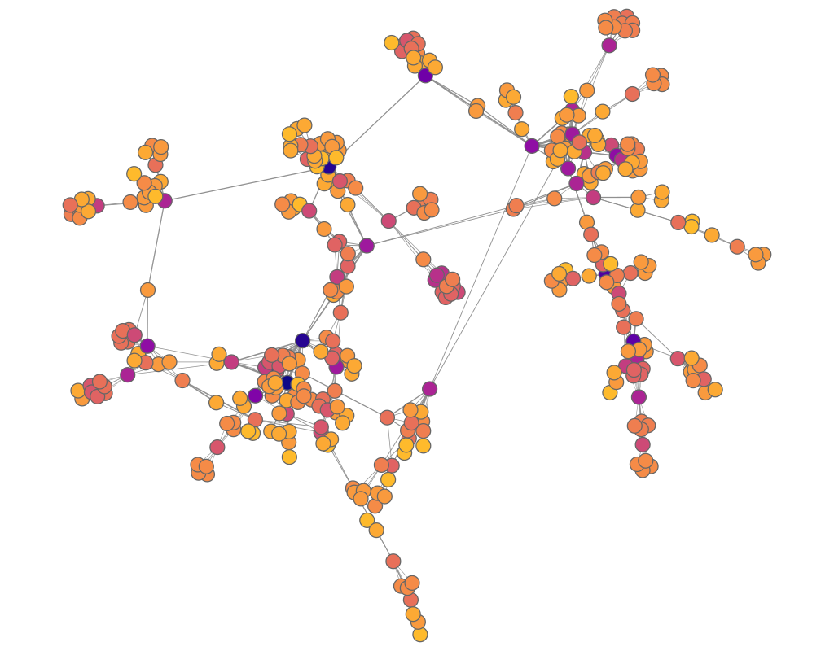}[image label]\node{$Q_{LCMC}$: 0.500};\end{tikzonimage}
\hspace{2.5pt}\rotatebox{90}{\tiny \hspace{15pt} sfdp} \hspace{-9pt}
\begin{tikzonimage}[width=0.28\linewidth]{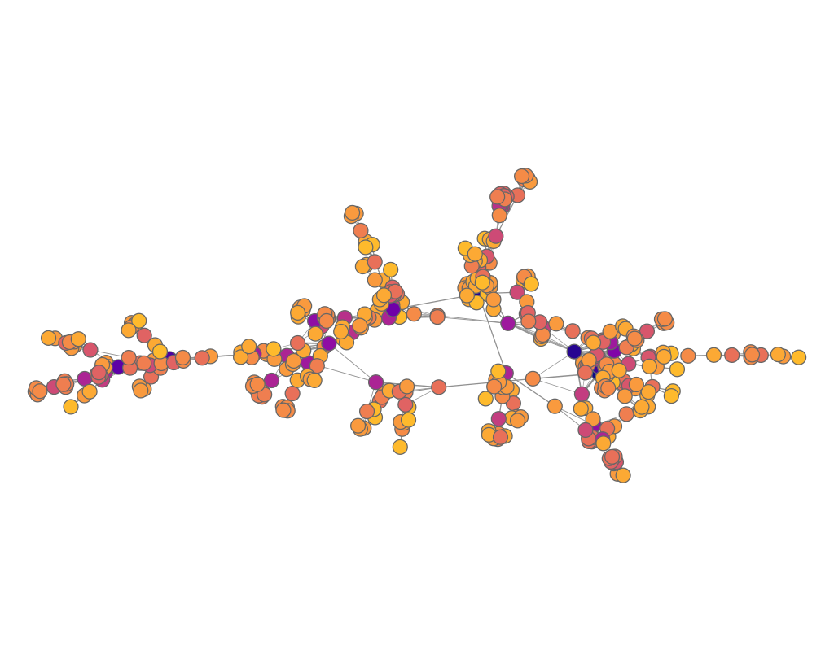}[image label]\node{$Q_{LCMC}$: 0.555};\end{tikzonimage}

{\footnotesize \hspace{23pt} Initial \hspace{33pt} Final}
\end{minipage}
}

    \caption{Example of \textit{sparse} graph initial layouts (left), final layouts (middle), and the result of running neato and sfdp (right). Graphs using our technique (bottom) converge more quickly and produce better layouts than the standard random approach (top).}
    \label{fig.sparse}
\end{figure*}

The rate of convergence is particularly important for larger graphs, where the average time per iteration is higher. For the large datasets, \textsc{airport}, \textsc{hic 1k net~6}, and \textsc{smith}, our approach converged faster than random layouts, with 37 vs.\ 76, 55 vs.\ 112, and 4 vs.\ 28 iterations, respectively. This phenomenon is also visible in \autoref{fig:teaser} and \ref{fig:large_graphs}, where graphs layouts are shown at several intervals---initial, 5 iterations, 10 iterations, etc. In all cases, the structures shown in the final graph are visible much earlier with our approach (iteration 5 or 10) than with a random layout (iteration 25 or more).

Therefore, we conclude that our approach  significantly improves convergence in most cases.

\begin{figure}[!b]

\hspace{-12pt}
\resizebox{15pt}{!}{\rotatebox{90}{\hspace{0pt}\subfloat[\textsc{science collab.}]{\hspace{80pt}}}}
\hspace{-8pt}\begin{tikzonimage}[width=0.24\linewidth]{fig/auto/init_layout_science_collab_network_random_final}[image label]\node{$Q_{LCMC}$: 0.433};\end{tikzonimage}
\hspace{-5pt}\begin{tikzonimage}[width=0.24\linewidth]{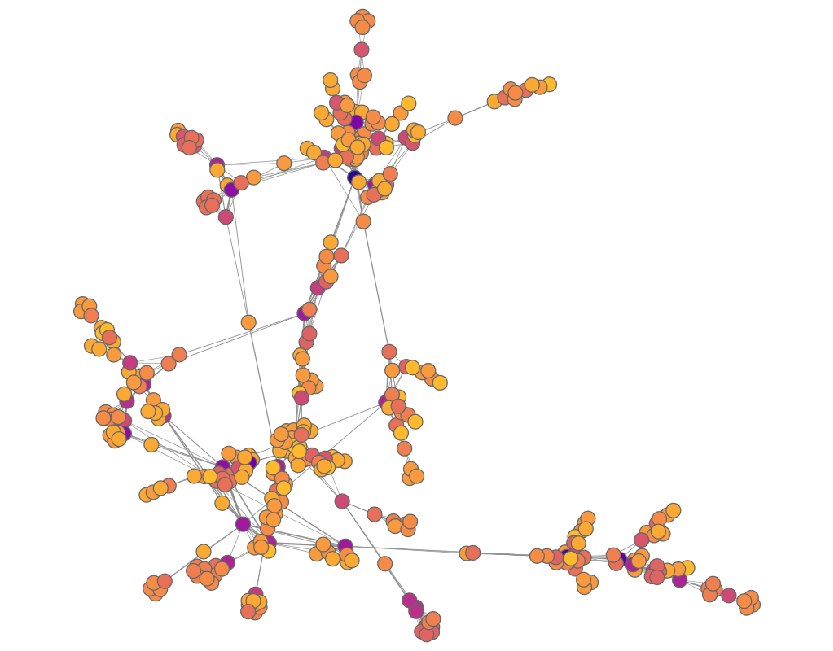}[image label]\node{$Q_{LCMC}$: 0.467};\end{tikzonimage}
\hspace{-6pt}
\textcolor{hlineColor}{\rule[-55pt]{.5pt}{100pt}}
\hspace{-7pt}\begin{tikzonimage}[width=0.3\linewidth]{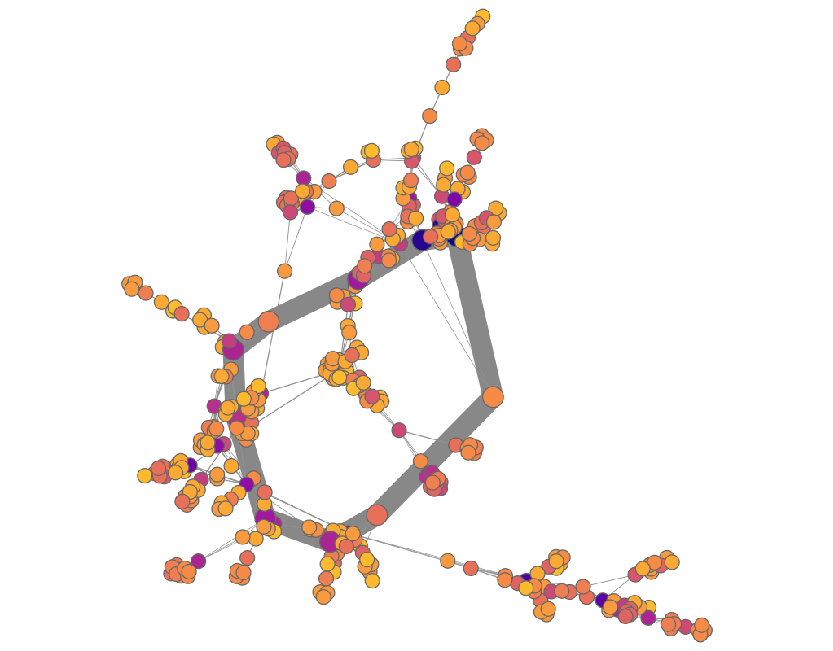}[image label]\node{$Q_{LCMC}$: 0.510};\end{tikzonimage}
\hspace{-12pt}\begin{tikzonimage}[width=0.3\linewidth]{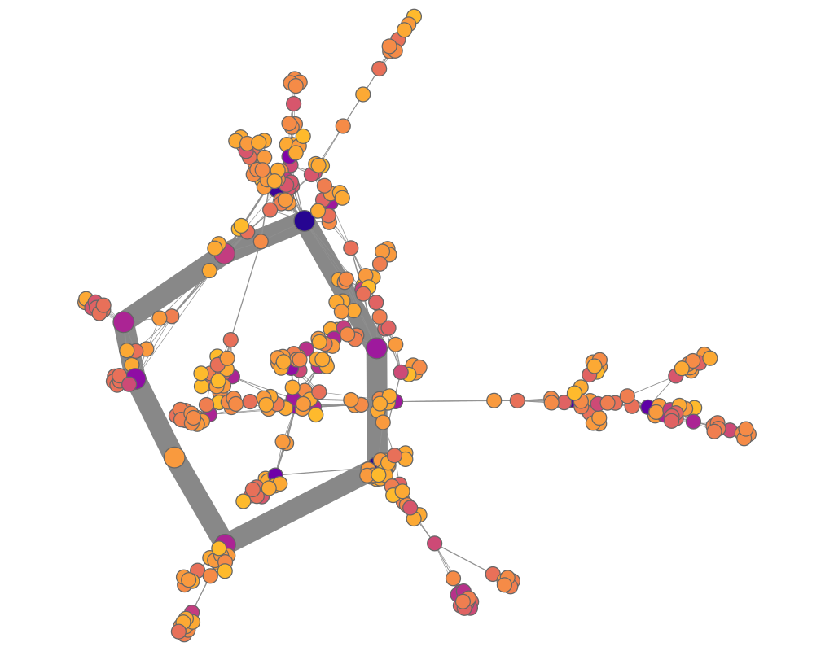}[image label]\node{$Q_{LCMC}$: 0.512};\end{tikzonimage}

\vspace{-55pt}
\hspace{-12pt}
\resizebox{15pt}{!}{\rotatebox{90}{\hspace{15pt}\subfloat[\textsc{retweet}\label{fig:h1_exp:retweet}]{\hspace{60pt}}}}
\hspace{-12pt}
\begin{tikzonimage}[width=0.24\linewidth]{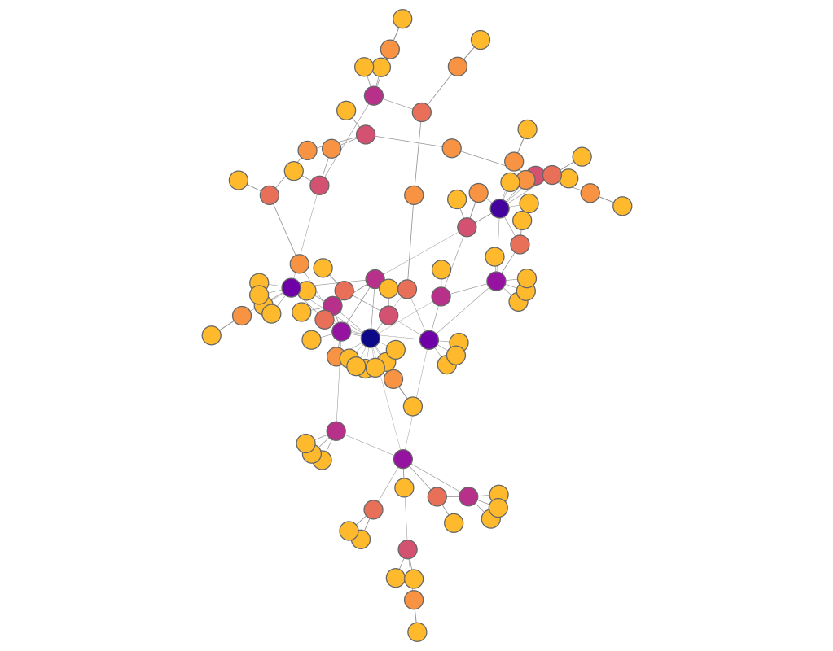}[image label]\node{$Q_{LCMC}$: 0.440};\end{tikzonimage}
\hspace{-10pt}\begin{tikzonimage}[width=0.24\linewidth]{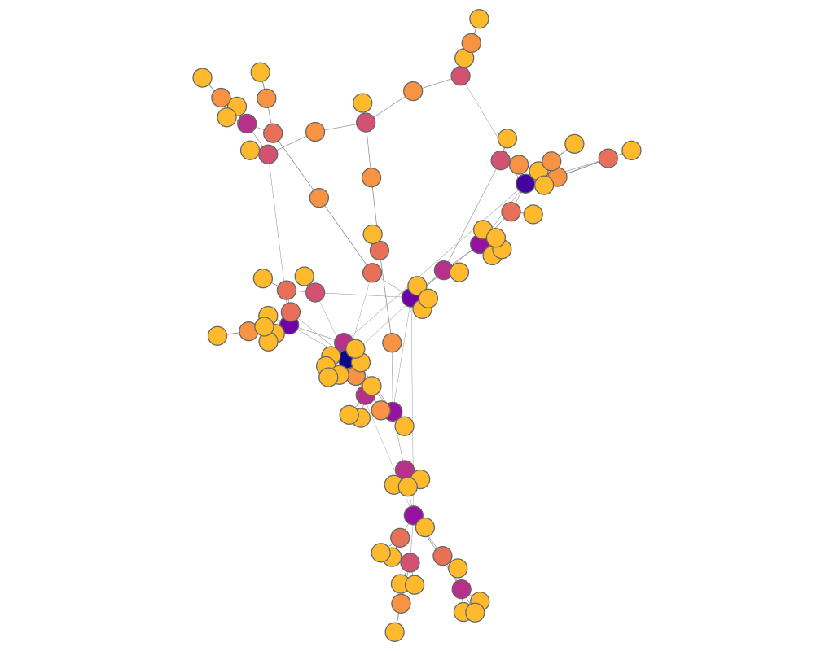}[image label]\node{$Q_{LCMC}$: 0.451};\end{tikzonimage}
\hspace{-17pt}
\hspace{0.5pt}
\begin{tikzonimage}[width=0.3\linewidth]{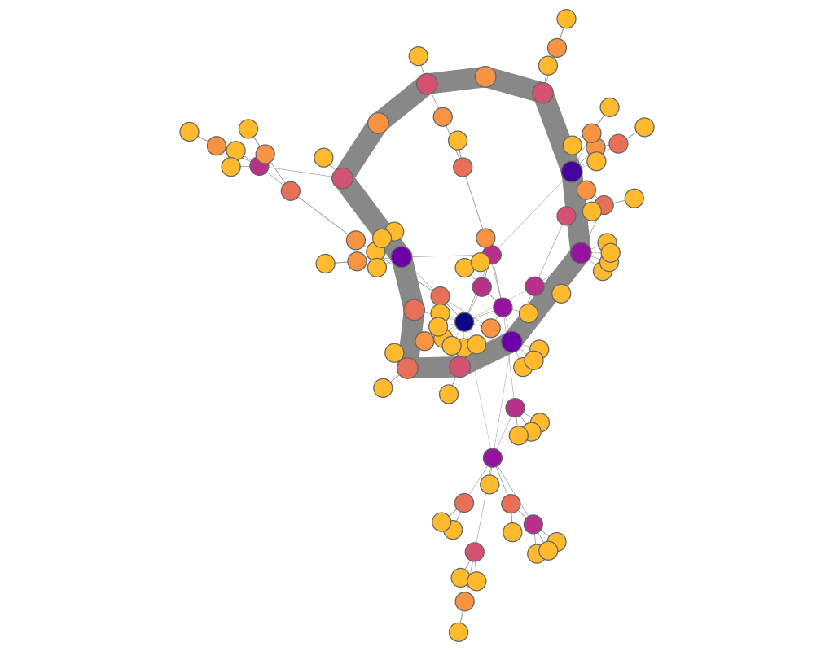}[image label]\node{$Q_{LCMC}$: 0.481};\end{tikzonimage}
\hspace{-25pt}
\begin{tikzonimage}[width=0.3\linewidth]{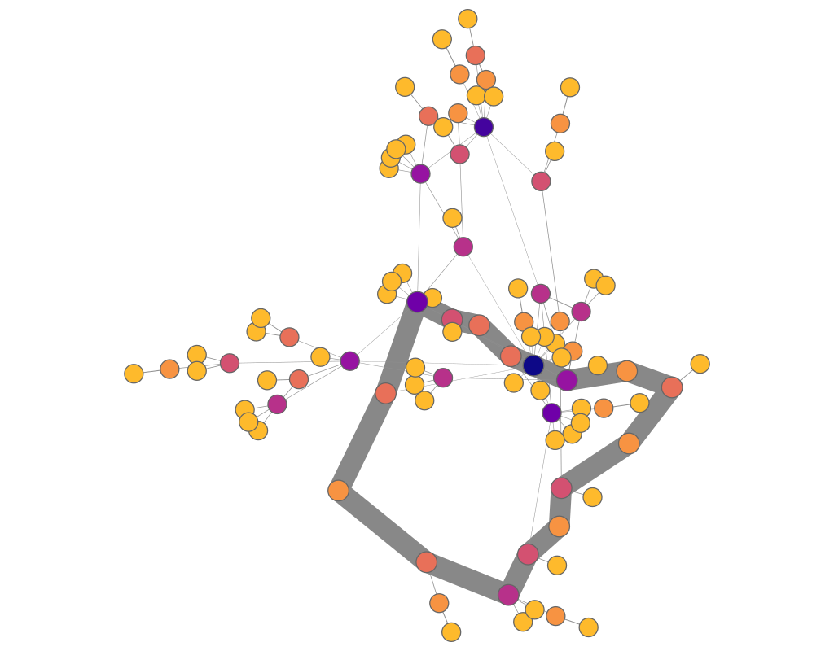}[image label]\node{$Q_{LCMC}$: 0.423};\end{tikzonimage}

\vspace{-5pt}
\hspace{15pt}{\tiny Random Layout}
\hspace{20pt}{\tiny $\Hgroup_0$ Forces~\cite{suh2019persistent}}
\hspace{50pt}{\tiny $\Hgroup_1$ Forces (our approach)}

    \caption{A comparison of graphs using a random initialization and random initialization + $\Hgroup_0$ forces, and examples of our approach on different cycles, random initialization + $\Hgroup_1$ forces. The results show that our approach reveals cycles otherwise difficult to observe in the data, and in most cases, our approach improves the overall graph layout.}
    \label{fig:h1_exp-1}
\end{figure}

\subsubsection{Compute Time}

The improved rate of convergence our method offers does not come for free. An initialization time penalty (i.e., $T_{IT}$), albeit small, must be paid. Due to the imprecise time measurements offered by the web browser, we forgo discussing small graphs and focus on larger graphs instead, namely \textsc{airport}, \textsc{hic 1k net~6}, and \textsc{smith}. For these graphs, there was an additional initialization cost ($T_{IT}$) of $15-20\%$ over a random initialization. 
Since we made no modifications to the force calculations, the average time per iteration, $T_{AIT}$, was virtually identical.

However, the time benefit of our approach is placed in context when considering the time to convergence, $T_{LCMC}$. Due to the low additional overhead and significant reduction in number of iterations, our approach offers a speed-up of $\sim2\times$ for \textsc{airport} and \textsc{hic 1k net~6}, and a speed-up of $\sim5.6\times$ for \textsc{smith}.

Given these observations, we conclude the benefits of fast convergence far outweigh the additional initialization time required for the spanning tree calculation, particularly for larger graphs.

\begin{figure*}[!ht]
    \centering \footnotesize 
    
    \subfloat[\textsc{airport} dataset comparing random initial layout to layered over 300 iterations\label{fig:large_graphs.airport}]{
    \begin{minipage}[b]{10pt}
        \centering 
        \rotatebox{90}{
        \hspace{18pt} Layered 
        \hspace{35pt} Random}
    \end{minipage}
    \begin{minipage}[b]{0.136\linewidth}
        \centering 
        \begin{tikzonimage}[width=\linewidth]{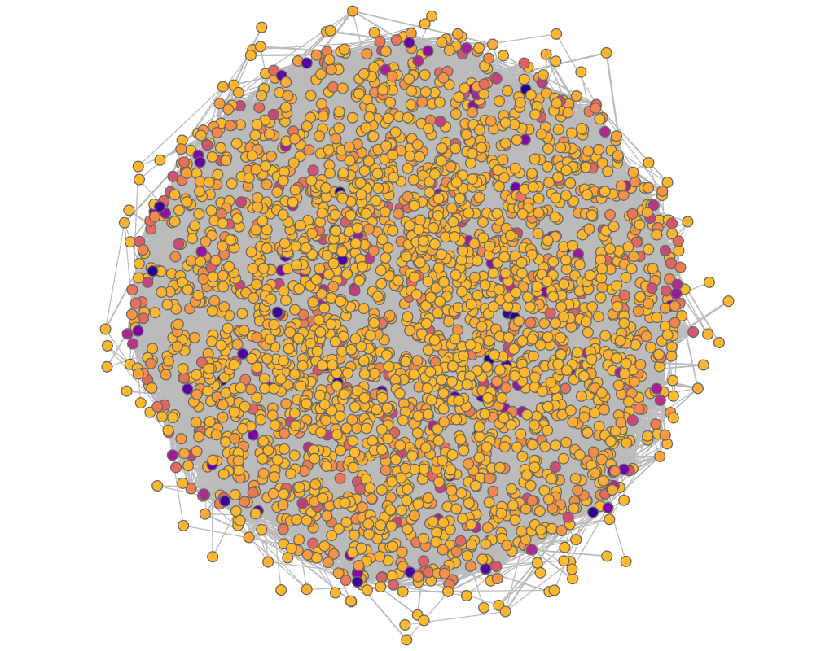}[image label]         \node{$Q_{LCMC}$: 0.004};     \end{tikzonimage}
        \begin{tikzonimage}[width=\linewidth]{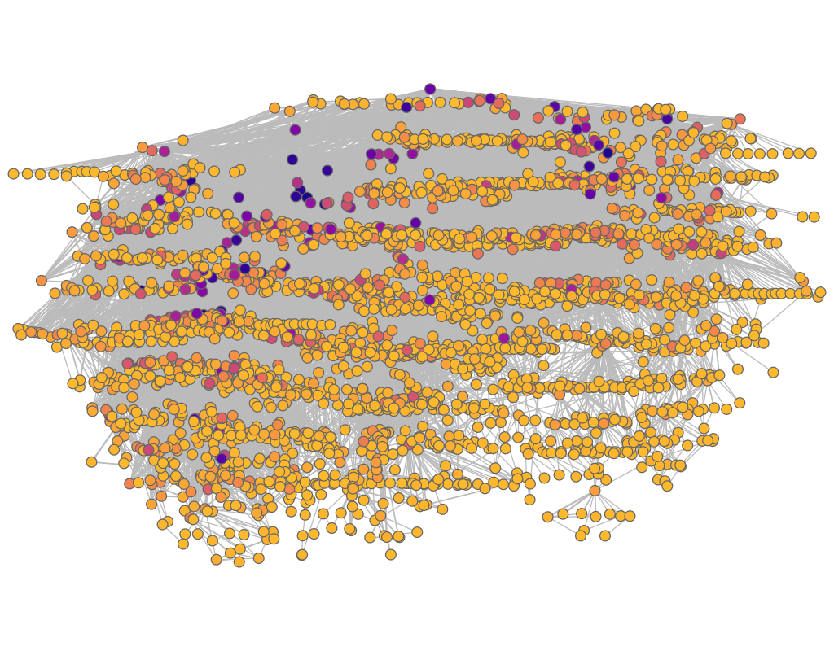}[image label]         \node{$Q_{LCMC}$: 0.070};     \end{tikzonimage}
        Initial Layout
    \end{minipage}
    \begin{minipage}[b]{0.136\linewidth}
        \centering 
        \begin{tikzonimage}[width=\linewidth]{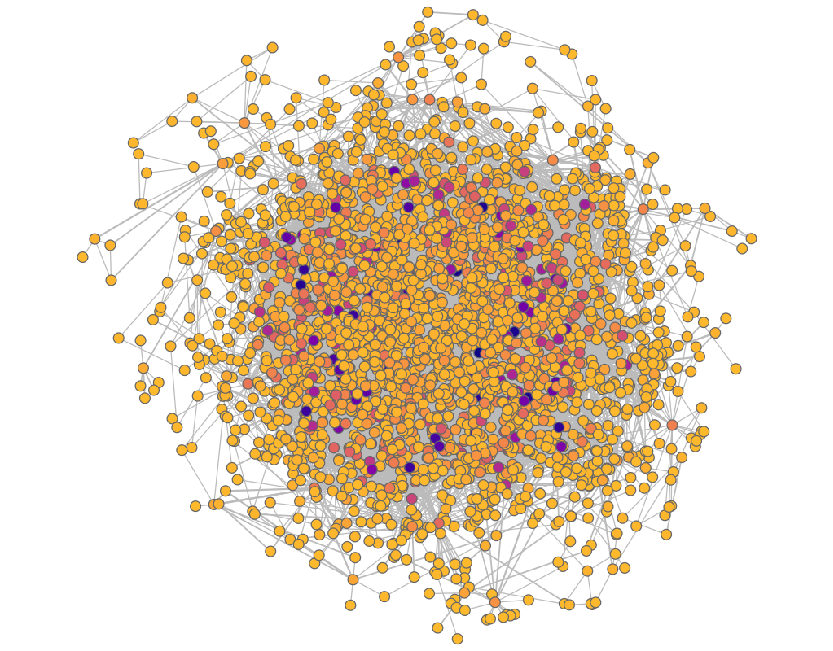}[image label]         \node{$Q_{LCMC}$: 0.032};     \end{tikzonimage}
        \begin{tikzonimage}[width=\linewidth]{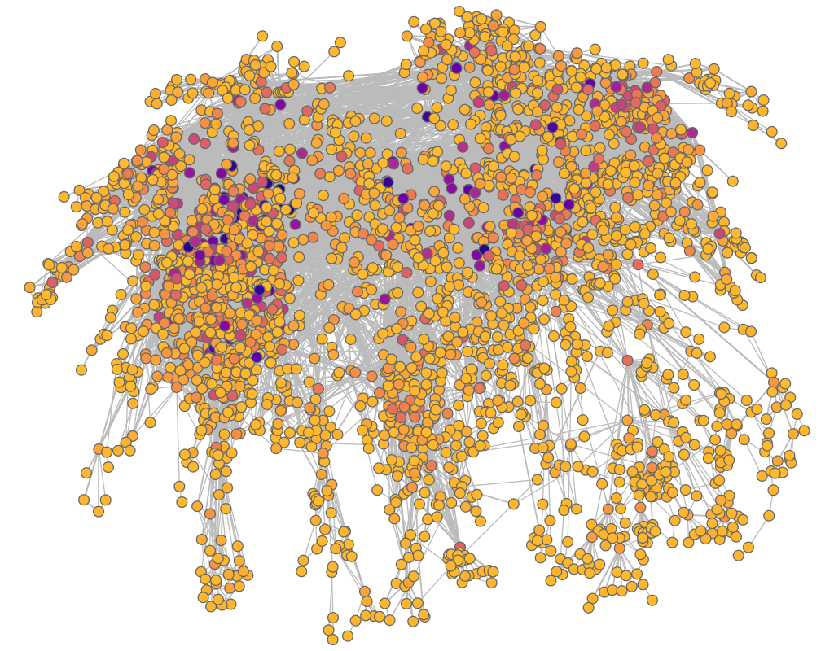}[image label]         \node{$Q_{LCMC}$: 0.150};     \end{tikzonimage}        
        5 Iterations
    \end{minipage}
    \begin{minipage}[b]{0.136\linewidth}
        \centering 
        \begin{tikzonimage}[width=\linewidth]{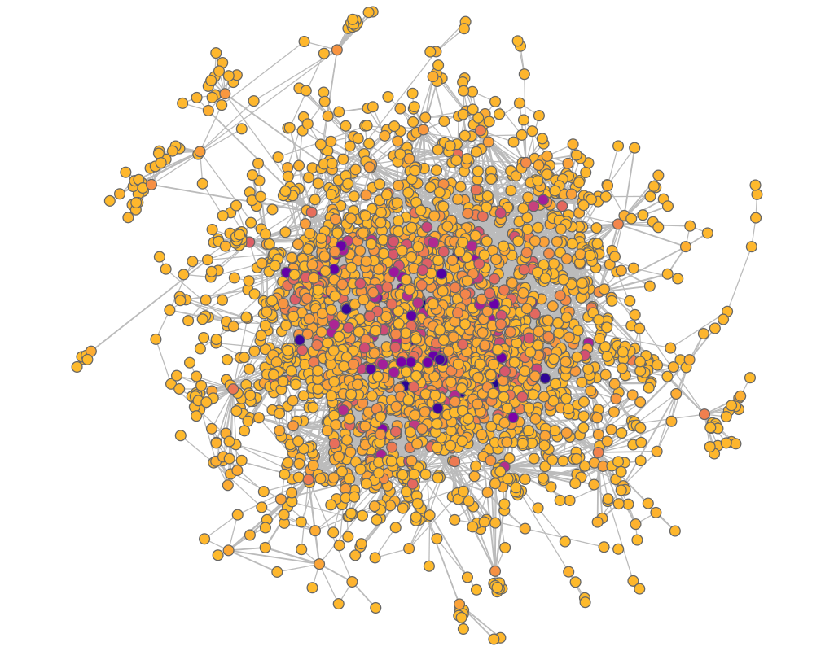}[image label]         \node{$Q_{LCMC}$: 0.115};     \end{tikzonimage}
        \begin{tikzonimage}[width=\linewidth]{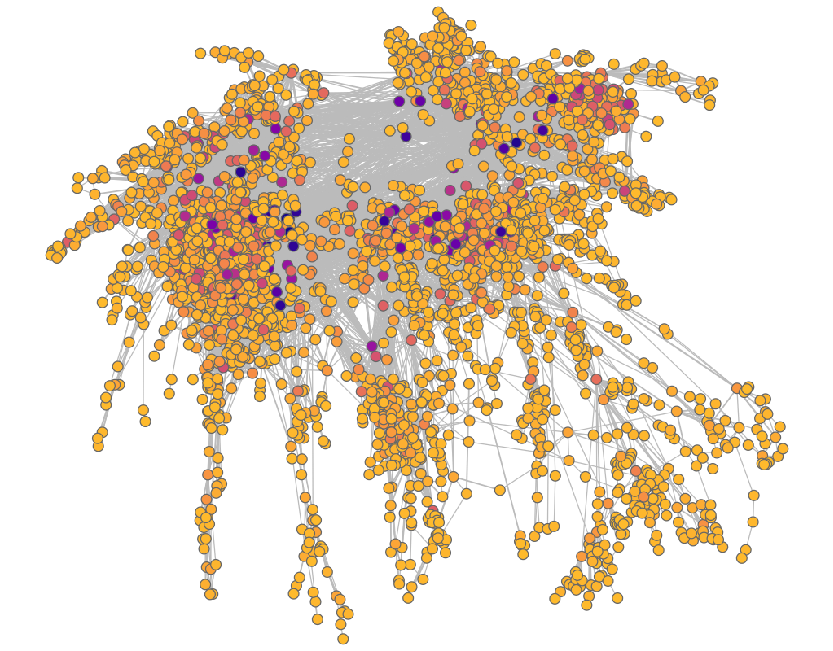}[image label]         \node{$Q_{LCMC}$: 0.197};     \end{tikzonimage}
        10 Iterations
    \end{minipage}
    \begin{minipage}[b]{0.136\linewidth}
        \centering 
        \begin{tikzonimage}[width=\linewidth]{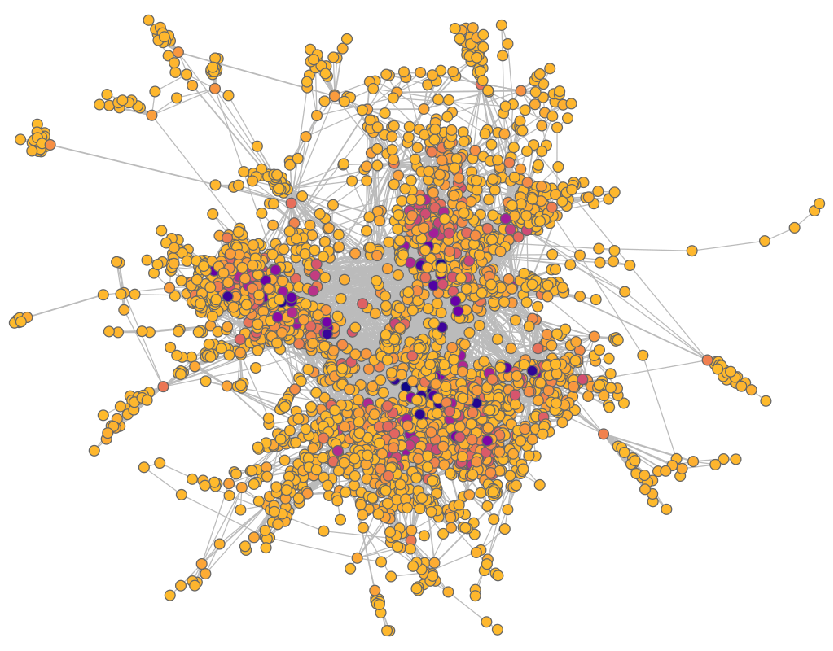}[image label]         \node{$Q_{LCMC}$: 0.187};     \end{tikzonimage}
        \begin{tikzonimage}[width=\linewidth]{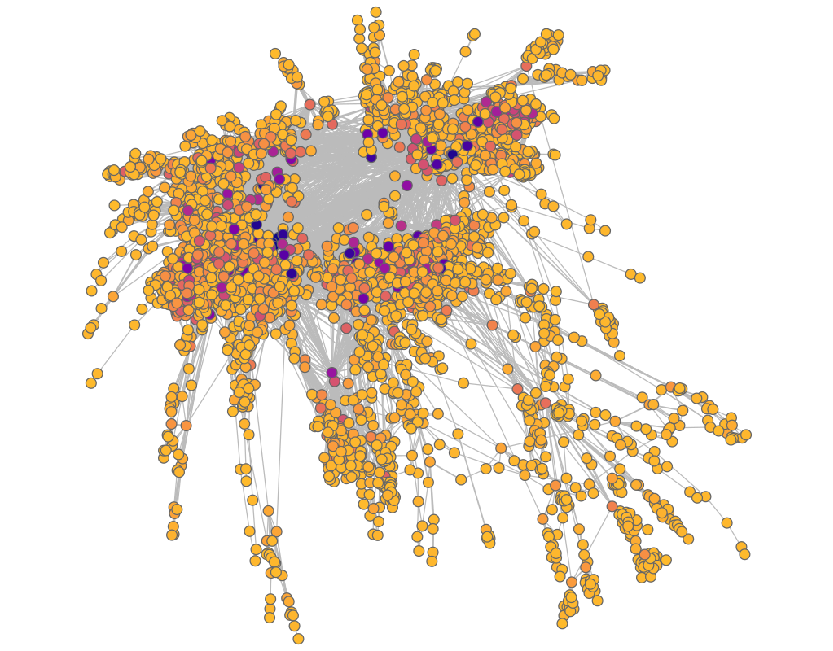}[image label]         \node{$Q_{LCMC}$: 0.230};     \end{tikzonimage}
        25 Iterations
    \end{minipage}
    \begin{minipage}[b]{0.136\linewidth}
        \centering 
        \begin{tikzonimage}[width=\linewidth]{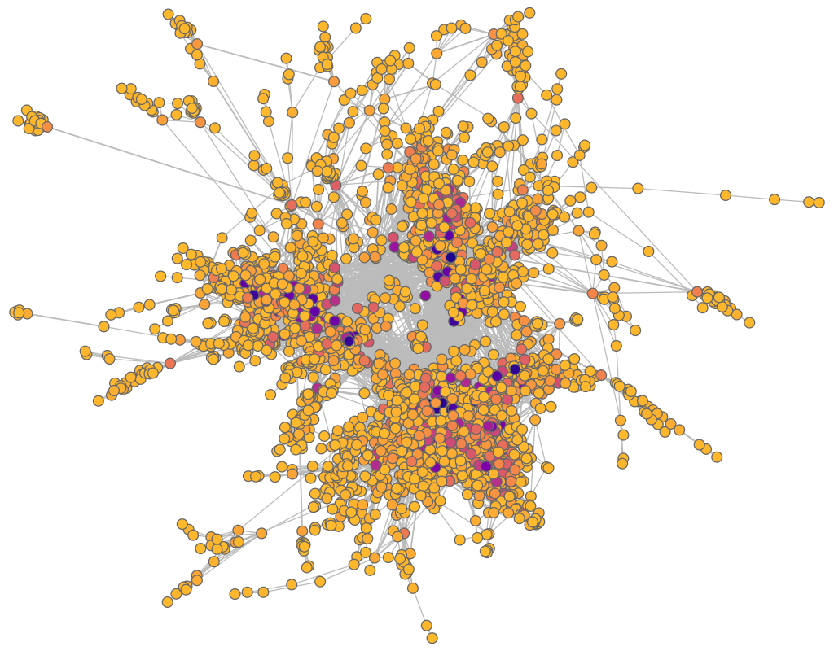}[image label]         \node{$Q_{LCMC}$: 0.224};     \end{tikzonimage}
        \begin{tikzonimage}[width=\linewidth]{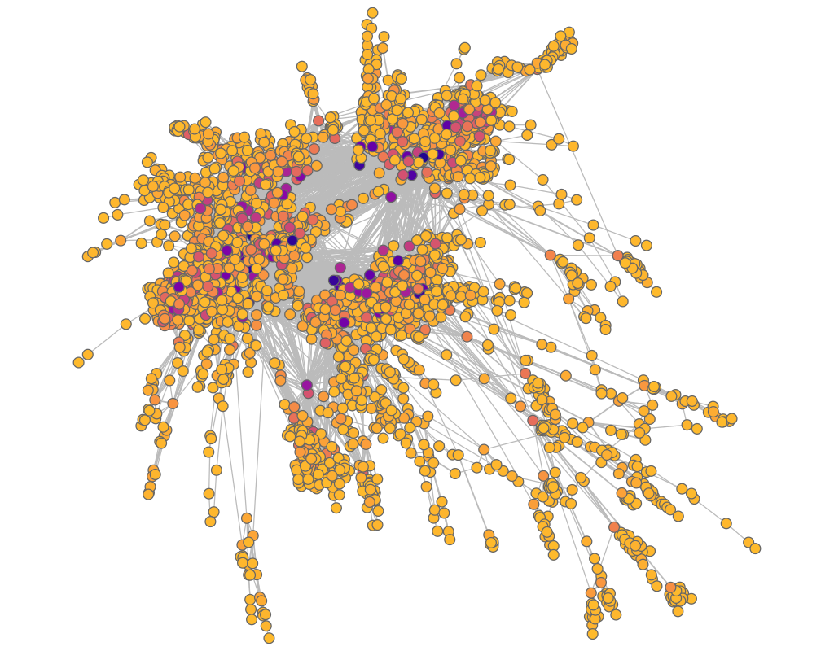}[image label]         \node{$Q_{LCMC}$: 0.249};     \end{tikzonimage}
        50 Iterations
    \end{minipage}
    \begin{minipage}[b]{0.136\linewidth}
        \centering 
        \begin{tikzonimage}[width=\linewidth]{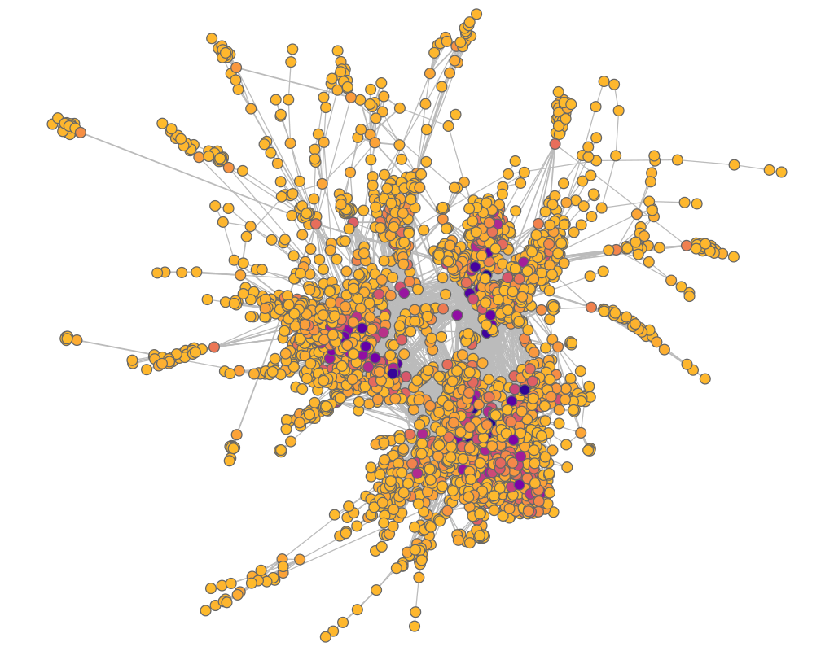}[image label]         \node{$Q_{LCMC}$: 0.245};     \end{tikzonimage}
        \begin{tikzonimage}[width=\linewidth]{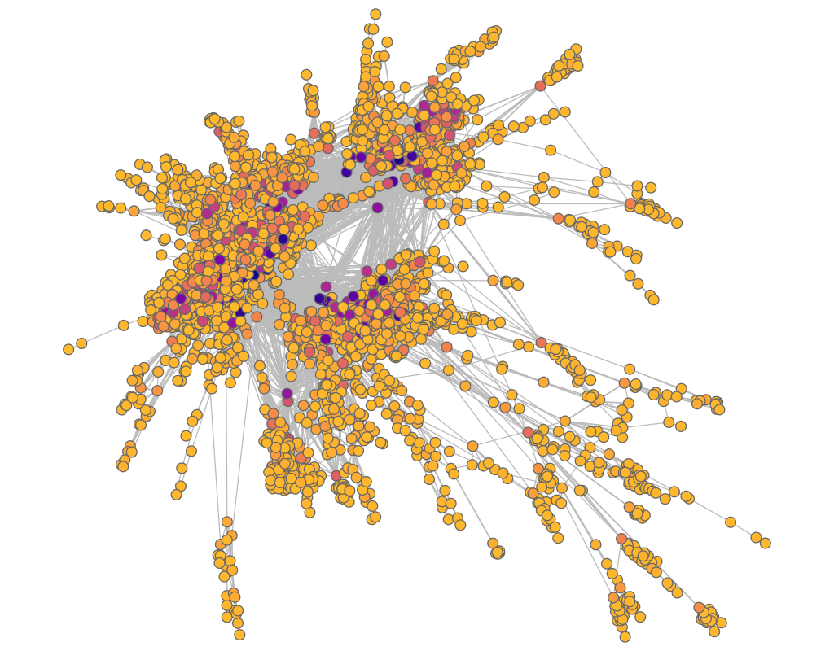}[image label]         \node{$Q_{LCMC}$: 0.254};     \end{tikzonimage}    
        Final Layout
    \end{minipage}
    \begin{minipage}[b]{10pt}
         \centering 
         \rotatebox{90}{
         \hspace{10pt} sfdp 
         \hspace{28pt} fdp
         \hspace{26pt} neato}
    \end{minipage}
    \begin{minipage}[b]{0.095\linewidth}
        \centering 
        \begin{tikzonimage}[width=\linewidth]{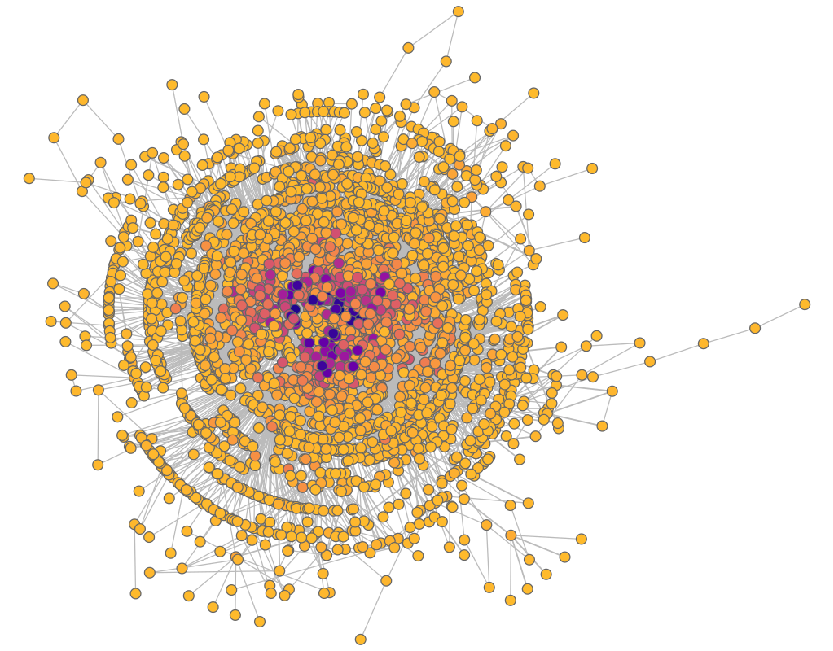}[image label]         \node{$Q_{LCMC}$: 0.269};     \end{tikzonimage}
        \begin{tikzonimage}[width=\linewidth]{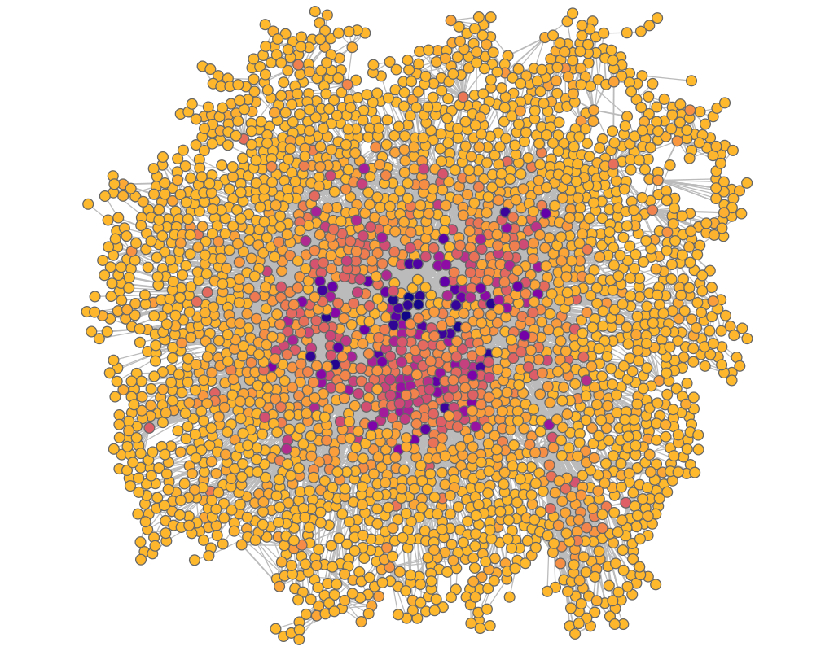}[image label]         \node{$Q_{LCMC}$: 0.213};     \end{tikzonimage}
        \begin{tikzonimage}[width=\linewidth]{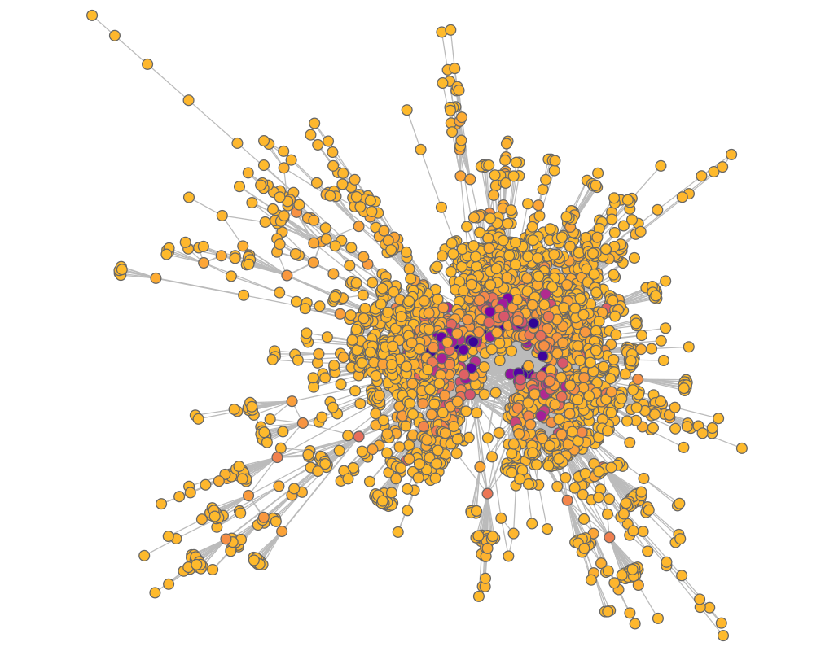}[image label]         \node{$Q_{LCMC}$: 0.289};     \end{tikzonimage}
    \end{minipage}
    }

    \vspace{-5pt}
    \subfloat[\textsc{smith} dataset comparing random initial layout to radial over 300 iterations\label{fig:large_graphs.smith}]{
    \begin{minipage}[b]{10pt}
        \centering 
        \rotatebox{90}{
        \hspace{24pt}Radial 
        \hspace{36pt} Random}
    \end{minipage}
    \begin{minipage}[b]{0.136\linewidth}
        \centering 
        \begin{tikzonimage}[width=\linewidth]{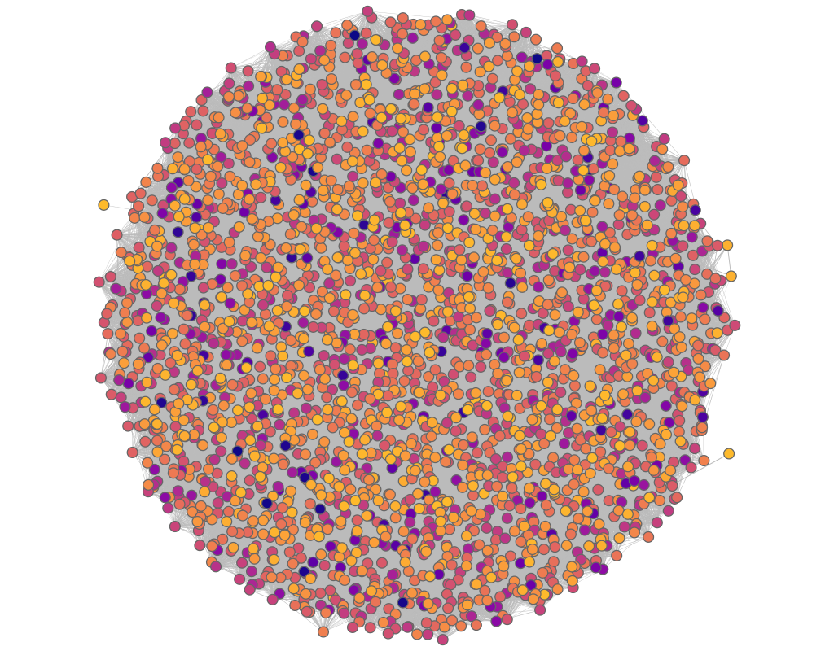}[image label]         \node{$Q_{LCMC}$: 0.000};     \end{tikzonimage}
        \begin{tikzonimage}[width=\linewidth]{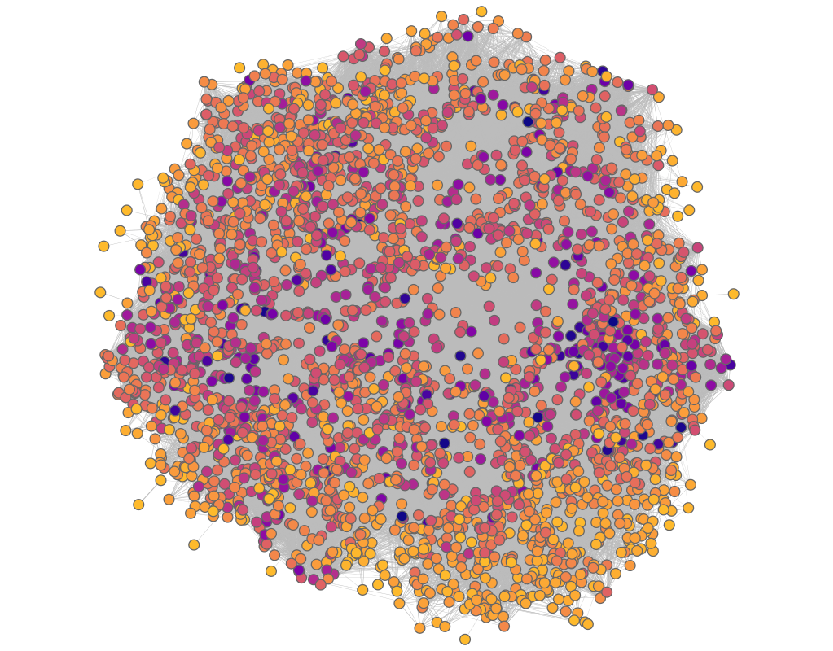}[image label]         \node{$Q_{LCMC}$: 0.080};     \end{tikzonimage}
        Initial Layout
    \end{minipage}
    \begin{minipage}[b]{0.136\linewidth}
        \centering 
        \begin{tikzonimage}[width=\linewidth]{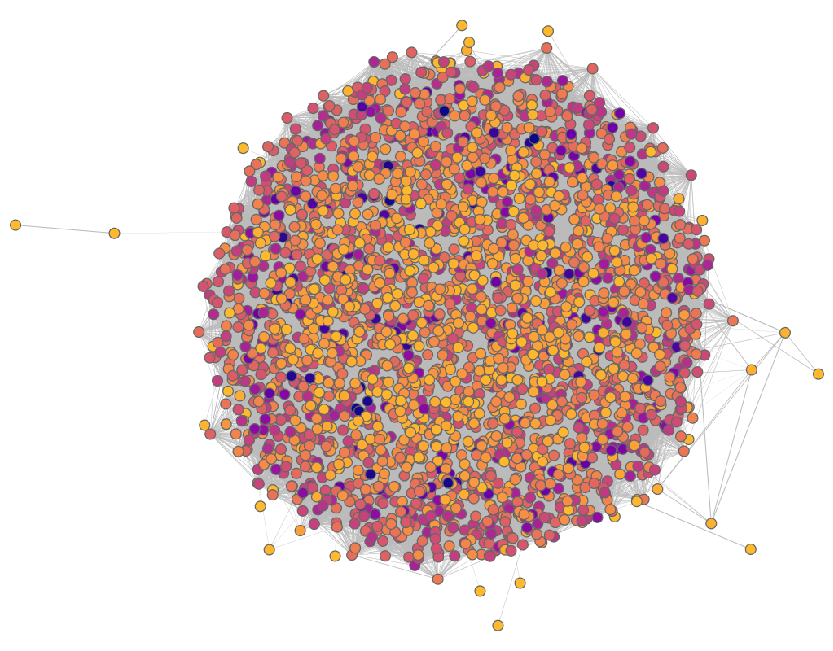}[image label]         \node{$Q_{LCMC}$: 0.001};     \end{tikzonimage}
        \begin{tikzonimage}[width=\linewidth]{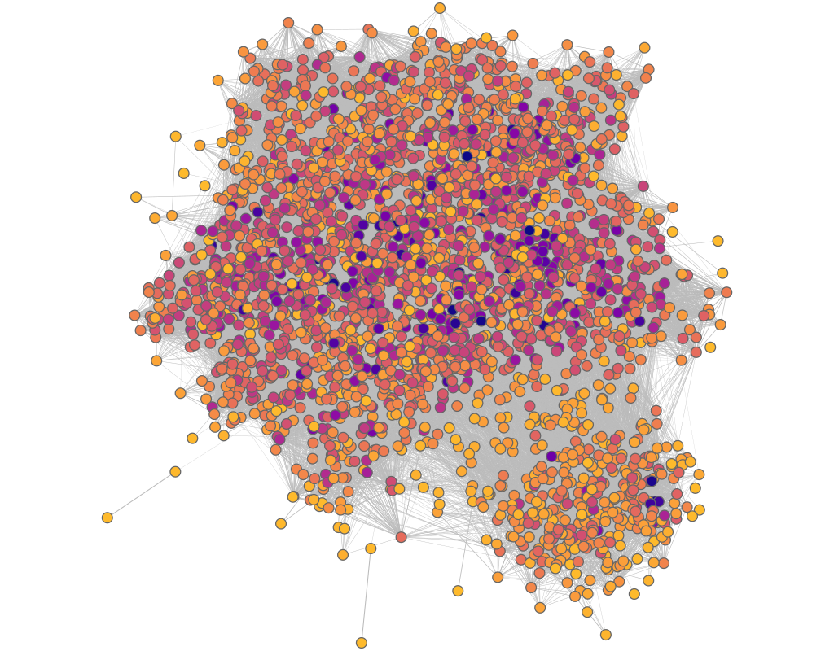}[image label]         \node{$Q_{LCMC}$: 0.050};     \end{tikzonimage}
        5 Iterations
    \end{minipage}
    \begin{minipage}[b]{0.136\linewidth}
        \centering 
        \begin{tikzonimage}[width=\linewidth]{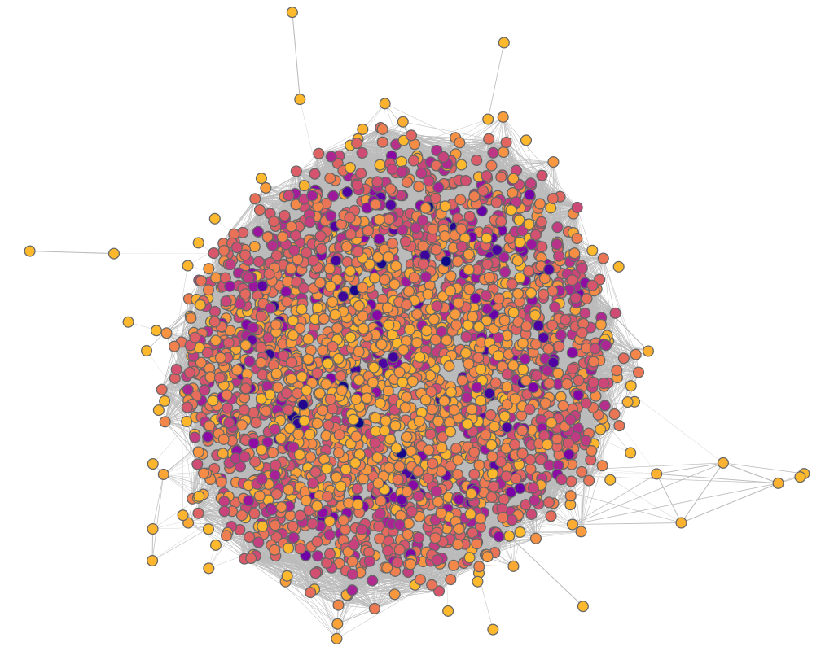}[image label]         \node{$Q_{LCMC}$: 0.006};     \end{tikzonimage}
        \begin{tikzonimage}[width=\linewidth]{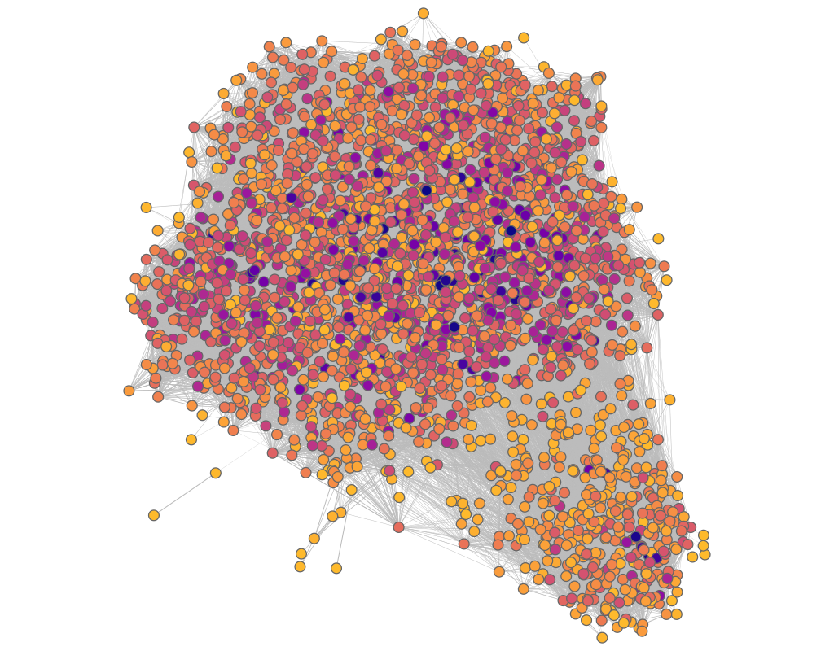}[image label]         \node{$Q_{LCMC}$: 0.044};     \end{tikzonimage}
        10 Iterations
    \end{minipage}
    \begin{minipage}[b]{0.136\linewidth}
        \centering 
        \begin{tikzonimage}[width=\linewidth]{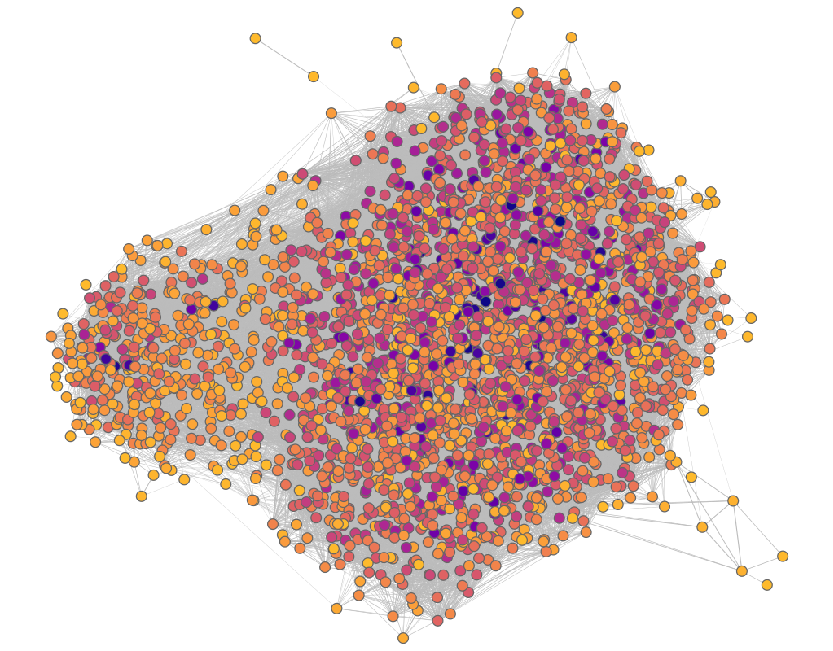}[image label]         \node{$Q_{LCMC}$: 0.031};     \end{tikzonimage}
        \begin{tikzonimage}[width=\linewidth]{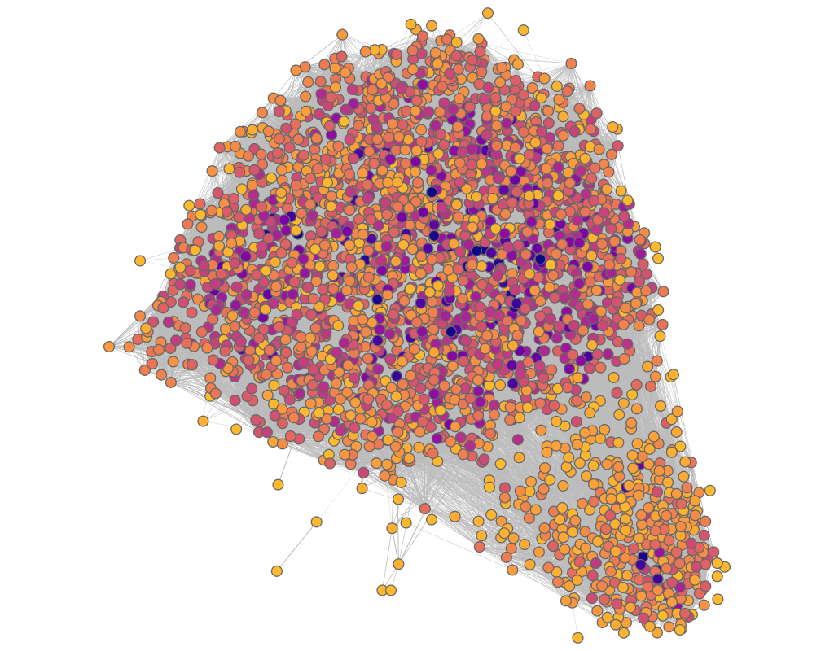}[image label]         \node{$Q_{LCMC}$: 0.042};     \end{tikzonimage}
        25 Iterations
    \end{minipage}
    \begin{minipage}[b]{0.136\linewidth}
        \centering 
        \begin{tikzonimage}[width=\linewidth]{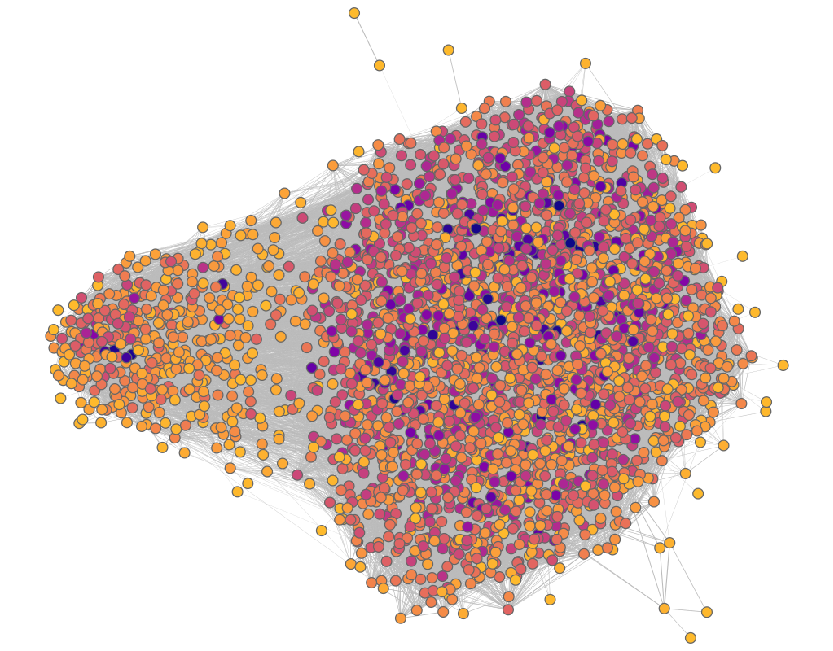}[image label]         \node{$Q_{LCMC}$: 0.039};     \end{tikzonimage}
        \begin{tikzonimage}[width=\linewidth]{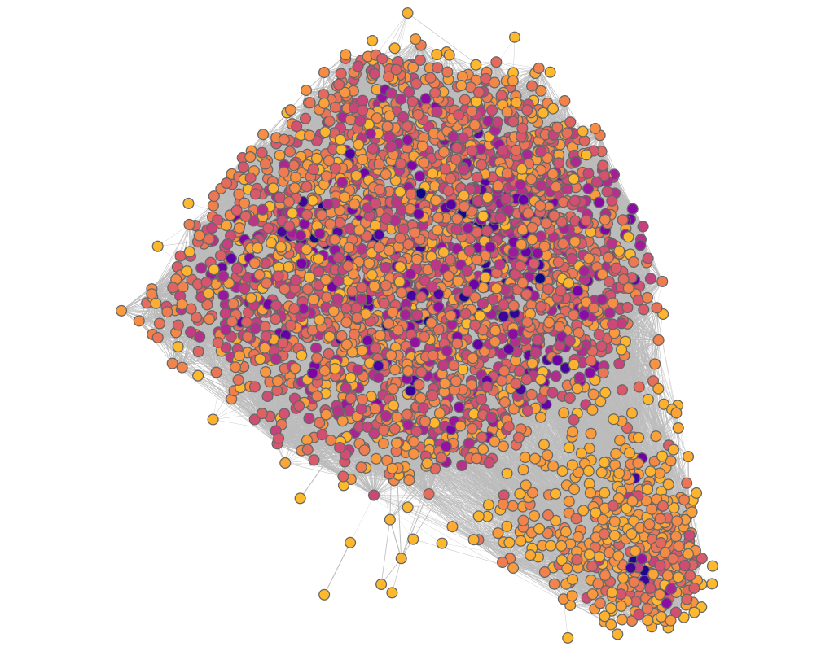}[image label]         \node{$Q_{LCMC}$: 0.044};     \end{tikzonimage}
        50 Iterations
    \end{minipage}
    \begin{minipage}[b]{0.136\linewidth}
        \centering 
        \begin{tikzonimage}[width=\linewidth]{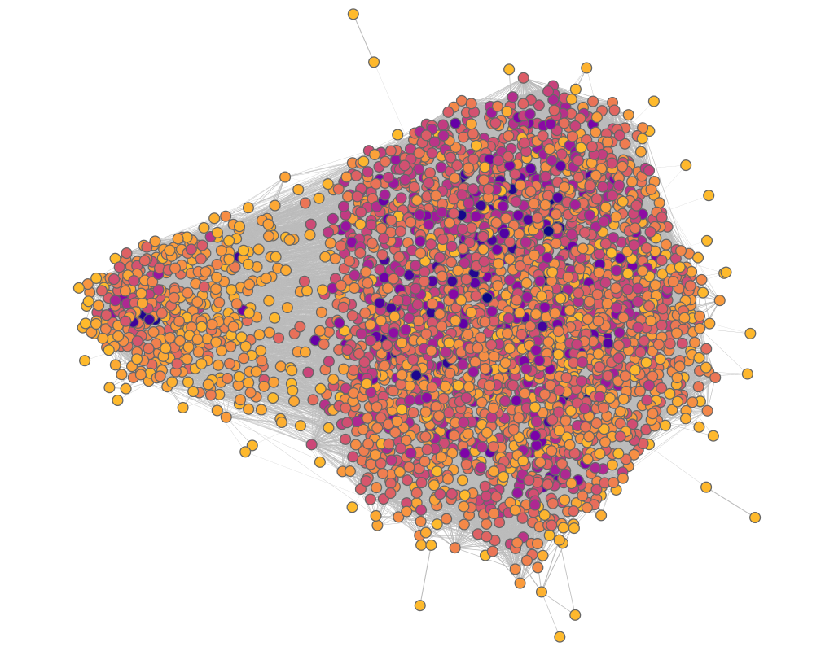}[image label]         \node{$Q_{LCMC}$: 0.045};     \end{tikzonimage}
        \begin{tikzonimage}[width=\linewidth]{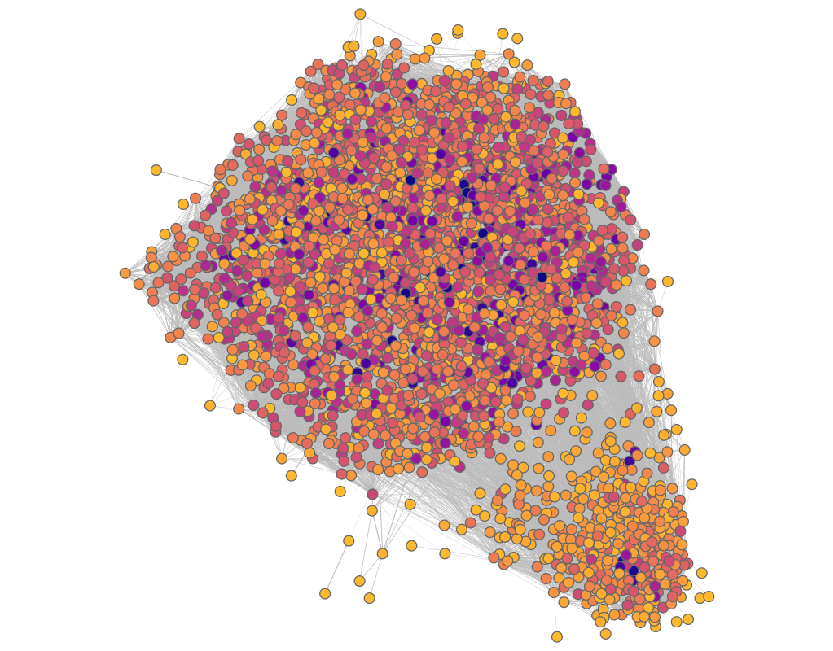}[image label]         \node{$Q_{LCMC}$: 0.045};     \end{tikzonimage}
        Final Layout
    \end{minipage}
    \begin{minipage}[b]{1pt}
         \centering 
         \rotatebox{90}{
         \hspace{10pt} sfdp 
         \hspace{28pt} fdp
         \hspace{26pt} neato}
    \end{minipage}
    \begin{minipage}[b]{0.095\linewidth}
        \begin{tikzonimage}[width=\linewidth]{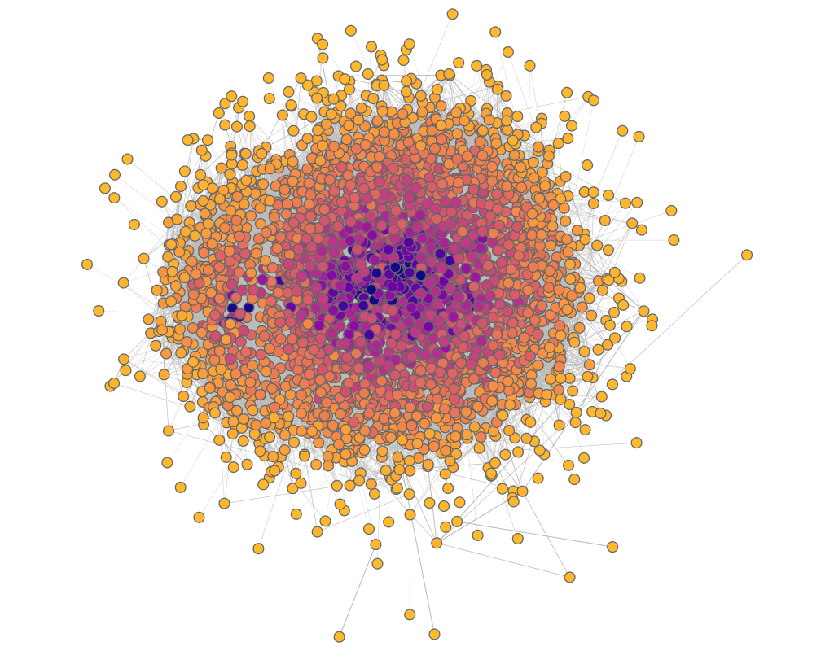}[image label]         \node{$Q_{LCMC}$: 0.046};     \end{tikzonimage}
        \begin{tikzonimage}[width=\linewidth]{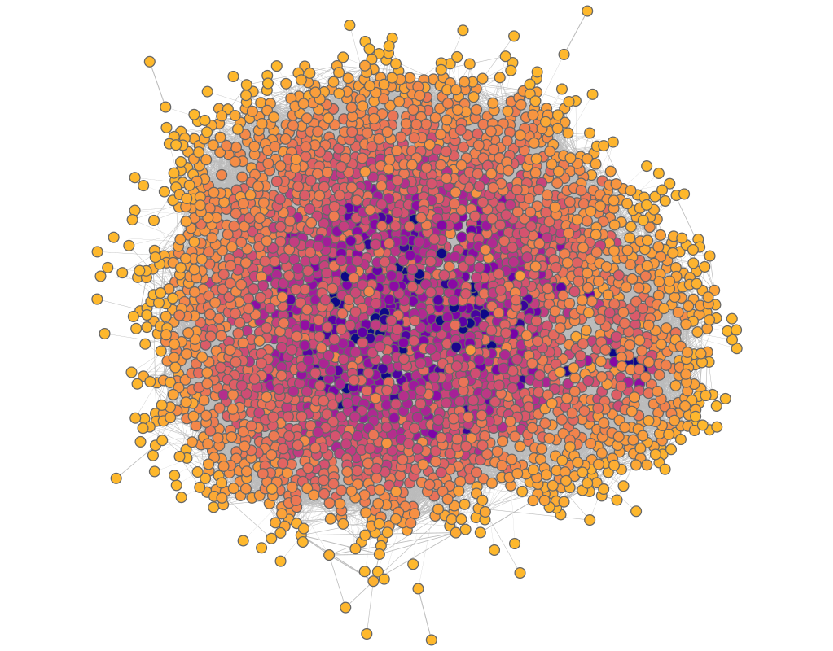}[image label]         \node{$Q_{LCMC}$: 0.048};     \end{tikzonimage}
        \begin{tikzonimage}[width=\linewidth]{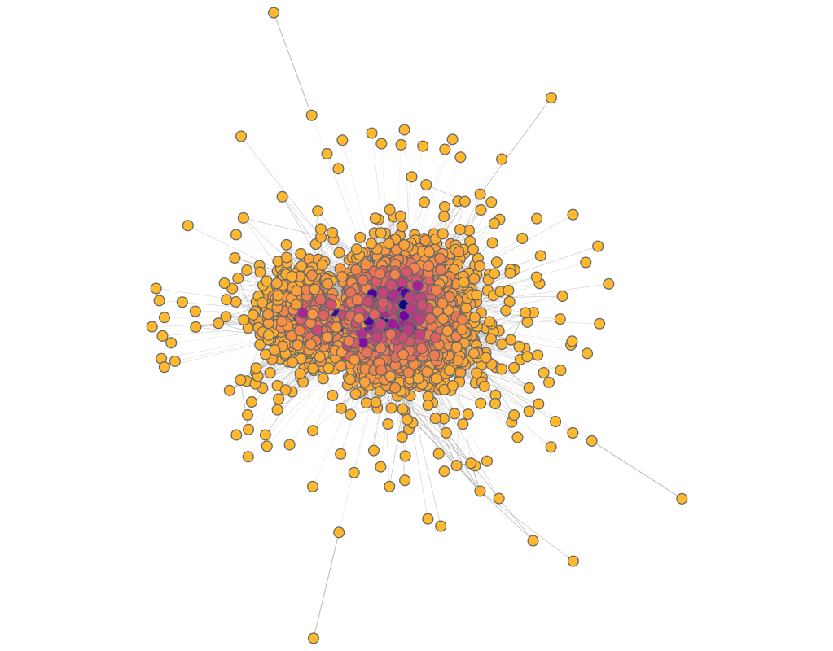}[image label]         \node{$Q_{LCMC}$: 0.063};     \end{tikzonimage}
    \end{minipage}
    }

    \caption{A comparison of random initial layout to our approach for large graphs shows the layouts at various stages of processing. Importantly, our approach shows the graph structure much earlier in the process, and the final layout is similar or better quality. We also compare to the results of non-interactive methods of neato, fdp, and sfdp (right).}
    \label{fig:large_graphs}
    \vspace{2pt}
\end{figure*}

\subsubsection{Comparison to Other Algorithms}
\label{sec:eval:other_algos}

We also compared our results to other graph layout algorithms, namely, \emph{neato}, \emph{fdp}, and \emph{sfdp}. Due to space consideration, the majority of results are presented in our supplemental document. However, results on larger graphs can be seen in \autoref{fig:teaser} and \ref{fig:large_graphs}, and on smaller graphs in \autoref{fig.dense} and \ref{fig.sparse}. Generally speaking, one or more of these methods produced graph layouts with similar or slightly better $Q_{LCMC}$ scores than our approach. Therefore, our method can be viewed as closing the gap between random initialization force-directed layout and these more advanced methods. Ultimately, we are still limited by the capacity of the D3.js force-directed layout to produce high-quality final layouts. Our method  provides only a boost. Finally, an important aspect of our approach is that the layouts are intended to be interactive. Users are supposed to explore $\Hgroup_0$ and $\Hgroup_1$ features to learn the graph's structure, which is a capability not necessarily offered by these other methods.

\begin{figure}[!b]

\vspace{8pt}

\hspace{2pt}
\rotatebox{90}{\tiny\hspace{5pt}Random Layout}
\hspace{5pt}
\begin{tikzonimage}[width=0.21\linewidth]{fig/auto/init_layout_bio-diseasome_random_final}[image label]\node{$Q_{LCMC}$: 0.407};\end{tikzonimage}
\hspace{0pt}
\begin{tikzonimage}[width=0.21\linewidth]{fig/auto/init_layout_circular_ladder_graph__100___random_final}[image label]\node{$Q_{LCMC}$: 0.455};\end{tikzonimage}
\hspace{0pt}
\begin{tikzonimage}[width=0.21\linewidth]{fig/auto/init_layout_engymes_g123_random_final}[image label]\node{$Q_{LCMC}$: 0.374};\end{tikzonimage}
\hspace{0pt}
\begin{tikzonimage}[width=0.21\linewidth]{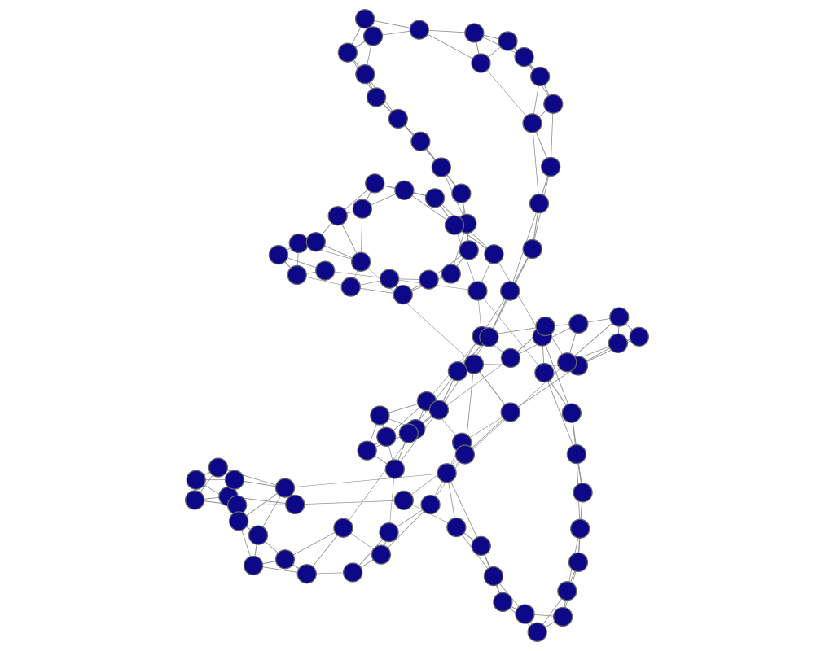}[image label]\node{$Q_{LCMC}$: 0.479};\end{tikzonimage}

\vspace{3pt}
\hspace{2pt}
\rotatebox{90}{\tiny\hspace{5pt}$\Hgroup_0$ Forces~\cite{suh2019persistent}}
\hspace{5pt}
\begin{tikzonimage}[width=0.21\linewidth]{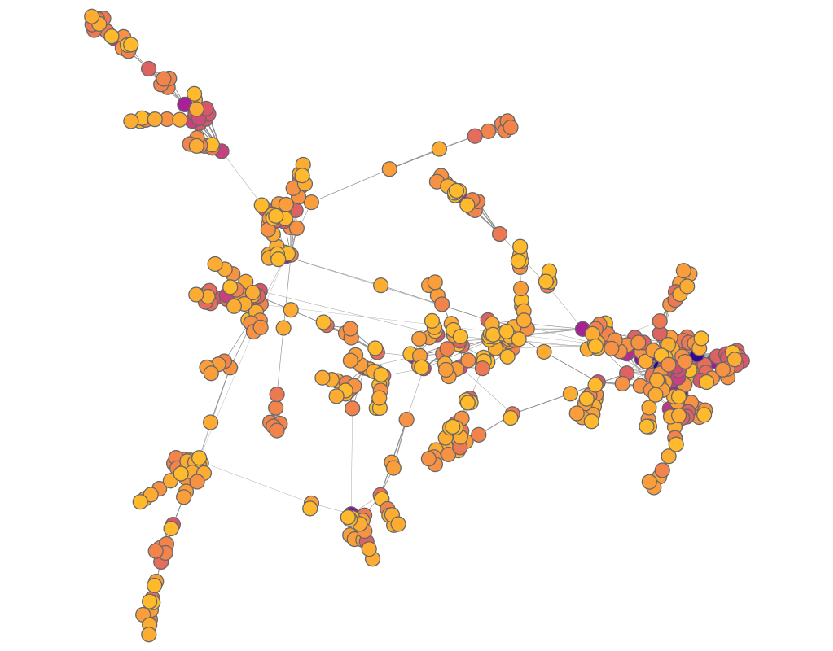}[image label]\node{$Q_{LCMC}$: 0.402};\end{tikzonimage}
\hspace{0pt}
\begin{tikzonimage}[width=0.21\linewidth]{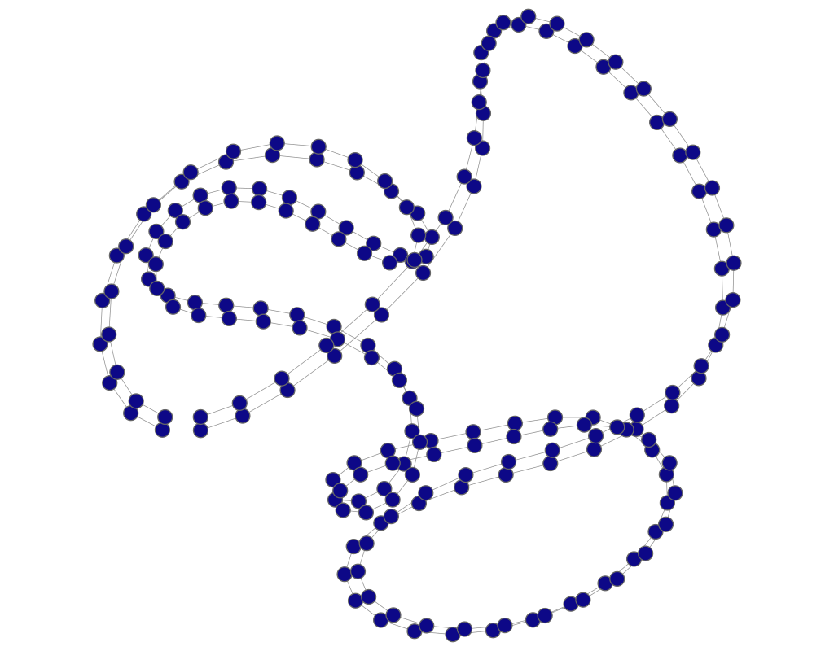}[image label]\node{$Q_{LCMC}$: 0.600};\end{tikzonimage}
\hspace{0pt}
\begin{tikzonimage}[width=0.21\linewidth]{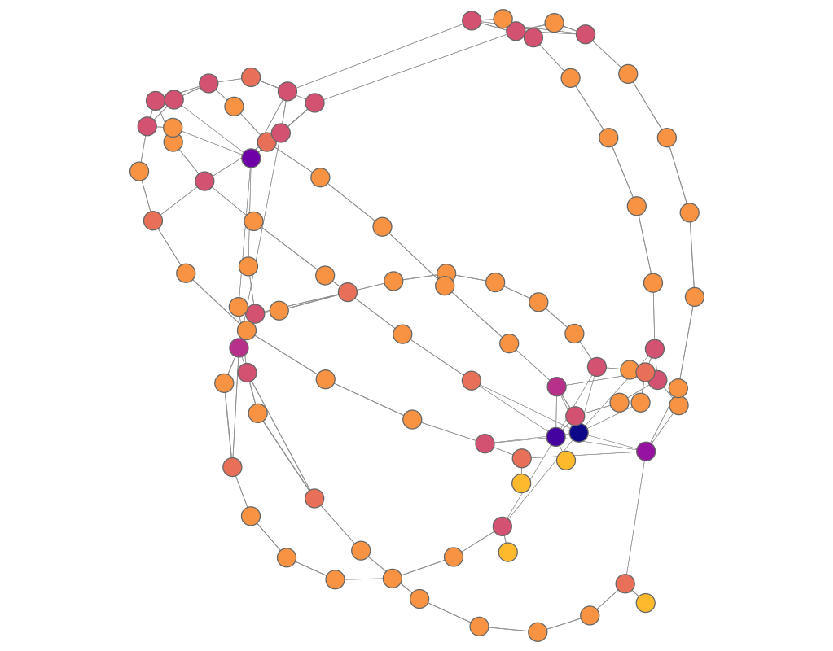}[image label]\node{$Q_{LCMC}$: 0.438};\end{tikzonimage}
\hspace{0pt}
\begin{tikzonimage}[width=0.21\linewidth]{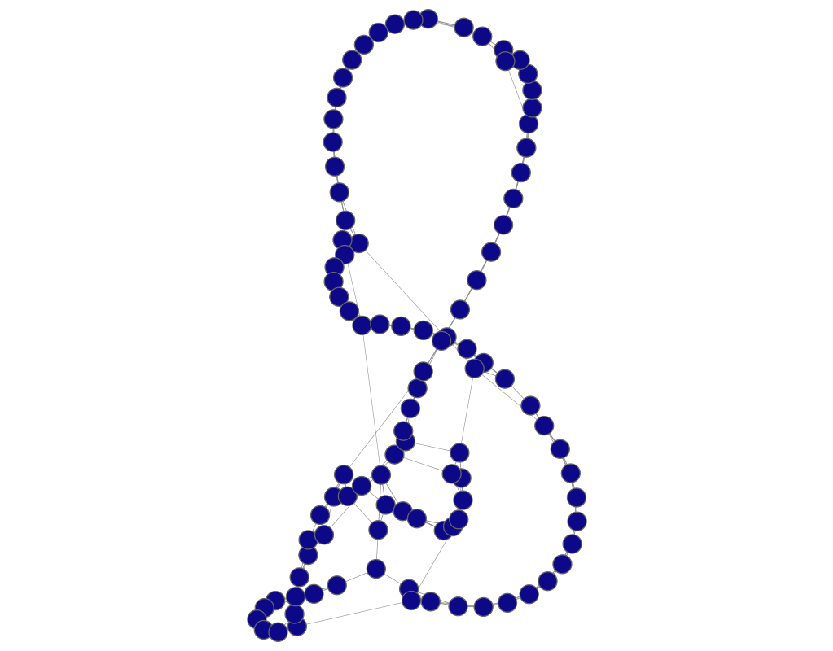}[image label]\node{$Q_{LCMC}$: 0.594};\end{tikzonimage}

\vspace{3pt}
\rotatebox{90}{\tiny\hspace{9pt}$\Hgroup_1$ Forces}
\rotatebox{90}{\tiny\hspace{6pt}(our approach)}
\begin{tikzonimage}[width=0.25\linewidth]{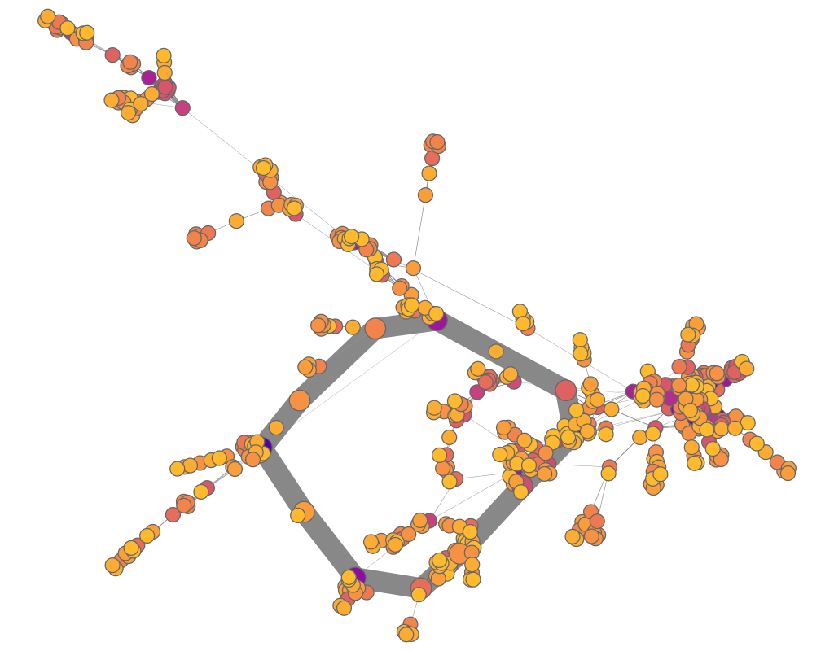}[image label]\node{$Q_{LCMC}$: 0.477};\end{tikzonimage}
\hspace{-10pt}
\begin{tikzonimage}[width=0.25\linewidth]{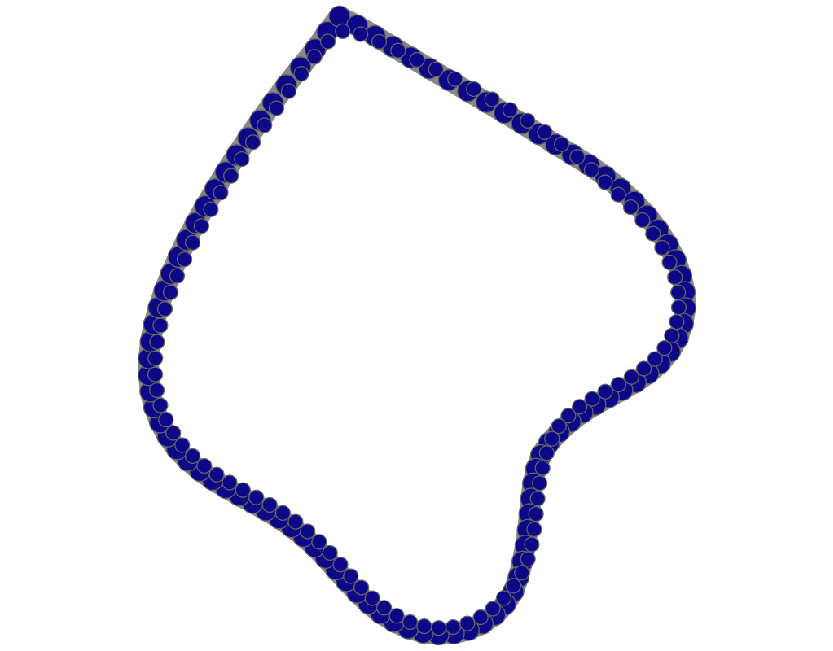}[image label]\node{$Q_{LCMC}$: 0.831};\end{tikzonimage}
\hspace{-10pt}
\begin{tikzonimage}[width=0.25\linewidth]{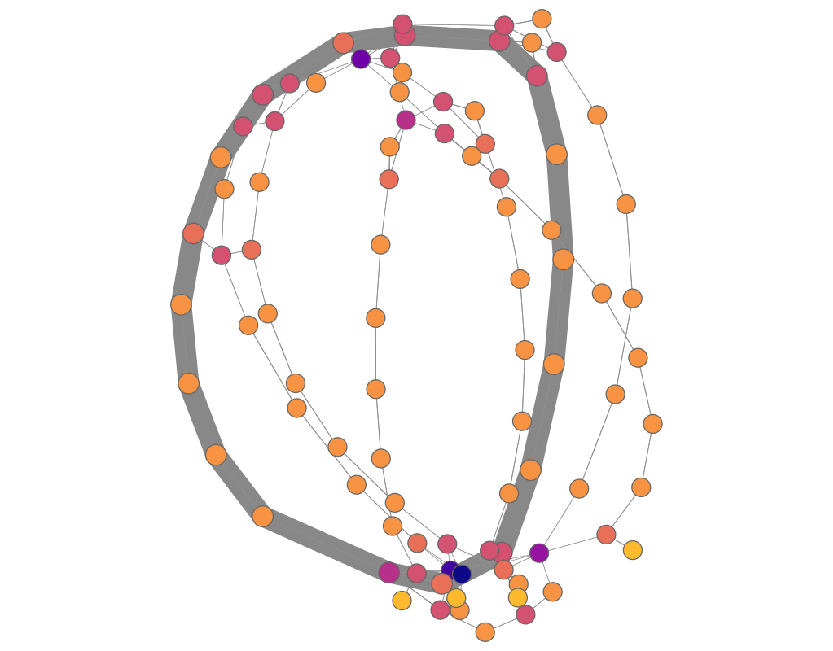}[image label]\node{$Q_{LCMC}$: 0.499};\end{tikzonimage}
\hspace{-10pt}
\begin{tikzonimage}[width=0.25\linewidth]{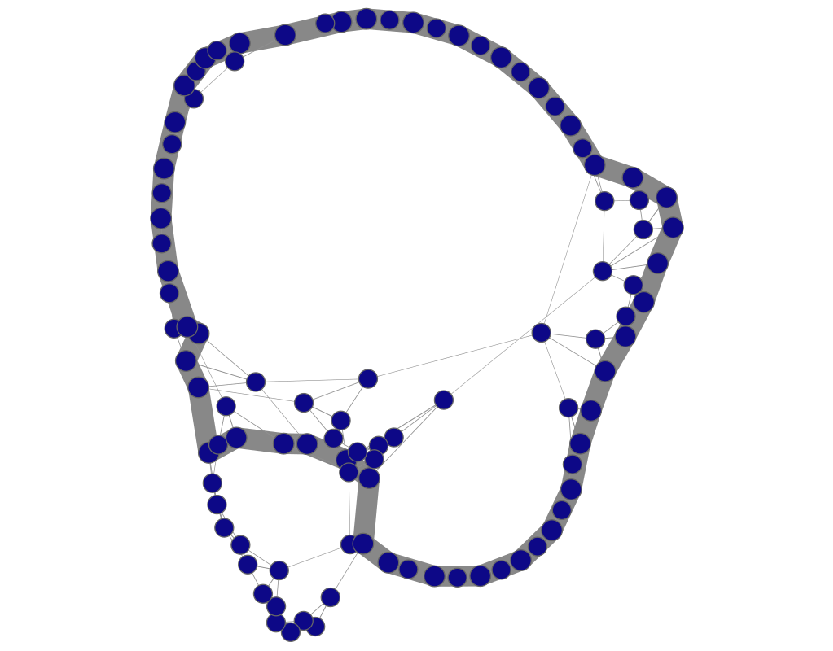}[image label]\node{$Q_{LCMC}$: 0.689};\end{tikzonimage}
\hspace{-7pt}

\vspace{-10pt}
\resizebox{\linewidth}{!}{\hspace{30pt} 
    \subfloat[\textsc{bio-diseasome}]{\hspace{75pt}}
    \hspace{12pt} 
    \subfloat[\textsc{circular ladder}\label{fig:h1_exp:circ}]{\hspace{75pt}}
    \hspace{12pt} 
    \subfloat[\textsc{engymes-g123}]{\hspace{75pt}}
    \hspace{8pt} 
    \subfloat[\textsc{watts strogatz}\label{fig:h1_exp:watts}]{\hspace{75pt}
    \hspace{15pt}}
}

    \caption{Additional examples of graphs using a random initialization, random initialization + $\Hgroup_0$ forces, or our approach, random initialization + $\Hgroup_1$ forces.}
    \label{fig:h1_exp-2}
\end{figure}

\subsection{Interactive Untangling with \texorpdfstring{$\Hgroup_1$}{H1} Features}
\label{sec.eval.h1}

We evaluate whether the interaction with $\Hgroup_1$ features reveals anything about the graph structure previously available using a random initialization force-directed layout or by using the $\Hgroup_0$ forces introduced in~\cite{suh2019persistent}.

\para{Generating \texorpdfstring{$\Hgroup_0$}{H0} Examples} To compare to $\Hgroup_0$ persistent homology feature forces, we start with the random initial layout (i.e., the D3.js default) and allow the graph to converge to a stable configuration. In other words, they look like the random final graph layouts in the upper middle of \autoref{fig.dense} and \ref{fig.sparse}. We then apply a force to \textit{all} $\Hgroup_0$ features and again allow the graph to converge.

\para{Generating \texorpdfstring{$\Hgroup_1$}{H1} Examples} To determine the efficacy of $\Hgroup_1$ persistent homology feature forces, we configure them similarly. Starting with the random initial layout, we allow the graph to converge to a stable configuration. We then apply a force to a single $\Hgroup_1$ feature, which is selected by considering $\Hgroup_1$ features with longer cycle lengths, and again the graph is allowed to converge. 

\para{Results}
The results are visible primarily in \autoref{fig:h1_exp-1} and \ref{fig:h1_exp-2}, and additionally in \autoref{fig:teaser2}. First, we can see that our approach reveals cycle structures that are often hidden in standard graph layout and only sometimes revealed using $\Hgroup_0$ features. In other words, our approach reveals topology of the graph that is otherwise hidden or difficult to see. 
The second observation we make is that whereas $\Hgroup_0$ features tend to improve the overall presentation of the graph, i.e., higher $Q_{LCMC}$ scores, our approach of applying forces to $\Hgroup_1$ features has a much stronger impact, resulting in even better graph layouts, sometimes dramatically so, e.g., in \textsc{circular ladder} (\autoref{fig:h1_exp:circ}) or \textsc{watts strogatz} (\autoref{fig:h1_exp:watts}) graphs.

An important aspect of interacting with $\Hgroup_1$ features is that highlighting different features may lead to very different graph layouts, e.g., see \autoref{fig:h1_exp:retweet}. Taken in isolation, it may be difficult to visually relate the structures of each. However, when interacting, animation provides the context for relating different structures. Our demo\footnote{Demo at \demoURL.} includes the ability to interactively evaluate additional cycles from each dataset.

Given these observations, we see that our approach has the ability to capture and highlight important cycle structures, and it can use those structures to further untangle a force-directed layout.

\section{Discussion and Conclusions}

In this paper, we have evaluated two new uses of persistent homology on force-directed layouts. We first investigate using $\Hgroup_0$ persistent homology for initializing graph layouts. Although the implementation itself relies on maximal spanning trees, persistent homology provides a theoretical foundation for justifying its use. At the same time, our experimental results show that it  indeed improves the convergence rate and quality of force-directed layouts. 

Second, we investigate using $\Hgroup_1$ features for highlighting and modifying the forces of a force-directed layout. Here again, in addition to the algorithmic contribution of efficiently extracting the $\Hgroup_1$ features, we observe that using these features reveals hidden features and improves graph layouts in many situations.

Beyond our current work, there is potentially room for developing additional initial layout schemes or perhaps automatically identifying which scheme would work best for a given dataset. Still, a good balance between performance and final quality remains of the utmost importance. In addition, our scheme for utilizing $\Hgroup_1$ features could be utilized in a more elaborate manner. Additionally, it would be interesting to study whether other simplicial complexes could be used with persistent homology to capture topological information about other graph structures, e.g., cliques, stars and trees. Finally, our approach is implemented using the D3.js force-directed layout, but we believe the approach would work with other state-of-the-art techniques and frameworks. However, the exact implementation and the ultimate performance and quality gain require additional study.

\vspace{5pt}
\noindent
Demo: \demoURL \\
Source: \sourceURL

\acknowledgments{The authors wish to thank Ashley Suh, Mustafa Hajij, and Carlos Scheidegger for initial discussions. 
This work was supported in part by
grants from the National Science Foundation (NSF) IIS-1513616 and IIS-1845204, and Department of Energy (DOE) DE-SC0021015. 
}

\bibliographystyle{abbrv}
\bibliography{main}

\end{document}